
\documentclass[pdflatex,sn-mathphys]{sn-jnl}
\usepackage{xspace}
\usepackage{url}
\usepackage{longtable}

\usepackage{amssymb}
\usepackage{complexity}
\usepackage{caption}
\usepackage{tikz}
\usepackage{program}
\usepackage{tkz-graph}
\usepackage{tabularx}

\jyear{2023}%

\theoremstyle{thmstyleone}%
\newtheorem{theorem}{Theorem}
\newtheorem{proposition}[theorem]{Proposition}%

\theoremstyle{thmstyletwo}%
\newtheorem{example}{Example}%

\theoremstyle{thmstylethree}%
\newtheorem{definition}{Definition}%

\raggedbottom

\begin{document}

\title[ ]{Deterministic Scheduling of Periodic Messages for Low Latency in Cloud RAN}


\author[1]{\fnm{Dominique} \sur{Barth}}\email{dominique.barth@uvsq.fr}

\author*[2]{\fnm{Maël} \sur{Guiraud}}\email{mguiraud@cesi.fr}

\author[1]{\fnm{Yann} \sur{Strozecki}}\email{yann.strozecki@uvsq.fr}

\affil[1]{\orgdiv{DAVID laboratory}, \orgname{Université de Versailles Saint-Quentin}, \orgaddress{\street{45 Av. des Etats Unis}, \city{Versailles}, \postcode{78000}, \country{France}}}

\affil[2]{\orgdiv{LINEACT}, \orgname{CESI}, \orgaddress{\street{93 Bd de la Seine}, \city{Nanterre}, \postcode{92000},  \country{France}}}

\newtheorem{lemma}[theorem]{Lemma}

\newcommand{\Fo}{\textsf{FO}} 

\newcommand{\Fmo}{\textsf{FMO}}
\newcommand\pma{\textsc{pma}\xspace}
\newcommand\firstfit{\texttt{First Fit}\xspace}
\newcommand\compactpair{\texttt{Compact Pairs}\xspace}
\newcommand\metaoffset{\texttt{Meta Offset}\xspace}
\newcommand\greedyuniform{\texttt{Greedy Uniform}\xspace}
\newcommand\swapandmove{\texttt{Swap and Move}\xspace}
\newcommand\compactfit{\texttt{Compact Fit}\xspace}
\newcommand\greedypotential{\texttt{Greedy Potential}\xspace}
\newcommand{\todo}[1]{{\color{red} TODO: {#1}}}

\newcommand\shortestlongest{\texttt{ShortestLongest}\xspace}

\newcommand\ESCA{\texttt{ESCA}\xspace}
\newcommand\greedydeadline{\texttt{GreedyDeadline}\xspace}
\newcommand\MLS{\texttt{MLS}\xspace}
\newcommand\PMLS{\texttt{PMLS}\xspace}
\newcommand\SPMLS{\texttt{SPMLS}\xspace}
\newcommand\ASPMLS{\texttt{ASPMLS}\xspace}

\newcommand\FIFO{\texttt{FIFO}\xspace}
\newcommand\framepre{\texttt{FramePreemption}\xspace}
\newcommand\critdead{\texttt{CriticalDeadline}\xspace}

\newcommand\pazl{\textsc{pazl}\xspace}
\newcommand\pall{\textsc{pall}\xspace}
\newcommand\wta{\textsc{wta}\xspace}
\newcommand\pra{\textsc{pra}\xspace}
\newcommand\minpazl{\textsc{minpazl}\xspace}
\newcommand\mintra{\textsc{mintra}\xspace}

\abstract{Cloud-RAN (C-RAN) is a cellular network architecture where processing units, previously attached to antennas, are centralized in data centers. The main challenge in meeting protocol time constraints is minimizing the latency of periodic messages exchanged between antennas and processing units. We demonstrate that statistical multiplexing introduces significant logical latency due to buffering at network nodes to prevent collisions. To address this, we propose a \emph{deterministic} scheme for periodic message transmission \emph{without collisions}, eliminating latency caused by buffering.

We develop several algorithms to compute such schemes for star-routed networks, a common topology where all antennas share a single link. First, we show that deterministic transmission is possible without buffering when routes are short or network load is low. Under high load, we allow buffering at processing units and introduce the \texttt{Periodic Minimal Latency Scheduling} (\PMLS) algorithm, adapted from classical scheduling methods. Experimental results indicate that even at full load, \PMLS finds deterministic transmission schemes with negligible logical latency, whereas statistical multiplexing incurs substantial delays. Moreover, \PMLS runs in polynomial time and scales efficiently to hundreds of antennas. Building on this approach, we also derive low-latency periodic transmission schemes that coexist with additional random network traffic. This article extends previous work presented at ICT~\cite{Guir1806:Deterministic}.}

\keywords{Deterministic networking, Time-sensitive networking, Periodic scheduling, Low latency, Zero jitters, C-RAN}

\maketitle

\section{Introduction}

One of the key aspects of 5G+ and 6G is to reduce the End-to-End delay in telecom networks~\cite{dahlman20185g,saad2019vision}. This goal is driven by the need for Ultra-Reliable Low Latency~\cite{ali2021urllc} for several use cases such as automation of industry, telesurgery, intelligent transportation, augmented/virtual reality, and many others~\cite{chen2018ultra,nguyen20216g}. In this article, we focus on low latency for Cloud Radio Access Network or C-RAN. C-RAN has been proposed for 5G+ and 6G~\cite{niknam2020intelligent,larsen2022deployment} as a next-generation mobile network architecture to reduce energy consumption~\cite{gavrilovska2020cloud,mobile2011c,checko2014cloud} and more generally the total cost of ownership. C-RAN is a centralized architecture: each antenna has a Remote Radio Head (RRH) that sends the signal to a Baseband Unit (BBU) in a data center\footnote{Other terminologies exist in the literature. The results of this work are fully compatible with any variation of the C-RAN architecture.}. 
A primary challenge in C-RAN is achieving latency compatible with transport protocols~\cite{ieeep802}. Latency is measured from the moment an RRH sends a message to the reception of the response, computed by real-time virtualized network functions on a BBU. For example, LTE standards require one to process functions like HARQ (Hybrid Automatic Repeat reQuest) in $3$~ms~\cite{bouguen2012lte}. In 5G, some services need end-to-end latency as low as $1$~ms~\cite{dogra2020survey,3gpp5g,boccardi2014five}. Beyond latency constraints, C-RAN also imposes periodic data transfer in the \emph{fronthaul network} between RRHs and BBUs, requiring frames to be transmitted every millisecond in line with 5G+ and 6G specifications~\cite{bouguen2012lte,romano2019imt}. We aim to operate a C-RAN on a low-cost shared switched packet network.
This work investigates whether periodic messages can be scheduled to avoid collisions in a low-cost shared switched packet network. Eliminating queuing delays frees time for latency caused by physical transmission, enabling wider deployment areas.

 We assume that network components can collect information, transmit it to a centralized entity, classify traffic flows, and forward each flow at a precise time dictated by the controller. Time-Sensitive Networking (TSN) standards, such as IEEE 802.1Qat Stream Reservation Protocol~\cite{ieee802qat}, enhanced by IEEE 802.1Qcc~\cite{6755436}, provide technical solutions to support these assumptions, aligning with the Software-Defined Networking (SDN) paradigm~\cite{mohamed2021software,li2015software,7356556}.

We also assume that all RRHs share the same clock and period but are not synchronized for transmissions, meaning each RRH can emit its frame at a different moment within the period. While current cellular networks enforce RRH synchronization~\cite{omri2019synchronization}, this can be delegated to specialized mechanisms~\cite{khalili2016uplink,yemini2016multiple}. In particular, Timing Advance technology, already deployed to ensure synchronized reception despite varying path lengths, can be leveraged for this purpose~\cite{mahmood2019time}.

Our model suggests a design for future cellular networks where RRH transmissions are unsynchronized to improve latency. Furthermore, the results presented in this paper are directly applicable to remote control in Industry 4.0~\cite{peng2021latency,garcia2019latency}, where applications such as real-time industrial automation and extended reality require minimal jitter and near-instantaneous responsiveness~\cite{nikhileswar2022traffic,john2024industry}.

We model the network topology as a directed weighted multigraph composed of directed paths (routes). Each path represents the transmission of a message from an RRH to a BBU and the reception of the response by the RRH. Time is discretized into units called \emph{tics}, where one tic corresponds to the transmission time of a minimal data unit. To achieve optimal latency, we aim to eliminate buffering at internal network nodes (switches). This is feasible due to the deterministic nature of C-RAN messages, or datagrams, whose arrival times at RRHs are known in advance. Following LTE standards~\cite{bouguen2012lte}, we assume all datagram arrivals are periodic within the same period.  Our goal is to design a \emph{periodic} transmission process that avoids collisions. A process is periodic with period $P$ if the network state at times $t$ and $t+P$ is identical.

Each datagram follows a fixed route, and no buffering is allowed within the network. Thus, constructing a periodic sending process, or \emph{assignment}, involves selecting two key parameters: the \emph{offset}, i.e., the time within the period when each RRH sends its datagram, and the \emph{waiting time} at the BBU before the response is transmitted back to the RRH. The periodic nature of the process imposes constraints that make many traditional scheduling methods inapplicable: datagrams must not collide with others in the same or different periods. Minimizing latency—the time from datagram emission to response reception—requires minimizing the only permitted buffering: the waiting time at the BBU, especially for long routes.

We define the problem \pall: given a routed network, a period, message size, and a deadline for each datagram, compute an assignment that meets all deadlines. This article focuses on minimizing the worst-case datagram latency, effectively solving \pall with the tightest possible deadlines.

We also consider \pazl, a variant of \pall where waiting time at BBUs is strictly zero. A solution to \pazl introduces no logical latency and eliminates the need for buffering at BBUs. However, under high network load, \pazl often has no feasible solution. Beyond network load, the number of RRHs is a key parameter influencing the complexity of solving \pall and \pazl.

This article focuses on star-routed networks, a common C-RAN fronthaul topology where all Remote Radio Heads (RRHs) share a central link to a Baseband Unit (BBU)~\cite{electronics9122131,bhattacharjee2020time}. This link mutualization provides a cost-effective alternative to dedicated optical transport. Our algorithms scale to hundreds of RRHs, far exceeding typical star-routed C-RAN deployments, which aggregate only a few dozen RRHs per site. Thus, our approach comfortably meets the scalability demands of latency-critical applications.

To enable deterministic scheduling on star-routed networks, we developed a Hard-TSN switch~\cite{Marc2201:Experimental}, leveraging the absence of buffering to achieve zero jitter. Our prototype demonstrates the feasibility of offset-based scheduling with near-nanosecond gate precision on Ethernet hardware. Several related technologies have been patented~\cite{howe2005time,leclerc2016transmission}.

For more complex topologies and scenarios with synchronized RRHs (studied in~\cite{guiraud2021deterministic}), intermediate buffering becomes necessary, inevitably increasing latency. However, our techniques remain adaptable to ensure deterministic guarantees while also reducing best-effort traffic latency.

 \subsection*{Related works}

Statistical multiplexing is the most common mechanism used to manage packet-based networks. While there are tools~\cite{metricsietf} to ensure a latency lower than a given value for $95\%$ of the packets, such a guarantee is not sufficient when all packets must satisfy latency constraints. Mechanisms like Express Forwarding~\cite{exprforw} can be used to prioritize some packets over others, but they also fail to guarantee the delivery of a given packet in a given time delay when several packets compete for the same resource. 

The current solution to avoid statistical multiplexing in C-RAN consists in using a full optical approach, where each end-point (RRH on one side, BBU on the other side) is connected through direct fiber or full optical switches~\cite{kai2020amplify,tayq2017real}. This architecture is very expensive and not very scalable for a mobile network. As an illustration, a single (one-operator) mobile network in France is composed of about $10,000$ base stations. This number will increase by a factor of $2$ to $20$ with the emergence of “small cells” which increase base station density to reach higher throughput \cite{dahlman20185g,romano2019imt}. This underlines the need to find a low-cost solution to offer low latency over commoditized packet-based networks. 
An alternative approach consists in using optical rings for both fronthaul and best-effort flows~\cite{DBLP:conf/ondm/BarthGS19,rommel2020towards,luu2021dynamic}. This relies on over dimensioning the network and using optical multiplexing technologies to avoid contention, and it requires building new expensive optical networks.

Although 3GPP standards for 6G are not completely frozen yet, the core network is designed to use switched packet networks on Ethernet technology~\cite{ieee1914,gomes2015fronthaul}
contrary to solutions working over dedicated optical networks. The latency induced by contention buffers is one of the major manageable sources of delay in switched packet networks. In this article, we propose to deterministically manage datagrams to deal with contention. This is similar to the concept of Deterministic Networking: the DetNet working group from IETF~\cite{finn-detnet-architecture-08} collaborates with TSN (Time-Sensitive Networking)~\cite{ieee802}, a task group of IEEE, to develop technical solutions for deterministic networking~\cite{bhattacharjee2020time,8613095,durr2016no}. However, these works deal with stochastic flows of traffic while we manage \emph{deterministic and periodic} flows. Hence, the guarantee given by these works is an upper bound on the random latency, while we aim to guarantee the minimal latency by taking advantage of the deterministic nature of our flows. 

In this article, we try to solve the problems \pall and \pazl. They may look like wormhole problems~\cite{ni1993survey,cole1996benefit}, but they require minimizing the time lost in buffers and not only avoiding deadlocks. On a technical aspect, wormhole switches~\cite{cole1996benefit} are designed to read only the header of the messages before forwarding it instead of buffering the entire messages as in store-and-forward~\cite{tindell1992store}. This method has a huge impact on the latency, particularly on long messages, and we go further by trying to remove all buffering in the switches.

 Several graph colorings have been introduced to model problems similar to \pazl and \pall such as the allocation of frequencies~\cite{borndorfer1998frequency}, bandwidths~\cite{erlebach2001complexity}, or routes~\cite{cole1996benefit} in a network. Unfortunately, they do not take into account the periodicity of the scheduling we want to build, even though the associated problems are already $\NP$-complete. The only coloring taking periodicity into account is the circular coloring~\cite{ZHU2001371,zhou2013multiple} but it is not expressive enough to capture our problems. 

The problem \pall on a star routed network is very close to a two flow-shop scheduling problem~\cite{yu2004minimizing} with the additional constraint of periodicity. To our knowledge, all studied periodic scheduling problems are different from \pall. Either the aim is to minimize the number of processors on which the periodic tasks are scheduled~\cite{korst1991periodic,hanen1993cyclic} while our problem corresponds to a single processor and a constraint similar to makespan minimization. Or, in cyclic scheduling~\cite{levner2010complexity}, the aim is to minimize the period of a scheduling to maximize the throughput, while our period is fixed. 

The train timetabling problem~\cite{lusby2011railway} and its restriction, the periodic event scheduling problem~\cite{serafini1989mathematical} are generalizations of our problem. Indeed, they take the period as input and can express the fact that two trains (like two datagrams) should not cross. However, they are much more general: the trains can vary in size and speed, the network can be more complex than a single track and there are precedence constraints. Hence, the numerous variants of train scheduling problems are very hard to solve (and always $\NP$-hard). Thus, delays and speed variation are allowed to make the problem solvable and most of the research done~\cite{lusby2011railway} is devising practical algorithms using branch and bound, mixed integer programming, genetic algorithms\dots

Variations on the problem of scheduling periodic messages for time-sensitive star-shaped networks have been recently investigated in~\cite{9472838,nayak2017incremental,steiner2018traffic,silviu2017,naresh2016}. In these papers, the authors study the practical problem of scheduling a few numbers of datagram flows in a star-shaped network. To do so, linear integer programming is used to compute a schedule of the flows. In the same spirit, the use of an SMT solver rather than linear integer programming is proposed in~\cite{dos2019tsnsched}. The flows described in these papers are somewhat different from ours. While we consider that a single long datagram is sent by a source every period, they consider flows with several little datagrams. In this method, the scheduling is computed over several periods up to some time horizon, while we compute scheduling on a single period, which can be repeated periodically. Furthermore, the experiments are made on small topologies because integer linear programming does not scale with the number of routes~\cite{masoudi2020cost}, while we propose polynomial-time algorithms that give a satisfying solution. However, high complexity and exact algorithm, as explained in~\cite{steiner2018traffic}, can be used as a standardization tool to verify the viability of solutions computed by faster algorithms.

\subsection*{Contributions}

We propose in this paper a method to manage deterministic and periodic flows of C-RAN, which guarantees \textbf{minimal latency and zero jitters}. We introduce several algorithmic problems to capture the search for deterministic and periodic sending schemes with low latency, namely \pazl, \pall, and \wta and we prove that they are \NP-complete on simple topologies of fronthaul networks. We focus on star-routed networks, which represent a fronthaul network where all RRHs share a common link to the data center to lower the cost. 

We give exact algorithms to solve \pazl, \pall, and \wta, with exponential complexity in the number of RRHs in the networks
but not in the other larger parameters of the network. They solve our problems for fewer than twenty RRHs and allow us to validate the other proposed algorithms. 
We show that the load of the star-routed network is a fundamental parameter: when it is low, there is always a solution to the problem
\pazl, that is, without buffering. When it is higher, a buffering in the BBU is necessary and we propose many heuristics to solve the problem \pall. 
Through experiments on many star-routed networks, we show that \PMLS is the heuristic with the best quality, while its complexity is only quadratic.

We propose a method to adapt our periodic sending schemes to the presence of additional stochastic traffic. We show that using stochastic multiplexing for C-RAN does not 
guarantee a good latency, while our solution adds no logical latency to the C-RAN traffic and it even improves the latency of the stochastic traffic.

\subsection*{Outline}
 In Section~\ref{sec:def}, we propose a model of the fronthaul network and the periodic sending of datagrams along its routes. Then, we introduce the problem \pall to formalize the problem of finding a periodic sending of the messages in a network without collision at the contention points. We also present the problem \pazl, a restriction of \pall where no waiting time in the BBU is allowed. We present a simple but very common topology, the star-routed network, with a single shared duplex link, that is studied in the rest of the article.  In Section~\ref{sec:complexity}, we prove that both \pazl and \pall are $\NP$-hard for very restricted classes of graphs and that their optimization counterparts are hard to approximate. 
 In Section~\ref{sec:pazl}, we study the problem \pazl, and several algorithms are proposed: they are polynomial-time for small loads or small routes, or exponential time in the number of routes, based on a compact representation of optimal solutions. We use these algorithms to provide experimental evidence that \pazl can be solved positively when the network is mildly loaded. In Section~\ref{sec:PALL}, we propose polynomial-time heuristics and an exact FPT algorithm for the general \pall problem and experimentally show that they work well, even in extremely loaded networks. 
Finally, in Section~\ref{sec:comparison}, we compare our deterministic approach to stochastic multiplexing, with several buffering policies and with additional random traffic in the network.

\section{Modeling of the Problem}\label{sec:def}

Let $[n]$ denote the interval of $n$ integers $\{0,\dots,n-1\}$.

	\subsection{Routes and Contention Points}

  	We study a communication network comprised of pairs of vertices between which messages are sent periodically. The routing between each pair of such nodes is represented by a \textbf{route}, a sequence of vertices $r=(s, c_1, \ldots, c_l, t)$, such that no vertex appears twice in the sequence (there is no loop in a route). Each vertex $c_i$ represents a contention point, which is the beginning of a link of the communication network shared by several routes. Hence, all vertices of $r$ appear in several routes, except $s$, the first vertex of $r$, and $t$, its last vertex, which are exclusive to $r$ and represent the source and the target of the message. When modeling a C-RAN network, the first vertex represents the sending of the message by the RRH and the last vertex represents the same RRH that receives the answer, sent back by the BBU (see Figure~\ref{fig:routeexample}).

  	The set of routes is denoted by ${\cal R}$. A route is interpreted as a directed path in a directed multigraph constituted of all routes, where the sets of arcs of the routes are disjoint. The routes contain no loop nor cycle since all vertices of a route are different. Thus, the directed multigraph is acyclic. An arc in the multigraph may represent several physical links or nodes of the modeled network if they do not induce contention points.

  	Each arc $(u,v)$ of a route $r$ is labeled by an integer weight $\omega(r,u)$. It represents the time elapsed between the sending of the message of the route $r$ in $u$ and its reception in $v$.

\begin{figure}
\centering

\scalebox{0.5}{

\begin{tikzpicture}
  \SetGraphUnit{5}
    \tikzset{
  EdgeStyle/.append style = {->} }
   \tikzstyle{VertexStyle}=[shape = circle, draw, minimum size = 30pt]

  \node (s2) at (0,2) {\includegraphics[width = 1cm]{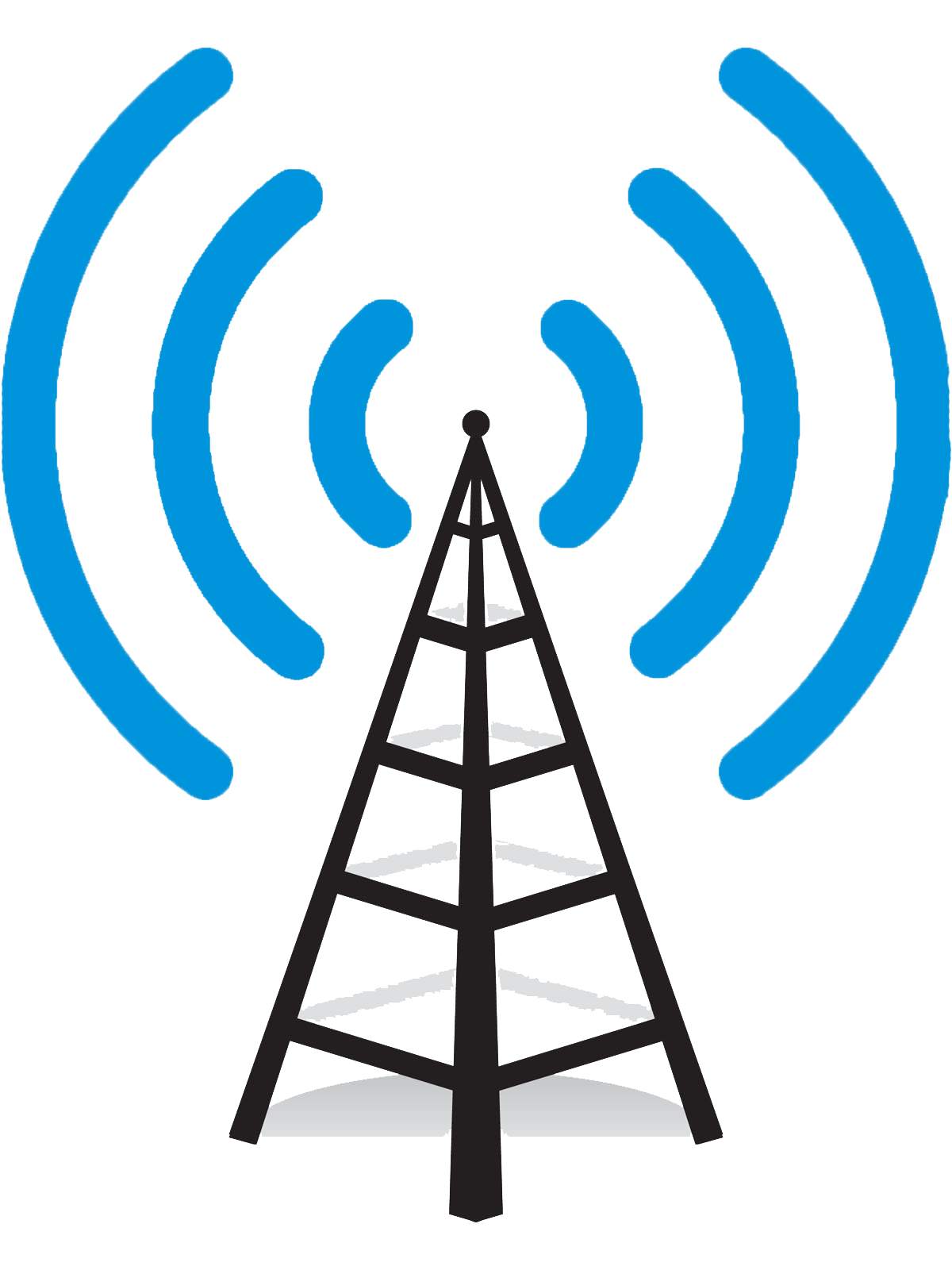}};
  \node[below] at (s2.south) {\huge $s$};

  \node (t2) at (12,2) {\includegraphics[width = 1cm]{rrh.png}};
  \node[below] at (t2.south) {\huge $t$};
    \node (c1) at (4,2) {\includegraphics[width = 1cm]{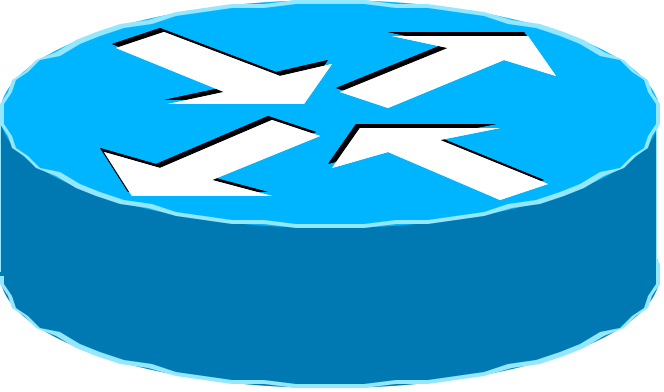}};
 \node[below] at (c1.south) {\huge $c_1$};
    \node (c2) at (8,2) {\includegraphics[width = 1cm]{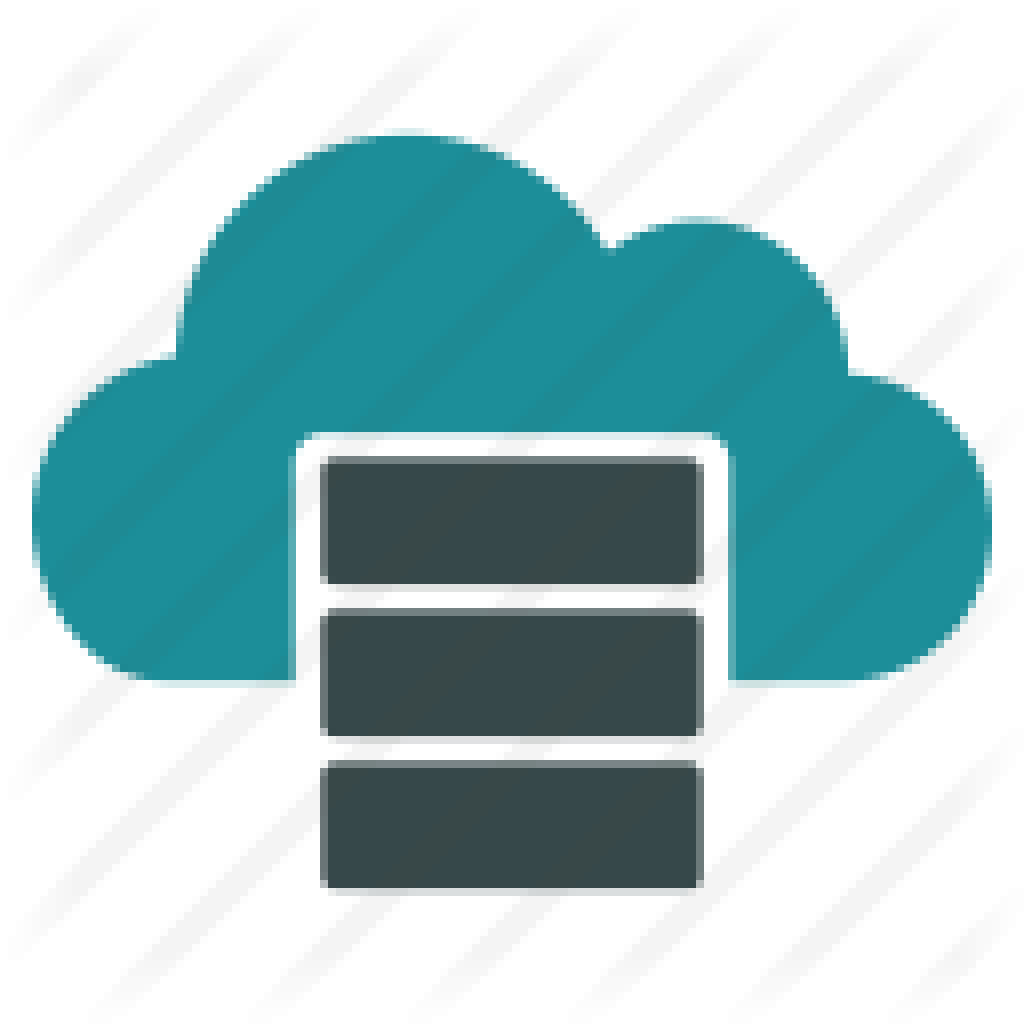}};
  \node[below] at (c2.south) {\huge $c_2$};

\path (s2) edge [->] node[anchor=south,inner sep = 0.2cm]{$3$} (c1);
\path (c1) edge [->] node[anchor=south,inner sep = 0.2cm]{$7$} (c2);
\path (c2) edge [->] node[anchor=south,inner sep = 0.2cm]{$4$} (t2);

\end{tikzpicture}
}

\caption{A route with two contention points and the weights of each arc. Vertex $c_2$ represents the BBU where messages may be buffered.}
\label{fig:routeexample}
\end{figure}

    The {\bf weight of a vertex} $u_i$ in a route $r=(u_0,\dots,u_l)$ is defined by $\lambda(r,u_i)= \sum\limits_{0 \leq j <i} \omega(r,u_j)$. It is the time needed by a message to go from the first vertex of the route to $u_i$. The \textbf{length} of the route $r$ is defined by $\lambda(r)= \lambda(r,u_l)$. 

  	On each route, we choose to buffer the message only in the BBU. Since the BBU does not correspond to a contention point, we identify the BBU with the next contention point in the route. The set of these contention points with possible buffering is denoted by ${\cal B}$. Hence, in this article, each route has exactly one vertex in $\cal{B}$. We can now define the \textbf{routed network}, which models the telecommunication network topology in this article; see Figure~\ref{fig:graphmodel} for an example.

  	\begin{definition}
    A \textbf{routed network} is a triple $(\cal{R},\,\cal{B},\,\omega)$ where $\cal{R}$ is a set of routes, $\cal{B}$ a set of vertices and $\omega$ a weight function on the routes of $\cal{R}$. 
    \end{definition}

\begin{figure}
\centering

	\includegraphics[scale=0.7]{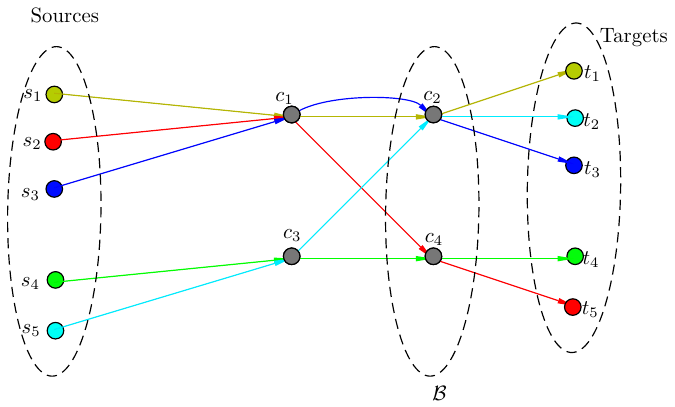}\\

\caption{A routed network, each route is represented by a colored path. The BBUs are located in $c_2$ and $c_4$. Weights on the arcs are omitted.}
\label{fig:graphmodel}
\end{figure}

 	\subsection{Dynamic of Datagrams Transmissions}
	    
 		In this article, we consider discretized time. The unit of time is called a {\bf tic}. This is the time needed to send atomic data in a link of the network. We assume that the link speed is the same throughout the network.  We are developing a prototype of this work based on Ethernet base-X~\cite{ieee_8023}, using standard values for the parameters of the network: the size of an atomic data is $64$ Bytes, the speed of the links is $10$ Gbps. Hence the duration of a tic is about $5.1$ nanoseconds. 

        In the process we study, a message called a {\bf datagram}, is sent on each route from each source node of a route. The \textbf{size} of a datagram is an integer, denoted by $\tau$: it is the number of tics needed by a node to emit the full datagram in a link. In this paper, we assume that $\tau$ is the same for all routes. It is justified by our application to C-RAN, where all source nodes are RRHs sending the same type of datagram. There is no fragmentation: Once a datagram has been emitted, it cannot be fragmented during its travel in the network. 

        To avoid contention, it is possible to choose the emission time of the datagrams and also to buffer datagrams in contention points in $\cal{B}$.
        These choices are represented by an \textbf{assignment}, defined as follows.

         \begin{definition}
         An assignment $A$ of a routed network $(\cal{R},\,\cal{B},\,\omega)$ is a function that associates to each route $r \in \cal{R}$, the pair of integers $A(r) = (o_r,w_r)$.
         \end{definition}
        The value $o_r$ is the \textbf{offset} of the route $r$ in the assignment $A$, the time at which the datagram is sent from the first vertex of $r$.
         The value $w_r$ is the \textbf{waiting time} of the route $r$ in the assignment $A$: the datagram is buffered for $w_r$ tics in $u_j \in {\cal B} \cap r$, the vertex representing the BBU.
 		The \textbf{arrival time} of a datagram in the vertex $u_i$ of $r$, is the first time at which the datagram sent on $r$ reaches $u_i$, and is defined by $t(r,u_i) = \lambda(r,u_i) + o_r $ if 
 		$i \leq j$ and $t(r,u_i) = \lambda(r,u_i) + o_r + w_r$ otherwise.

        \begin{example}
        Consider the route $r$ of Figure~\ref{fig:routeexample} and an assignment such that $o_r=2$ and $w_r = 6$. The arrival times of a datagram are the following: $t(r,c_1) =  \lambda(r,c_1) + o_r =  3 + 2 = 5$, $t(r,c_2) = \lambda(r,c_2) + o_r = 12$ and  $t(r,t) = \lambda(r,t) + o_r + w_r  = 22$.
        \end{example}

 		 Let $u_l$ be the last vertex of the route $r$; the \textbf{transmission time} of the datagram on 
  		$r$ is denoted by $TR(r,A)$ and is equal to $\lambda(r) + w_r$ (also $t(r,u_l) - o_r$). This is the total time taken by the process we study: the sending of the datagram from the RRH to the BBU and the return of the answer back to the RRH. We can decompose this time into $\lambda(r)$, the \emph{physical latency} of the process and $w_r$, the \emph{logical latency}. 
  		We define the \textbf{transmission time} of an assignment $A$ as the worst transmission time of a route: $TR(A) = \displaystyle \max\limits_{r \in {\cal R}} TR(r,A)$. 
        Figure~\ref{fig:datagramtimeline} represents the different events happening during the lifetime of a datagram sent on a route $r$.
  		\begin{figure}
  		 \begin{center}
      \includegraphics[width=\textwidth]{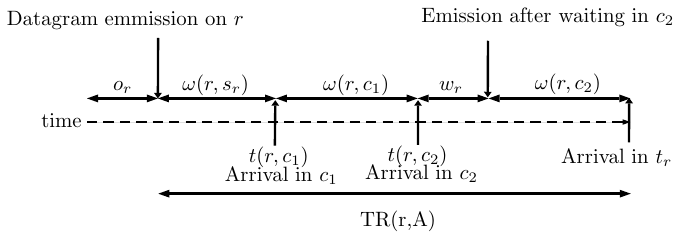}
      \end{center}
      \caption{Timeline of a datagram during its travel on a route $r = (s_r,c_1,c_2,t_r)$, with $c_2 \in \cal{B}$ and $A(r) = (o_r,w_r)$}
      \label{fig:datagramtimeline}
  		\end{figure}

 	\subsection{Periodic Emission of Datagrams}

	In the previous section, we have explained how \emph{one} datagram follows its route.
	However, the sources of datagrams we model in this article are \emph{periodic}: for each period of $P$ tics, a datagram is sent, from each source node in the network, at its offset. The process is assumed to be infinite since it must work for an arbitrary number of periods. In this article, for a given route, we use the same offset and waiting time in all periods. This choice makes our problem more tractable from a theoretical perspective and allows for a much simpler implementation in real networks. As a consequence, at the same time of two different periods, all datagrams are at the same position in the network: the assignments built are themselves periodic of period $P$. Thus, it is sufficient to consider the behavior of datagrams on each node of the network during a single period and to apply the same pattern to each subsequent period. 

 	Using a different offset for each route corresponds to sending their datagram at a different time in the period. 
This is consistent with our assumption that RRH emissions do not need to be synchronized, but they share a common global clock, useful for coordinating their emissions.

 	Let $A$ be an assignment of a routed network $(\cal{R},\,\cal{B},\,\omega)$.
    Let us denote by $[r,u]_{P,\tau}$, the set of tics used by a datagram on the route $r$ at vertex $u$ in a period $P$, that is $[r,u]_{P,\tau} = \{t(r,u) + i \mod P \mid 0 \leq i < \tau \}$. This set of tics depends on $A$, but $A$ is omitted in the notation since it is always clear from the context. We can now define the constraints that an assignment must respect to represent a \textbf{valid}
    sending process, for a given period $P$ and size $\tau$.

    \begin{definition}
    Let $r_1$ and $r_2$ be two routes of a routed network $N$ and let $A$ be an assignment of $N$. The routes $r_1$ and $r_2$ have a collision at the contention point $u$ for the assignment $A$ if and only if $[r_1,u]_{P,\tau} \cap [r_2,u]_{P,\tau} \neq \emptyset$.
    \end{definition}
    
\begin{example}
Consider $P=6$, $\tau = 3$, and two routes $r_1$ and $r_2$ with a common contention point $c$. Take $o_1 = o_2 = 0$, $\lambda(r_1,c) = 5$ and $\lambda(r_2,c) = 1$ (this example is illustrated by Figure~\ref{fig:cols}). Let the tics of a period be numbered from $0$ to $P-1$. Thus, we can compute $[r_1,c]_{P,\tau} =\{5,0,1\}$ and $[r_2,c]_{P,\tau} = \{1,2,3\}$. There is a collision because $[r_1,c]_{P,\tau}\cap[r_2,c]_{P,\tau}=\{1\} $. This collision can be avoided by taking $o_2=1$ in order to obtain $[r_2,c]_{P,\tau} = \{2,3,4\} $.
\end{example}
\begin{figure}
 
 \begin{center}
\scalebox{0.6}{

\begin{tikzpicture}
  \SetGraphUnit{5}
    \tikzset{
  EdgeStyle/.append style = {->} }
   \tikzstyle{VertexStyle}=[shape = circle, draw, minimum size = 30pt]
   \renewcommand{\VertexLightFillColor}{orange}
  \Vertex[x=0,y=3, L = {\huge $s_2$}]{a3};

  \Vertex[x=0,y=6, L = {\huge $s_1$}]{a1}

  \Vertex[x=4,y=3, L = {\huge $c$}]{c}

 \SetVertexNoLabel
\Vertex[x=7,y=3]{d}

      \Edge(c)(d)
  \tikzset{
  EdgeStyle/.append style = {blue} }
  \Edge[label = 5](a1)(c)

    \tikzset{
  EdgeStyle/.append style = {red} }
    \Edge[label = 1](a3)(c)

\end{tikzpicture}

}\\

\vspace{0.5cm}
\includegraphics[width=0.6\textwidth]{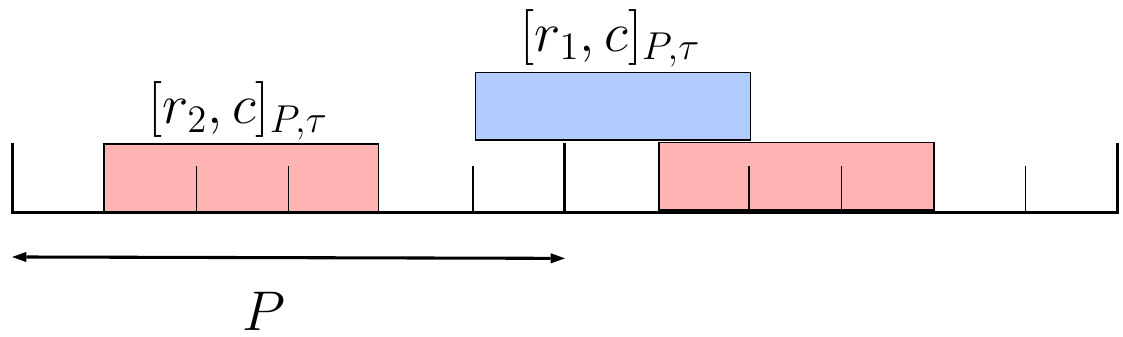}
\caption{A collision between two messages due to the periodicity when $P=6$,$\tau=3$ and $o_1 = o_2 = 0$.}
\label{fig:cols}
\end{center}

\end{figure}

    \begin{definition}
    An assignment $A$ is valid if, for all contention points $u$ and routes $r_1$ and $r_2$ containing $u$, $r_1$ and $r_2$ have no collision at $u$ for $A$. 
    \end{definition}

    The validity of an assignment depends on $P$ the period and $\tau$ the size of the datagrams; thus if we need to mention these parameters, we say that $A$ is a \textbf{valid $(P,\tau)$-assignment}. When $P$ and $\tau$ are clear from the context, we denote $[r,u]_{P,\tau}$ by $[r,u]$ and say that $A$ is a valid assignment. 

\begin{example}

    Figure~\ref{fig:example} illustrates two valid assignments for different values of $P$ and $\tau$, but the same network. The three routes are depicted in three different colors. If we let $P = 2$ and $\tau = 1$, then there is a $(2,1)$-periodic valid assignment with waiting times zero by taking $0$ as offset for each route. However, for the same routed network but $P=5$ and $\tau = 2$, there is no solution to the problem with all waiting times zero. If we allow $1$ tic of waiting time for one route, we can build the valid assignment $A'(r_1) = (0,0)$, $A'(r_2) = (2,1)$, $A'(r_3)= (0,0)$.

\end{example}
  
\begin{figure}[ht]
    \begin{center}
        \scalebox{0.7}{
		\begin{tikzpicture}
\tikzset{
  LabelStyle/.style = { rectangle, rounded corners, draw,
                       font = \bfseries },
  EdgeStyle/.append style = {->} }
  \SetGraphUnit{5}
  \node[draw,circle] (s3) at (4, 2) {\Large $s_2$}; 
  \node[draw,circle] (s2) at (0, 4) {\Large $s_1$}; 
  \node[draw,circle] (s1) at (0, 6) {\Large $s_0$}; 

  \node[draw,circle] (t3) at (14, 7) {\Large $t_2$}; 
  \node[draw,circle] (t2) at (14, 4) {\Large $t_1$}; 
  \node[draw,circle] (t1) at (14, 2) {\Large $t_0$}; 

   \node[circle] (buf) at (10, 6) {\Large $\cal{B}$};
  \SetVertexNoLabel
  \Vertex[x=2,y=5]{A}
    \Vertex[x=10,y=2]{B}
  \Vertex[x=10,y=5]{C}

  \Vertex[x=6,y=3]{E}
     \draw[dashed] (10,3.5) ellipse  (1cm and 2cm);
  \tikzset{
  EdgeStyle/.append style = {green} }
  \Edge[label = 2](s2)(A)

  \Edge[label = 1](A)(C)
 
  \Edge[label = 2](C)(t2)

   \tikzset{
  EdgeStyle/.append style = {red} }
  \Edge[label = 2](s3)(E)
  \Edge[label = 2](E)(C)
  \Edge[label = 3](C)(t3) 

     \tikzset{
  EdgeStyle/.append style = {blue} }
  \Edge[label = 1](s1)(A)
  \Edge[label = 2](A)(E)
  \Edge[label = 2](E)(B)
 \Edge[label = 1](B)(t1)

\end{tikzpicture}

}
  	\end{center}
    \caption{A routed network with $A(\textcolor{blue}{r_1})= \textcolor{blue}{(0,0)}$,
    $A(\textcolor{green}{r_2}) = \textcolor{green}{(0,0)}$, $A(\textcolor{red}{r_3}) = \textcolor{red}{(0,0)}$ as a $(2,1)$-periodic valid assignment and $A'(\textcolor{blue}{r_1})= \textcolor{blue}{(0,0)}$,
    $A'(\textcolor{green}{r_2}) = \textcolor{green}{(2,1)}$, $A'(\textcolor{red}{r_3}) = \textcolor{red}{(0,0)}$ as a $(5,2)$-periodic valid assignment}
    \label{fig:example}
\end{figure}

	\subsection{Valid Assignment with Low Latency}

      The period $P$, as well as the size of a datagram $\tau$, are fixed in our C-RAN settings, but not the buffering policy. Hence, this article aims to find a valid assignment that minimizes the worst latency of the transmissions over the network, that is $TR(A)$. We formally introduce the problem we want to solve in this article. 

      \bigskip

      \noindent \textbf{Minimal Transmission Time }(\mintra)

      \noindent {\bf Input:} A routed network $N$, integers $P$ and $\tau$.
      
      \noindent {\bf Question:} Find the minimum value $TR(A)$ for all $A$ valid $(P,\tau)$ assignments of $N$. 
      \bigskip

      For simpler hardness proofs and easier reductions, we rather study the decision version of \mintra, which we call \pall for \textbf{P}eriodic \textbf{A}ssignment for \textbf{L}ow \textbf{L}atency. Each route must respect a time limit called a \emph{deadline}. These limits are encoded in a deadline function $d$, which maps to each route $r$ an integer such that $TR(r,A)$ must be at most $d(r)$.

      \bigskip

      \noindent {\bf Periodic Assignment with Low Latency} (\pall)

      \noindent {\bf Input:}  A routed network $N$, integers $P$ and $\tau$ and a deadline function $d$.
      
      \noindent {\bf Question:} Does there exist a valid $(P,\tau)$ assignment $A$ of $N$ such that for all $r \in {\cal R}$, $TR(r,A) \leq d(r)$?
      \bigskip

In the next subsection, this problem is proved to be $\NP$-hard. In Section~\ref{sec:PALL}, we propose heuristics solving the search version of \pall (computing a valid assignment), also denoted by \pall for simplicity. In the definition of \pall, we have chosen to bound the transmission time of each route, in particular, we control the worst-case latency. It is justified by our C-RAN application with hard constraints on the latency.

We say that an assignment is \textbf{bufferless} when the waiting time of all routes is zero. The assignment can then be seen as a function from the routes to the integers, which produces the value of the offset, the waiting time being omitted. We consider a restricted version of \pall, requiring finding a bufferless assignment. This is equivalent to using the deadline function $d(r) = \lambda(r)$ (the transmission time must be equal to the size of the route), which implies $w_r = 0$ for all $r \in \cal{R}$. This problem is called \textbf{P}eriodic \textbf{A}ssignment for \textbf{Z}ero \textbf{L}atency and is denoted by \pazl.

     \bigskip

      \noindent {\bf Periodic Assignment with Zero Latency }(\pazl)

      \noindent {\bf Input:}  A routed network $N$, integers $P$ and $\tau$.
      
      \noindent {\bf Question:} Does there exist a valid bufferless $(P,\tau)$ assignment of $N$?
      \bigskip

     Studying \pazl is simpler: in the instance representing a routed network, $\cal{B}$ and the deadline function may be omitted and a solution is a function that associates an offset to each route.  Moreover, a solution to \pazl is more efficient when implemented in real telecommunication networks, since we do not need a contention buffer at all. A switch taking full advantage of the absence of a buffer is presented in~\cite{Marc2201:Experimental}. 
      An unusual property of assignments is that given a routed network and a deadline, we may have a $(P,\tau)$ assignment but no $(P',\tau)$ assignment with $P' > P$: the existence of an assignment is not monotone with regard to $P$.

	\begin{proposition} \label{prop:monotonic}
	 For any odd $P$, there is a routed network with a $(2,1)$-periodic bufferless assignment but no $(P,1)$-periodic bufferless assignment.
	\end{proposition}

	\begin{proof}
      Let us build $N$, a generalization of the routed network given in Figure~\ref{fig:example}. 
      Let $n$ be an integer, the vertices of the routes are $v_{i,j}$, $v_i^1$ and $v_i^2$, with $0 \leq i < j <n$. 
      There are $n$ routes denoted by $r_i$, for $i \in [n]$. The route $r_i$ is equal to $(v_i^1,v_{i,1},\dots,v_{i,n-1},v_i^2)$. The weights of the arcs are set so that $\lambda(r_i, v_{i,j}) - \lambda(r_j,v_{i,j})= P$, where $P$ is an odd number smaller than $n$. It is always possible by choosing appropriate values for $\omega(r_i,v_{i,j-1})$ and $\omega(r_j,v_{i-1,j})$. In such a graph, there is no bufferless $(P,\tau)$ assignment, since the problem reduces to finding a $P$-coloring in a complete graph with $n > P$ vertices, the colors being the offsets of the routes.

      If we consider a period of $2$, for all $i \neq j$, $\lambda(r_i, v_{i,j}) - \lambda(r_j, v_{i,j}) \mod 2 = 1$, hence two datagrams of same offset and size $1$ do not have a collision at $v_{i,j}$. Therefore, the bufferless assignment defined by $A(r_i) = 0$ for all $i \in [n]$ is a valid $(2,1)$ assignment of $N$.      
\end{proof}

      The table of Figure~\ref{tab:summary} summarizes the main notations used in the paper.
    \begin{figure}
      \begin{center}
    \begin{tabular}{|c|l|}
    \hline
     $N = (\cal{R},\,\cal{B},\,\omega)$ & Routed network \\
     \hline
     $n =  \lvert\cal{R} \rvert$ & Number of routes\\
     \hline
     $P$ & Period\\
     \hline
     $\tau$ & Size of a datagram\\
     \hline
     $\omega(r,u)$ & Weight of the arc $(u,v)$ of $r$ \\
     \hline
     $\lambda(r,u)$ & Length of the route $r$ up to vertex $u$\\
     \hline
     $\lambda(r)$ & Length of the route $r$\\
     \hline 
     $A$ & Assignment\\
     \hline 
     $A(r) = (o_r,w_r)$ & Offset and waiting time of the route $r$ given by $A$ \\
     \hline 
     $TR(A,r)$& Transmission time of the route $r$ for the assignment $A$\\
     \hline 
     $TR(A)$& Transmission time of the assignment $A$\\
     \hline
     $d(r)$ & Deadline of the route $r$\\
     \hline
	 $t(r,u)$ & Transmission time on the route $r$, up to the vertex $u$\\
     \hline
     $ [r,u]$ & Tics used in the period by the route $r$ at vertex $u$\\
     \hline
      \end{tabular}
      \end{center}
      \caption{Summary of the notations of the article.}\label{tab:summary}
    \end{figure}
  	
  Let us introduce a few parameters quantifying the complexity of a routed network.
The \textbf{contention depth} of a routed network is the size of the longest route (number of arcs) of the network, minus one. It is the number of contention points on the route since the first and the last vertex are private to the route. The \textbf{width} of a vertex is the number of routes that contain it --- equivalently its indegree or its outdegree. By definition,
the first and last vertices of a route are of width one, while all other vertices are of width at least two (otherwise they are removed).
The \textbf{contention width} of a routed network is the maximal width of its vertices.
A valid $(P,\tau)$ assignment of a routed network must satisfy that $P/\tau$ is larger or equal to the contention width. Now, let us fix $P$ and $\tau$. For a given vertex of contention width $c$, we define its \textbf{load} as $c\tau/P$. It represents the proportion of the period used by datagrams at this contention point. The load of the routed network is the maximum load of its vertices. A routed network must have a load less or equal to one to admit a valid assignment.

\subsection{The Star Routed Network} \label{sec:star_routed_network}

In this section, we define a family of simple routed networks modeling a Multipoint-to-Multipoint fronthaul (see Figure~\ref{fig:star}), which has been designed for C-RAN \cite{tayq2017real}. Let $N = (\cal{R},\,\cal{B},\,\omega)$ be a routed network; we say it is a \textbf{star routed network} if and only if ${\cal R} =\{r_0,\dots,r_{n-1}\}$ where for $0 \leq i<n$, $r_i = (s_i,c_1,c_2,t_i)$ and ${\cal B} = \{ c_2 \}$ (datagrams can wait in $c_2$). Star-routed networks have a contention depth of two but a maximal contention width of $n$. The load on each of the two contention points is thus $n\tau / P$.

The fronthaul network we model with a star-routed network has a single shared link, which connects all RRHs at one end and all BBUs at the other end. The links are all \emph{full-duplex}, meaning that the datagrams going from RRHs to BBUs do not interact with those going in the other direction.
This property does not need to be enforced in our theoretical modeling, but it matches a real fronthaul network, and we will use such examples for our experiments.

The two contention points $c_1$ and $c_2$ model the beginning of the shared link (used to go from the RRHs to the BBUs) and the other end of the shared link (used in the other direction).
The computation in the BBU of an answer to a datagram on the route $r$ takes some time.
In the star-routed network, this time is encoded in the weight of the arc between $c_1$ and $c_2$ in $r$. The weight $\omega(r,c_1)$ is the time needed to go through the shared link, then to arrive at the BBU, plus the computation time and the time to return to the shared link, as illustrated in Figure~\ref{fig:star}.

Star-routed networks are simple, but every network in which all routes share an arc and satisfy a coherent routing condition can be modeled by a star-routed network.
Fronthaul networks, where all the BBUs are located in the same data center, are star-routed networks. In such a situation, we can consider the weights of the arcs $(c_1,c_2)$ to be either all equal (in that case \pazl is trivial, see Section~\ref{sec:PALL}) or different due to the structure of the network inside the data center and the various hardware used for the BBUs, computing with different speeds.

\begin{figure}
\begin{center}
\scalebox{0.4}{

\begin{tikzpicture}
  \SetGraphUnit{5}
    \tikzset{
  EdgeStyle/.append style = {->} }
   \tikzstyle{VertexStyle}=[shape = circle, draw, minimum size = 30pt]

  \node (s1) at (0,4) {\includegraphics[width = 1cm]{rrh.png}};
  \node[below] at (s1.south) {\huge $r_1$};
  \node (s2) at (0,2) {\includegraphics[width = 1cm]{rrh.png}};
  \node[below] at (s2.south) {\huge $r_2$};
  \node (s3) at (0,0) {\includegraphics[width = 1cm]{rrh.png}};
  \node[below] at (s3.south) {\huge $r_3$};
  
   \node (t1) at (12,4) {\includegraphics[width = 1cm]{bbu.png}};
  \node[below] at (t1.south) {\huge $b_1$};
  \node at (t1.north west) {\textcolor{red}{3}};
  \node (t2) at (12,2) {\includegraphics[width = 1cm]{bbu.png}};
  \node[below] at (t2.south) {\huge $b_2$};
   \node at (t2.north west) {\textcolor{red}{2}};
  \node (t3) at (12,0) {\includegraphics[width = 1cm]{bbu.png}};
  \node[below] at (t3.south) {\huge $b_3$};
   \node at (t3.north west) {\textcolor{red}{3}};
    \node (c1) at (4,2) {\includegraphics[width = 1cm]{switch.png}};
  
    \node (c2) at (8,2) {\includegraphics[width = 1cm]{switch.png}};
  
\path (s1) edge [<->] node[anchor=south,inner sep = 0.2cm]{$5$} (c1);

\path (s2) edge [<->] node[anchor=south,inner sep = 0.2cm]{$3$} (c1);
\path (s3) edge [<->] node[anchor=south,inner sep = 0.2cm]{$7$} (c1);
\path (c2) edge [<->] node[anchor=south,inner sep = 0.2cm]{$4$} (t2);
\path (c2) edge [<->] node[anchor=south,inner sep = 0.2cm]{$1$} (t3);

\path (c2) edge [<->] node[anchor=south,inner sep = 0.2cm]{$2$} (t1);

\node[above] at (6,1.75) {7};

 \path (c1) edge [<-,bend left=15] node[anchor=south,inner sep = 0.2cm]{\huge $c_2$} (c2);
\path (c1) edge [->,bend right=15] node[anchor=north,inner sep = 0.2cm]{\huge $c_1$} (c2);

  \Vertex[x=14,y=4, L = {\huge $s_1$}]{s1};
  \Vertex[x=14,y=2, L = {\huge $s_2$}]{s2};
\Vertex[x=14,y=0, L = {\huge $s_3$}]{s3};
\Vertex[x=26,y=4, L = {\huge $t_1$}]{s1p};
\Vertex[x=26,y=2, L = {\huge $t_2$}]{s2p};
\Vertex[x=26,y=0, L = {\huge $t_3$}]{s3p};
\Vertex[x=22,y=2, L = {\huge $c_2$}]{c2};

  \Vertex[x=18,y=2, L = {\huge $c_1$}]{c1}

\path (s1) edge [->] node[anchor=south,inner sep = 0.2cm]{$5$} (c1);

\path (s2) edge [->] node[anchor=south,inner sep = 0.2cm]{$3$} (c1);
\path (s3) edge [->] node[anchor=south,inner sep = 0.2cm]{$7$} (c1);
\path (c2) edge [->] node[anchor=south,inner sep = 0.2cm]{$7+3$} (s2p);
\path (c2) edge [->] node[anchor=south,inner sep = 0.2cm]{$7+7$} (s3p);

\path (c2) edge [->] node[anchor=south,inner sep = 0.2cm]{$7+5$} (s1p);

\path (c1) edge [->] node[anchor=south,inner sep = 0.2cm]{$7+4+\textcolor{red}{2}+4$} (c2);

\path (c1) edge [->,bend left=30] node[anchor=south,inner sep = 0.2cm]{$7+2+\textcolor{red}{3}+2$} (c2);
\path (c1) edge [->,bend right=30] node[anchor=north,inner sep = 0.2cm]{$7+1+\textcolor{red}{3}+1$} (c2);
   \node[circle] (buf) at (22, 0.5) {$\cal{B}$};
   \draw[dashed] (22,2) ellipse  (0.6cm and 1cm);

\end{tikzpicture}
}

\caption{Left, a physical fronthaul network, and right, the star routed network modeling a round trip in the fronthaul network. The computation time in the BBU is given in red.}

\label{fig:star}
\end{center}
\end{figure}

When solving \pall or \pazl on a star-routed network, a period, a datagram size and a deadline function are also given. When the period is fixed, we modify the deadline function to do several simplifying assumptions on the parameters of the star-routed network without loss of generality. We say that a star-routed network is \textbf{canonical}, for a period $P$, if the weights of the arcs between $c_1$ and $c_2$ are in $[P]$ and the others are equal to zero. Hence, $\lambda(r_i)$, the length of a route is equal to the length of its arc $(c_1,c_2)$. Moreover, $\lambda(r_0) = 0$. See Figure~\ref{fig:canonical} for an example of the canonical star-routed network of Figure~\ref{fig:star}.
  
\begin{figure}
\begin{center}

 \scalebox{0.5}{

\begin{tikzpicture}
  \SetGraphUnit{5}
    \tikzset{
  EdgeStyle/.append style = {->} }
   \tikzstyle{VertexStyle}=[shape = circle, draw, minimum size = 30pt]
   \renewcommand{\VertexLightFillColor}{orange}
  \Vertex[x=0,y=4, L = {\huge $s_1$}]{s1};
  \Vertex[x=0,y=2, L = {\huge $s_2$}]{s2};
\Vertex[x=0,y=0, L = {\huge $s_3$}]{s3};
\Vertex[x=15,y=4, L = {\huge $t_1$}]{s1p};
\Vertex[x=15,y=2, L = {\huge $t_2$}]{s2p};
\Vertex[x=15,y=0, L = {\huge $t_3$}]{s3p};
\Vertex[x=10,y=2, L = {\huge $c_2$}]{c2};

  \Vertex[x=5,y=2, L = {\huge $c_1$}]{c1}
  
\path (s1) edge [->] node[anchor=south,inner sep = 0.2cm]{$0$} (c1);

\path (s2) edge [->] node[anchor=south,inner sep = 0.2cm]{$0$} (c1);
\path (s3) edge [->] node[anchor=south,inner sep = 0.2cm]{$0$} (c1);
\path (c2) edge [->] node[anchor=south,inner sep = 0.2cm]{$10$} (s2p);
\path (c2) edge [->] node[anchor=south,inner sep = 0.2cm]{$14$} (s3p);

\path (c2) edge [->] node[anchor=south,inner sep = 0.2cm]{$12$} (s1p);

\path (c1) edge [->] node[anchor=south,inner sep = 0.2cm]{$15$} (c2);

\path (c1) edge [->,bend left=30] node[anchor=south,inner sep = 0.2cm]{$11$} (c2);
\path (c1) edge [->,bend right=30] node[anchor=north,inner sep = 0.2cm]{$9$} (c2);
   \node[circle] (buf) at (10, 0.5) {$\cal{B}$};
   \draw[dashed] (10,2) ellipse  (0.6cm and 1cm);

\end{tikzpicture}
}

 $d_1 = 25$, $d_2 = 31$, $d_3 = 25$

$\downarrow$

 \scalebox{0.5}{

\begin{tikzpicture}
  \SetGraphUnit{5}
    \tikzset{
  EdgeStyle/.append style = {->} }
   \tikzstyle{VertexStyle}=[shape = circle, draw, minimum size = 30pt]
   \renewcommand{\VertexLightFillColor}{orange}
  \Vertex[x=0,y=4, L = {\huge $s_1$}]{s1};
  \Vertex[x=0,y=2, L = {\huge $s_2$}]{s2};
\Vertex[x=0,y=0, L = {\huge $s_3$}]{s3};
\Vertex[x=15,y=4, L = {\huge $t_1$}]{s1p};
\Vertex[x=15,y=2, L = {\huge $t_2$}]{s2p};
\Vertex[x=15,y=0, L = {\huge $t_3$}]{s3p};
\Vertex[x=10,y=2, L = {\huge $c_2$}]{c2};

  \Vertex[x=5,y=2, L = {\huge $c_1$}]{c1}

\path (s1) edge [->] node[anchor=south]{$0$} (c1);

\path (s2) edge [->] node[anchor=south,inner sep = 0.2cm]{$0$} (c1);
\path (s3) edge [->] node[anchor=south,inner sep = 0.2cm]{$0$} (c1);
\path (c2) edge [->] node[anchor=south,inner sep = 0.2cm]{$0$} (s2p);
\path (c2) edge [->] node[anchor=south,inner sep = 0.2cm]{$0$} (s3p);

\path (c2) edge [->] node[anchor=south,inner sep = 0.2cm]{$0$} (s1p);

\path (c1) edge [->] node[anchor=south,inner sep = 0.2cm]{$15$} (c2);

\path (c1) edge [->,bend left=30] node[anchor=south,inner sep = 0.2cm]{$11$} (c2);
\path (c1) edge [->,bend right=30] node[anchor=north,inner sep = 0.2cm]{$9$} (c2);
   \node[circle] (buf) at (10, 0.5) {$\cal{B}$};
   \draw[dashed] (10,2) ellipse  (0.6cm and 1cm);

\end{tikzpicture}
}

 $d_1 = 14$, $d_2 = 21$, $d_3 = 11$

$\downarrow$

 \scalebox{0.5}{

\begin{tikzpicture}
  \SetGraphUnit{5}
    \tikzset{
  EdgeStyle/.append style = {->} }
   \tikzstyle{VertexStyle}=[shape = circle, draw, minimum size = 30pt]
   \renewcommand{\VertexLightFillColor}{orange}
  \Vertex[x=0,y=4, L = {\huge $s_1$}]{s1};
  \Vertex[x=0,y=2, L = {\huge $s_2$}]{s2};
\Vertex[x=0,y=0, L = {\huge $s_3$}]{s3};
\Vertex[x=15,y=4, L = {\huge $t_1$}]{s1p};
\Vertex[x=15,y=2, L = {\huge $t_2$}]{s2p};
\Vertex[x=15,y=0, L = {\huge $t_3$}]{s3p};
\Vertex[x=10,y=2, L = {\huge $c_2$}]{c2};

  \Vertex[x=5,y=2, L = {\huge $c_1$}]{c1}

\path (s1) edge [->] node[anchor=south]{$0$} (c1);

\path (s2) edge [->] node[anchor=south,inner sep = 0.2cm]{$0$} (c1);
\path (s3) edge [->] node[anchor=south,inner sep = 0.2cm]{$0$} (c1);
\path (c2) edge [->] node[anchor=south,inner sep = 0.2cm]{$0$} (s2p);
\path (c2) edge [->] node[anchor=south,inner sep = 0.2cm]{$0$} (s3p);

\path (c2) edge [->] node[anchor=south,inner sep = 0.2cm]{$0$} (s1p);

\path (c1) edge [->] node[anchor=south,inner sep = 0.2cm]{$0$} (c2);

\path (c1) edge [->,bend left=30] node[anchor=south,inner sep = 0.2cm]{$1$} (c2);
\path (c1) edge [->,bend right=30] node[anchor=north,inner sep = 0.2cm]{$4$} (c2);
   \node[circle] (buf) at (10, 0.5) {$\cal{B}$};
   \draw[dashed] (10,2) ellipse  (0.6cm and 1cm);

\end{tikzpicture}
}

 $d_1 = 4$, $d_2 = 6$, $d_3 = 6$
 
\end{center}

\caption{Transformation of the star routed network of Figure~\ref{fig:star} into its canonical form, using the method of Proposition~\ref{prop:canonical}. Initially $\tau = 1$, $P=5$, $d_1 = 30$, $d_2 = 34$, $d_3 = 32$.}
\label{fig:canonical}
\end{figure}

  \begin{proposition}\label{prop:canonical}
   Let $I = (N, P, \tau , d)$, with $N = (\cal{R},\,\cal{B},\,\omega)$ a star routed network. Then, there is 
   $I' = (N', P, \tau , d')$, with  $N' = (\cal{R},\,\cal{B},\,\omega')$ a canonical star routed network, such that:
     $$I \in \pall \Leftrightarrow I' \in \pall \text{ and } I \in \pazl \Leftrightarrow I' \in \pazl$$
  \end{proposition}

  \begin{proof}
  We define $\omega'$ and $d'$ from $\omega$ and $d$ in such a way that there is a bijection 
  between valid assignments of $I$ and $I'$, which proves the proposition. In this bijection,
  the offsets $o_i$ for an assignment of $I$ will be mapped to $o'_i$, while the waiting times remain the same.
  
  The routed network $N'$ is equal to $N$ except for the weight function $\omega'$.
  We set the weights of the arcs $(s_i,c_1)$ to zero in $N'$. We obtain the bijection between valid assignments of $I$ and $I'$ by setting $o_i' + \omega(r_i,s_i) = o_i $ and $d'(r_i) = d(i) - \omega(r_i,s_i)$. The weights $\omega'(r_i,c_2)$ are also set to $0$. This does not change the possible collisions
  for an assignment, but it changes the transmission time. Hence we set $d'(r_i) = d'(r_i) - \omega(r_i,c_2)$
  to preserve the bijection between valid assignments of $I$ and $I'$. 

  We let $\omega'(r_i,c_1) = \omega(r_i,c_1) \mod P$. Again, it does not change collisions since computing a possible collision is done modulo $P$. However, we must change $d'$ to be $d'(r_i) = d'(r_i) - \omega(r_i,c_1) + \omega'(r_i,c_1)$ to keep the same constraints on the transmission times of valid assignments.

  Finally, we assume w.l.o.g. that $\omega'(r_0,c_1)$ is the smallest weight among the weights of the arcs
  $(c_1,c_2)$. We let $\omega'(r_i,c_1) = \omega'(r_i,c_1) - \omega'(r_0,c_1)$, which implies that $\omega'(r_0,c_1) = 0$.  All weights of arcs $(c_1,c_2)$ are changed by the same value, hence collisions are not modified. We change $d'(r_i)$ to  $d'(r_i) - \omega'(r_0,c_1)$ for all $i$ so that the constraints on the deadlines stay the same.
  \end{proof}

   From now on, we may assume that a star-routed network is canonical, using Proposition~\ref{prop:canonical}. To give an instance of \pall where the routed network is a canonical star routed network, it is enough to give the weights of the arcs $(c_1,c_2)$ for all routes, the period, the datagram size, and $d$ the deadline function. For an instance of \pazl, we can also omit $d$.

\section{Hardness of \texttt{PALL} and \texttt{PAZL}}
  \label{sec:complexity}

	We show in this section that \pall is $\NP$-hard by proving $\NP$-hardness for a restricted version: \pazl with $\tau = 1$. We give two proofs that \pazl is $\NP$-complete.
	The first proof works even for contention depth two, but not for star-routed networks.
	 For contention depth one, the problem is trivial: either the load is less than one and there is a valid bufferless assignment, or there is no valid assignment. 
	 The second proof works for graphs with contention width $2$: the conflicts are locally very simple, but the problem is complex globally nonetheless. Solving \pall is trivial on trees because they can be reduced to one vertex of contention depth one. Thus, it may be interesting to study its complexity on bounded treewidth (or dagwidth) networks, a common property of real networks~\cite{de2011treewidth}

 \begin{theorem}
\pazl is $\NP$-complete on the class of routed networks with contention depth 2.
\end{theorem}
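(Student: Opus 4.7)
The plan is to prove NP-completeness by reduction from 3-edge-coloring of cubic graphs, which is NP-complete by Holyer's theorem. Membership in NP is immediate: given a candidate assignment $m$, one can verify in polynomial time that no two routes collide by comparing their time slots $[t(u,r)]_{P,\tau}$ on each shared arc.

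Starting from a cubic graph $G=(V,E)$, I would construct a routed network $(G',\mathcal{R})$ with $P=3$ and $\tau=1$. For each vertex $v\in V$ I introduce two nodes $x_v,y_v$ and a single \emph{shared arc} $(x_v,y_v)$. For each edge $e=\{u,v\}\in E$, oriented arbitrarily from $u$ to $v$, I introduce a private source $s_e$ and target $t_e$ together with the route
\[
r_e\ =\ s_e\to x_u\to y_u\to x_v\to y_v\to t_e,
\]
where the arcs $(s_e,x_u)$, $(y_u,x_v)$ and $(y_v,t_e)$ belong to no other route. Because $G$ is simple and the sources and targets are private to their edges, the only arcs $r_e$ shares with other routes are the two shared arcs $(x_u,y_u)$ and $(x_v,y_v)$ at its endpoints, so the routed network has conflict depth exactly two.

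The arc weights are then chosen so that every latency $\lambda(x_v,r_e)$ has the same residue modulo $3$; for instance, weight $1$ on the source/target arcs and weight $3$ on every shared arc and every connector $(y_u,x_v)$. With this choice, the three routes passing through a given shared arc $(x_v,y_v)$ all reach $x_v$ with latency congruent to $1$ modulo $3$, so the no-collision condition at that arc is equivalent to requiring that their offsets are pairwise distinct modulo $3$. Interpreting $m_{r_e}\bmod 3$ as a color of the edge $e$, this is exactly the condition that the three edges incident to every vertex of $G$ carry distinct colors, i.e.\ that we have a proper 3-edge-coloring. The correspondence between $(3,1)$-periodic assignments of $(G',\mathcal{R})$ and 3-edge-colorings of $G$ is bijective and the reduction is clearly polynomial.

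The main obstacle is the bookkeeping rather than a clever idea: one has to check that the only collisions the construction can create occur at the shared arcs $(x_v,y_v)$, which relies on keeping the source, target and connector arcs private, and that the latency residues align modulo $P$ so that the shared-arc constraint is the pure "pairwise distinct offsets" of edge coloring rather than a twisted difference constraint. Once these two verifications are carried out, the equivalence with 3-edge-coloring and hence NP-hardness is immediate.
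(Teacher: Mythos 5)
Your proposal is correct and follows essentially the same route as the paper: a reduction from Holyer's edge-coloring hardness result, with a shared arc per vertex of the source graph and one route per edge threading through the two endpoint gadgets, so that the no-collision condition on each shared arc becomes ``pairwise distinct offsets modulo $P$,'' i.e.\ a proper edge coloring. The only cosmetic differences are that you specialize to cubic graphs with $P=3$ while the paper uses maximum degree $d$ with $P=d$, and that you choose positive weights congruent modulo $P$ where the paper uses zero weights --- precisely the adaptation the paper itself sketches in the remark following its proof.
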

 \begin{proof}
 \pazl is in $\NP$ since, given an offset for each route in an assignment, it is easy to check whether there are collisions, in linear time in the size of the routed network.
 
  Let $H=(V,E)$ be an undirected graph, and let $P$ be its maximum degree. We consider the problem to determine whether $H$ is arc-colorable with $P$ or $P+1$ colors. The arc coloring problem is $\NP$-hard~\cite{holyer1981np}, and we reduce it to \pazl to prove its $\NP$-hardness. To do that, we define from $H$ a routed network $N = ({\cal R},\, \omega)$ as follows. 

  Let us choose an arbitrary total order $<$ on $V$.
  For each edge $(u,v) \in E$, if $u<v$, there is a route $s_{u,v},u,v,t_{u,v}$ in ${\cal R}$. 
  All these arcs are of weight $0$. The routed network $N$ is of contention depth $2$, as required by the theorem statement. 

  The existence of a $P$-coloring of $H$ is equivalent to the existence of a $(P,1)$-periodic bufferless assignment of $N$. Indeed, a $P$-coloring of $H$ can be seen as a labeling of its edges by the integers in $[P]$. It induces a bijection between $P$-colorings of $H$ and offsets of the routes of ${\cal R}$, which represent the edges of $H$. Having no collision on some vertex $v$ implies that all offsets of routes going through $v$ are different since all arcs are of weight $0$. Hence, edges of $H$ incident to $v$, colored by the offsets of a valid assignment, are all of distinct colors. Therefore, we have reduced arc coloring to \pazl by a polynomial-time transformation, which concludes the proof. 
 \end{proof}

 We have used weights of zero for all arcs in the proof. It is a further restriction to the 
 class of graphs for which \pazl is $\NP$-hard. We could ask the weights to be strictly positive, another possible restriction that makes more sense in our model, since weights represent the delay of physical links. Then, we can prove $\NP$-completeness using the same proof, by setting all weights to the period $P$.

We now give a hardness proof for routed networks with contention width two but large contention depth. Note that a vertex of contention depth one does not induce a collision and can be removed from the routed network without loss of generality. The presented reduction can be used to prove an inapproximability result. Let \minpazl be the following problem: given a routed network and $\tau$, find the minimal period $P$ such that there is a $(P,\tau)$-periodic bufferless assignment (a positive instance of \pazl).

\begin{theorem}\label{th:inapprox}
If $P$ $\neq$ $\NP$, the problem \minpazl on the class of routed networks of contention width two cannot be approximated in polynomial time within a factor $n^{1-o(1)}$ where $n$ is the number of routes.
\end{theorem}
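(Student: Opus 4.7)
The plan is to reduce from the graph coloring (chromatic number) problem, which by Zuckerman's theorem cannot be approximated within a factor $n^{1-o(1)}$ unless $\P = \NP$. The key idea is to build a routed network of conflict width two whose smallest feasible period is exactly the chromatic number of the input graph, with the number of routes equal to the number of vertices of the input graph, so the inapproximability factor transfers.

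Given an undirected graph $H=(V,E)$ with $|V|=n$, I would construct the routed network $(G,\mathcal{R})$ as follows. For every edge $e=\{u,v\}\in E$, introduce two fresh vertices $s_e,t_e$ and a single arc $(s_e,t_e)$ of weight $0$; call this the \emph{edge arc} $a_e$. For every vertex $v\in V$, pick an arbitrary ordering $e_1^v,\ldots,e_{d(v)}^v$ of the edges incident to $v$ and define the route $r_v$ that starts from a private source $\sigma_v$, successively traverses the edge arcs $a_{e_1^v},\ldots,a_{e_{d(v)}^v}$, connected by private \emph{link arcs} of weight $0$ between consecutive $t_{e_i^v}$ and $s_{e_{i+1}^v}$, and ends at a private sink. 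The set of arcs of $G$ is the union of all these arcs, and $\mathcal{R}=\{r_v\mid v\in V\}$. By construction each edge arc $a_e$ belongs to exactly the two routes $r_u$ and $r_v$ with $e=\{u,v\}$, and every link arc belongs to a single route, so the conflict width is two.

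Because all weights are $0$, the latency $\lambda(a_e,r)$ is $0$ for any route $r$ passing through $a_e$. Hence for $\tau=1$ and any period $P$, two routes $r_u$ and $r_v$ sharing $a_e$ collide precisely when $m_u\equiv m_v\pmod P$. Therefore an offset vector is a $(P,1)$-periodic assignment of $(G,\mathcal{R})$ if and only if the map $v\mapsto m_v\bmod P$ is a proper $P$-coloring of $H$. This establishes that the smallest $P$ for which a $(P,1)$-periodic assignment of $(G,\mathcal{R})$ exists is exactly $\chi(H)$. Since $|\mathcal{R}|=n$ and the construction is polynomial, any polynomial-time $f(n)$-approximation of \minpra would yield a polynomial-time $f(n)$-approximation of $\chi(H)$, and Zuckerman's $n^{1-o(1)}$ lower bound on approximating the chromatic number transfers verbatim.

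The main obstacle is keeping the construction within conflict width two while still encoding arbitrary coloring instances: a careless arrangement of the edge arcs inside each route would either create shared link arcs or force the same physical arc to host more than two routes. The per-route choice of private source, sink and link-arc vertices bypasses this, and the use of zero weights (already justified in the proof of the preceding theorem, with the same remark that a strictly positive variant is obtained by inflating every link arc to weight equal to $P$ or rescaling uniformly) ensures that the coloring equivalence is exact for every candidate $P$, which is what the $\minpra$ formulation requires.
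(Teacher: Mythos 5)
Your proposal is correct and follows essentially the same route as the paper: one route per vertex of $H$, a dedicated zero-weight arc shared by exactly the two routes of each edge, the equivalence between $(P,1)$-periodic assignments and proper $P$-colorings, and the transfer of Zuckerman's $n^{1-o(1)}$ bound. The only difference is that you spell out the gadget (the $s_e,t_e$ vertices and private link arcs) explicitly, whereas the paper merely asserts such a network is easy to build and points to its figure.
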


\begin{proof}
 We reduce the problem of finding the minimal vertex coloring of a graph to \minpazl. Let $H = (U,E)$ be a graph, an instance of the problem of finding a minimal vertex coloring.  Let us now define the routed network $N$ from $H$.
 
 Let $<$ be an arbitrary total order on $U$. 
 The vertices of $N$ are in the set $\{v_{u,w} \mid (u,w) \in E\} \cup \{u^1, u^2 \mid u \in U\}$. 
 For each vertex $u$ in $H$, there is a route $r_u$ in ${\cal R}$, whose first and last vertices
 are $u^1$ and $u^2$. In between, the route contains all vertices $v_{u,w}$, following the order $<$ on the vertices $w$. The weights of all arcs are zero. By construction, a contention vertex corresponds to an edge and belongs to exactly two routes representing the vertices of the edge, thus $N$ is of contention width $2$. This reduction is illustrated in Figure~\ref{fig:reductionminpazl}.

  The existence of a $P$-coloring of $H$ is equivalent to the existence of a $(P,1)$ assignment of $N$ without waiting time: the offset of a route can be identified with the color of the corresponding vertex. Indeed, since all weights are zero, the absence of collision at contention point $v_{u,w}$ is equivalent to the fact that the offsets of $r_u$ and $r_w$ are different and reciprocally.

   Therefore, if we can approximate the minimum value of $P$ within some factor such that there is a $(P,1)$ assignment, we could approximate the minimal number of colors needed to color a graph within the same factor. The proof follows from the hardness of approximability of finding a minimal vertex coloring~\cite{zuckerman2006linear}.
\end{proof}
    \begin{figure}[ht]
    \centering
    \scalebox{0.6}{
    \begin{tikzpicture}
    \tikzset{
      LabelStyle/.style = { rectangle, rounded corners, draw,
        font = \bfseries },
    EdgeStyle/.append style = {->} }
      \SetGraphUnit{5}

      \node[draw,circle] (s3) at (4, 2) {\Large $w^1$}; 
      \node[draw,circle] (s2) at (0, 4) {\Large $v^1$}; 
      \node[draw,circle] (s1) at (0, 6) {\Large $u^1$}; 

      \node[draw,circle] (t3) at (12, 3) {\Large $w^2$}; 
      \node[draw,circle] (t2) at (10, 5) {\Large $v^2$}; 
      \node[draw,circle] (t1) at (10, 2) {\Large $u^2$};

      \tikzstyle{VertexStyle}=[shape = circle, draw, minimum size = 20pt]
  \tikzset{
   VertexStyle/.append style = {blue} }
  \Vertex[x=-8,y=3, L = {\huge $u$}]{u};
        \tikzset{
      VertexStyle/.append style = {green} }
    \Vertex[x=-7,y=5, L = {\huge $v$}]{v}

      \tikzset{
      VertexStyle/.append style = {red} }
    \Vertex[x=-6,y=4, L = {\huge $w$}]{w}
    \tikzset{
      VertexStyle/.append style = {black} }

       \SetVertexNoLabel
       \Vertex[x=2,y=5]{A}

       \Vertex[x=6,y=3]{E}

       \tikzset{
       EdgeStyle/.append style = {green} }
       \Edge(s2)(A)

      \Edge(A)(t2)

       \tikzset{
      EdgeStyle/.append style = {red} }
       \Edge(s3)(E)
       \Edge(E)(t3) 
  \tikzset{
       EdgeStyle/.append style = {blue} }
       \Edge(s1)(A)

       \Edge(A)(E)

       \Edge(E)(t1)

  \tikzset{
       EdgeStyle/.append style = {black,-} }

       \Edge(u)(v)
       \Edge(u)(w)
     \node (1) at (-3,4){\Huge $\rightarrow$};

     \node (2) at (-7,0){\Huge $H$};
      \node (3) at (10,0){\Huge $N$};
     \end{tikzpicture}
     }
     \caption{Reduction from  vertex coloring to \minpazl}
     \label{fig:reductionminpazl}
     \end{figure}

The previous theorem implies that \pazl is $\NP$-complete on the class of routed networks with contention width two. This also underlines the fact that, for general graphs, the best $P$ such that there is a $(P,\tau)$ assignment may correspond to a very small load. We can build on the reduction of the previous theorem to prove that \mintra, the problem of minimizing $TR(A)$, is hard to approximate too.

\begin{theorem}
If $\P{ }\neq\NP$, the problem \mintra, on graphs of contention width two, cannot be approximated in polynomial time within a factor $n^{1-o(1)}$ where $n$ is the number of routes.
\end{theorem}

\begin{proof}
We reduce the problem of finding the minimal vertex coloring of a graph to \mintra.
 Let $H = (U,E)$ be a graph, instance of the problem of finding a minimal vertex coloring. 
 We define the routed network $N$ in two steps. 

 Let the elements of $U$ be $u_0,\dots, u_{n-1}$. There are $n$ routes in $N$, denoted by $r_i$ for $i \in [n]$. In their first part, they go from $u_i^0$ to $u_i^1$, through some vertices in $\{v_{i,j,k}\}_{i,j,k \in [n]}$ that we later define. Moreover, $u_i^1 \in \cal{B}$, hence the waiting time is added at $u_i^1$. Assume that $r_i$ has offset $o_i$ and $r_j$ has offset $o_j$ and let us fix the datagram size to $1$ and the period to $n$. If $r_i$ and $r_i$ go through some vertex $v_{i,j,k}$, and  $\lambda(r_i,v_{i,j,k}) = \lambda(r_j,v_{i,j,k}) + k$, then to avoid a collision, the equation $o_i \neq o_j + k \mod n$ must be satisfied. If $r_i$ and $r_j$ go through $v_{i,j,k}$ satisfying the previous constraints for all $k \neq l$, it implies $o_i = o_j + l \mod n$. 
 It is easy to choose the weights of the two arcs going to $v_{i,j,k}$ to realize the previous condition, whatever the choice of weights of the previous arcs of the routes $r_i$ and $r_j$.

We ensure, using the vertices $v_{i,j,k}$ for $k \neq i-j$,
that $o_{i} = o_{j} + i - j \mod n$. It implies that there is some $o$, such that 
$ o = o_{i} - i \mod n$ for all $i \in [n]$. Now, for each route $r_i$, we set the weight of the
arc going to $v_i^1$, from the last vertex of the form $v_{i,j,k}$ in $r_i$, to be $n-i$.
With this construction, we have ensured, that the datagram of $r_i$ arrives at 
$v_i^1$ at time $o$ modulo $n$, for all $i \in [n]$. 

The second part of the routes, from $v_i^1$ to $v_i^2$ is built exactly as in the proof of Theorem~\ref{th:inapprox}. Hence, the waiting time in the vertices $v_i^1$ plays the exact same role as the offset in the graph of Theorem~\ref{th:inapprox}: the valid $(n,1)$-assignments are in bijection with colorings of $H$, the waiting times corresponding to the colors.

Finally, set the weights of the last arc going to $v_i^2$, for all $i \in [n]$, such that, for all $i,j \in [n]^2$, $\lambda(r_i) = \lambda(r_j)$.  Since all routes are of the same size, $TR(A)$ is equal to the maximal waiting time of $A$. Hence, the maximum waiting time is equal to the number of different waiting times required to have a valid assignment. A valid $(n,1)$-assignment which minimizes $TR(A)$ is in bijection with a minimal proper coloring of $H$, which proves the theorem.
\end{proof}

	We would like to prove hardness for even more restricted networks, in particular star routed networks.
   The problem \pazl on star routed networks is similar to the minimization of makespan in a two flow-shop with delays (see Section~\ref{sec:wtaheuristic}), a problem known to be $\NP$-complete~\cite{yu2004minimizing}. It suggests that \pazl is $\NP$-complete on star routed network, however we have not been able to prove it yet,  because the makespan cannot easily be encoded in \pazl. If we relax the definition of routed network by allowing loops,  we can model a network with a single half-duplex shared link, that is collisions can happen between datagrams going in both directions. This variant can be shown to be $\NP$-complete by a reduction from the subset sum problem, as it is done for a similar problem of scheduling pairs of tasks~\cite{orman1997complexity}.

We show in Section~\ref{sec:exp_PAZL}, that \pazl can be very often solved positively, in particular for short routes and when the load is moderate. The dependency of \pazl on the load has been studied in detail in a follow-up work~\cite{guiraud2024scheduling}, in which \pazl is solved for higher loads using more involved polynomial-time algorithms.

\section{Solving \texttt{PAZL} problem}
\label{sec:pazl}
\subsection{Shortest-Longest Policy}

We first present a simple policy, which works when the period is large, relative to the lengths of the routes. More generally, it works as soon as the length of the routes modulo the period are close. The algorithm is called \shortestlongest: it sends datagrams on the shared link from the route with the shortest arc $(c_1,c_2)$ to the longest. There is no idle time in the contention point $c_1$, i.e. a datagram goes through $c_1$ right after the previous one has left $c_1$.

\begin{proposition} Let $N$ be a canonical star routed network, with $r$ the longest route. If $n\tau + \lambda(r) \leq P$ then \shortestlongest produces a $(P,\tau)$-periodic bufferless assignment of $N$ in time $O(n\log(n))$.\label{prop:SL}
\end{proposition}
\begin{proof}
By hypothesis, $N$ is in canonical form, hence $\lambda(r,s_i) = 0$ for all $i \in [n]$. Moreover, $\lambda(r_0) = 0$ and we assume the routes are sorted so that, for all $i$, $\lambda(r_i) \leq \lambda(r_{i+1})$ (equivalently $\omega(r_i,c_1) \leq \omega(r_{i+1},c_1))$. We fix $P$ and $\tau$. The algorithm \shortestlongest sets $o_{r_i} = i\tau$ for all $i \in [n]$. Then, $[r_{i},c_1] = \{i\tau,\dots, (i+1)\tau -1\}$ and since $n\tau < P$, there is no collision on $c_1$.

By definition, we have $[r_{i},c_2] = \{\lambda(r_{i}) + i\tau \mod P, \dots, \lambda(r_{i}) + (i+1)\tau -1 \mod P\}$. By hypothesis, $n\tau + \lambda(r_{n-1}) \leq P$, hence $[c_2,r_{i}] = \{\lambda(r_{i}) + i\tau, \dots, \lambda(r_{i}) + (i+1)\tau -1\}$. Since $\lambda(r_i) \leq \lambda(r_{i-1})$, we have proven that $[c_2,r_{i}] \cap [c_2,r_{j}]$ for $i \neq j$. Hence, there is no collision on $c_2$ and the $(P,\tau)$ assignment built by \shortestlongest is valid.

The algorithm's complexity is dominated by the sorting of the routes in $O(n\log(n))$.
\end{proof}

If the period is slightly smaller than the bound of Proposition~\ref{prop:SL}, there is a collision of the last route with $r_0$ on $c_1$. Hence, this policy is not useful as a heuristic for longer routes, as confirmed by the experimental results of Section~\ref{sec:exp_PAZL}.

\subsection{Greedy Algorithm}

We propose a greedy algorithm that tries to build a valid assignment and always succeeds when the load is less than $1/3$. Therefore, in the rest of the article, we are only concerned with loads larger than $1/3$. In fact, in a follow-up work~\cite{guiraud2024scheduling}, we prove that there is always an assignment for loads smaller than $0.4$ and with a high probability for loads less than $0.5$. In this article, we present only the simplest greedy method to solve \pazl and focus on its comparison to the previous method and to an exact algorithm we present in the next section.

The idea is to restrict the possible offsets which can be chosen for the routes. It seems counter-intuitive since it decreases artificially the number of available offsets to schedule new datagrams. However, it does reduce the number of forbidden offsets for unscheduled datagrams. A \textbf{meta-offset} is an offset of value $i\tau$, with $i$ an integer from $0$ to $P / \tau$. We call \metaoffset the greedy algorithm which works as follows: for each datagram, in the order they are given, it tries all meta-offsets from $0$ to $P/\tau$ as an offset for the assignment until one does not create a collision with the current partial assignment.

    \begin{theorem}
    \metaoffset solves \pazl positively on star routed network and load less than $1/3$. 
    The assignment is found in time $O(n^2)$.
    \end{theorem}
    \begin{proof}
    Let us prove that \metaoffset always schedules the $n$ routes when the load is less than $1/3$. Let us assume it has built an assignment for the routes $r_0,r_1,\ldots,r_{k-1}$, using only meta-offsets. The number of meta-offsets is $P/\tau$ and already $k$ of them are used, hence to avoid collision in $c_1$, we have $P/\tau - k$ choices. We choose an offset among those for the route $r_k$ so that there is no collision in $c_2$. Exactly two consecutive meta-offsets can create a collision between $r_k$ and some route $r_i$ with $i < k$ in $c_2$, since the datagrams are all of size $\tau$; see Figure~\ref{fig:metaoffset}. Hence, there are at most $2k$ meta-offsets forbidden by collisions in $c_2$. In conclusion, there are at least $P/\tau - k - 2k$ possible meta-offsets so that its choice for $r_k$ does not create a collision in $c_1$ or $c_2$.  \metaoffset terminates and provides a valid bufferless assignment as soon as $P/\tau - 3(n-1) > 0$, which is true when $n\tau /P \leq 1/3$: the load is less or equal to $1/3$.

     This algorithm works in time $O(n^2)$, since for the $k$th route we have to try at most $3k$ meta-offsets before finding a correct one. We can test whether these $3k$ offsets cause a collision in $c_2$ in time $O(k)$ by maintaining an ordered list of the intervals of tics in the period used by already scheduled routes in $c_2$.
     \end{proof}
         
     \begin{figure}
      \begin{center}
      \includegraphics[width=0.9\textwidth]{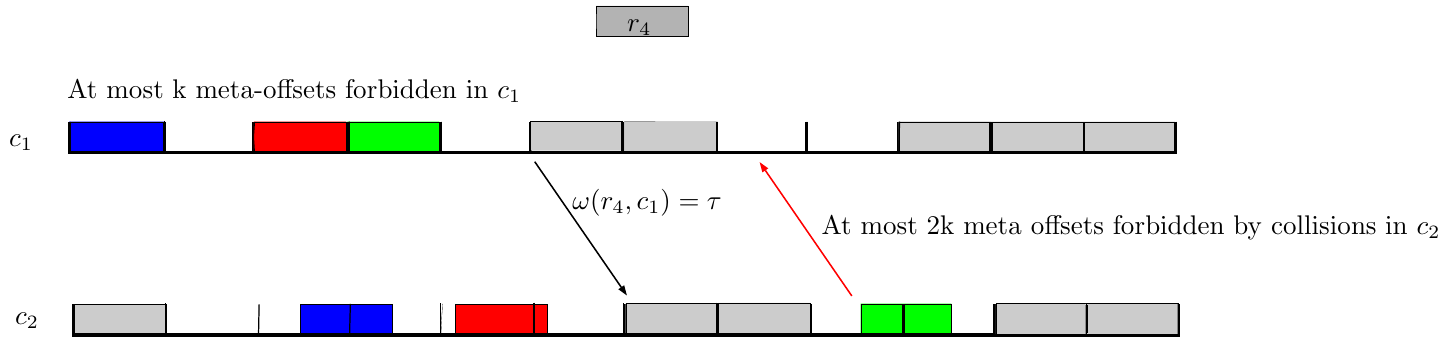}
      \end{center}
      \caption{Times used in the period in $c_1$ and $c_2$, when scheduling the $4$th route in \metaoffset, with a period $P = 3n\tau$}
      \label{fig:metaoffset}
      \end{figure}					
	
This algorithm, in contrast to the previous one, may work well, even for loads higher than $1/3$.
Experimental data in Section~\ref{sec:exp_PAZL} suggest that the algorithm finds a solution when the load is less than $1/2$.

\subsection{Compact Assignment}

In this section, we show how every bufferless assignment can be put into a canonical form.
We use that canonical form to design an algorithm to solve \pazl in fixed-parameter tractable time ($\FPT$), with parameter $n$ the number of routes (for more on parameterized complexity see~\cite{downey2012parameterized}). This is justified since $n$ is small in practice, from $10$ to $20$ in our settings, and the other parameters such as $P$, $\tau$, or the weights are large.

Let $({\cal R},\omega)$ be a star-routed network and let $A$ be a bufferless $(P,\tau)$ assignment.
We say that $A$ is \textbf{compact} if there is a route $r_0 \in \cal{R}$ such that the following holds: for all subsets $S\subset \cal{R}$ with $r_0 \notin S$, the bufferless assignment $A'$, defined by $A'(r) = A(r) - 1 \mod P$ if $r \in S$ and $A(r)$ otherwise, is not valid. In other words, an assignment is compact if, for all routes $r$ but one, $A(r)$ cannot be reduced by one; that is, either in $c_1$ or in $c_2$, there is a route $r'$ using the tic just before $r$. See Figure~\ref{fig:compact} for an example of a compact assignment obtained by the procedure of the next proposition. 
  \begin{figure}
      \begin{center} 
      \includegraphics[width=\textwidth]{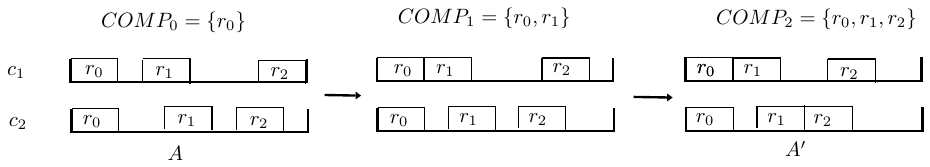}
      \end{center}
      \caption{Transformation of a bufferless assignment $A$ into a compact assignment $A'$, following the process of Proposition~\ref{prop:compactification}}
      \label{fig:compact}
      \end{figure}
\begin{proposition}\label{prop:compactification}
Let $N = ({\cal R}, \omega)$ be a star-routed network. If there is a $(P,\tau)$-periodic bufferless assignment of $N$, then there is a compact $(P,\tau)$ assignment of $N$.
\end{proposition}
\begin{proof}
Consider $A$, a $(P,\tau)$-periodic bufferless assignment of $N$.
We describe an algorithm that builds a sequence $COMP_i$ of sets of routes and a sequence  
$A_i$ of valid bufferless assignments. For all $i \leq n$, the set $COMP_i$ has cardinal $i$ and satisfies  $COMP_{i-1} \subset COMP_i$

Let $r$ be an arbitrary route of ${\cal R}$ and $A_0 = A$. We set $COMP_0 = \emptyset$.
 For $i = 1$ to $n$, we choose a route $r$, denoted by $r_i$, as follows.
Let $A_{i} = A_{i-1}$. While there is no collision, for all routes $r \in {\cal R} \setminus COMP_{i-1}$, let $A_i(r) = A_i(r) - 1$. Then choose any route $r$ in ${\cal R} \setminus COMP_{i-1}$ such that setting $A_i(r) = A_i(r)-1$ creates a collision and let $r_i = r$. By construction, $A_i$ is a valid bufferless assignment, since it is modified only when no collision is created. We let $COMP_i = COMP_{i-1} \cup \{r_i\}$.

We prove by induction on $i$, that $A_i$ is compact when restricted to $COMP_{i}$.
We have $|COMP_1| = 1$, hence $A_1$ is compact over $COMP_1$. Let us consider $A_i$. By the induction hypothesis, since the offsets of routes in $COMP_{i-1}$ are not modified at step $i$ of the algorithm, $A_i$ is compact when restricted to $COMP_{i-1}$. 

 Consider $S \subseteq COMP_i$ which does not contain $r_0$. If $S$ contains
an element of $COMP_{i-1}$, then $S \setminus \{r_i\}$ is not empty and by compactness we cannot decrement all offsets of $S\setminus \{r_i\}$ without creating a collision. The same property is true for $S$. If $S = \{r_i\}$, then by construction of $r_i$ by the algorithm, removing one from $A_i(r_i)$ creates a collision. Hence, $A_i$ is compact restricted to $COMP_{i}$, which proves the induction and the proposition.
\end{proof}

We now present an algorithm to find a $(P,\tau)$ assignment by trying all compact assignments.

\begin{theorem}\label{th:FPT}
$\pazl \in \FPT$ over star-routed networks when parametrized by the number of routes.
\end{theorem}
\begin{proof}
Let $N = ({\cal R},\omega)$ be a canonical star-routed network and let $P$ be the period and $\tau$ the size of a datagram. For a given assignment and a route $r$ with offset $o_r$, by removing $o_r$ to all offsets, we can always assume that $o_r = 0$. By this remark and Proposition~\ref{prop:compactification}, we need only to consider all \emph{compact assignments} with an \emph{offset $0$} for the route $r_0$. We now evaluate the number of compact assignments and prove that it only depends on $n$ the number of routes to prove the theorem.

 We describe a way to build any compact assignment $A$ by determining its offsets one after the other, which gives a bound on their number and an algorithm to generate them all. We fix an arbitrary total order on ${\cal R}$. Let $r_0$ be the first route in this order; its offset is set to zero, and we let $S = \{r_0\}$,
 $S_1 = \{r_0\}$ and $S_2 = \{r_0\}$. $S$ represents the routes whose offsets are fixed. The offsets of unscheduled routes are chosen so that they follow a route of $S_1$ in $c_1$ or a route of $S_2$ in $c_2$.

 At each step, we add an element to $S$: let $r$ be the smallest element of $S_1$, if it is non-empty. Then, select any route $r' \in {\cal R} \setminus S$ 
 such that $o_{r'} = o_{r} + \tau$ does not create a collision (by construction $o_{r'} = o_{r} + \tau - 1$ does create a collision in $c_1$). Then, we update the sets as follows:
 $S = S \cup \{r'\}$, $S_1 = S_1 \setminus \{r\} \cup \{r'\}$ and $S_2 = S_2 \cup \{r'\}$. If 
 $S_1$ is empty, $r$ is the smallest element of $S_2$, and we set $o_{r'} = o_{r} + \tau + \omega(r,c_2) - \omega(r',c_2)$.
 We can also remove $r$ from $S_1$ (or from $S_2$ if $S_1$ is empty) without adding any element to $S$. The value of the offset of the route added to $S$ is entirely determined by the values of the offsets of the routes in $S$.

 Any compact assignment can be built by the previous procedure if the proper choice of an element to add is made at each step. Hence, this process generates all compact assignments. We now bound the number of compact assignments it can produce. When $|S| = i$, we can add any of the $n-i$ routes in ${\cal R} \setminus S$ to $S$. Hence, the number of sequences of choices of routes to add is $n!$ (but some of these sequences can fail to produce a valid assignment). We have not yet taken into account the steps at which an element is removed from either $S_1$ or $S_2$, without adding something to $S$. At each step of the algorithm, we can remove an element or not. There are at most $2n$ steps in the algorithm. Hence, there are at most $4^n$ sequences of such choices during the algorithm. To conclude, there are at most $4^nn!$ compact assignments.

The algorithm to solve \pazl builds every possible compact assignment in the incremental manner described here and tests at each step whether, in the built partial assignment, there is a collision, which can be done in time linear in the size of $N$. Therefore, $\pazl \in \FPT$.
\end{proof}

We call the algorithm described in Theorem~\ref{th:FPT} \textbf{Exhaustive Search of Compact Assignments}
or \ESCA. The complexity of \ESCA is in $O(4^n n!)$. While a better analysis
of the number of compact assignments could improve this bound, the simple star routed networks with all arcs of weights $0$ has $(n-1)!$ compact assignments. Hence, to improve significantly on \ESCA, one should find an even more restricted notion of bufferless assignment than compact assignment.

To make \ESCA more efficient in practice, we make cuts in the search tree used to explore all compact assignments. Consider a set $S$ of $k$ routes whose offsets have been fixed at some point in the search tree. We consider the times used by these routes in $c_1$. It divides the period into $[(a_0,b_0), \dots, (a_{k-1},b_{k-1})]$ where the intervals $(a_i,b_i)$ are the times not used yet in $c_1$. Therefore, at most $\displaystyle{ \sum_{i=0}^{k-1} \lfloor(b_{i} -a_i)/\tau\rfloor}$ routes can still send a datagram through $c_1$. If this value is less than $n - k$, it is not possible to create a compact assignment by extending the current one on $S$, and we backtrack in the search tree. The same cut is also used for the contention point $c_2$. These cuts rely on the fact that the partial assignment is wasting bandwidth by creating intervals that are not multiples of $\tau$. These cuts significantly speed up \ESCA on instances of large loads, which are also the most difficult to solve.

   \subsection{Experimental Evaluation}\label{sec:exp_PAZL}

  In this section, the experimental results of the three presented algorithms are compared.
   Notice that both \metaoffset and \shortestlongest are polynomial-time algorithms but are not always able to find a solution, depending on the load or the size of the routes. On the other hand, \ESCA finds a solution if it exists, but works in exponential time in $n$. We compare the performance of the algorithms in two different regimes: routes are either short (depending on $\tau$), or unrestricted.

   \paragraph{Experimental Settings}

     The default parameters of all experiments in this article are derived from the C-RAN context~\cite{wang2017cloud} and summarized in the table of Figure~\ref{tab:params}: a tic corresponds to the sending time of $64$ Bytes of data on links of bandwidth $10$~Gbps. The datagrams have an approximate size of $1.18$~Mbit, which corresponds to $2,500$ tics. 

     All experiments are done on synthetic networks generated randomly. We generate the physical fronthaul
     network, as represented in Figure~\ref{fig:star}, by drawing the size of each link according to some distribution that depends on the experiment. Then, the corresponding canonical star routed network is built from the generated fronthaul and the algorithms tested on it. 

     In the following experiments, we illustrate how well the algorithms work for different values of the load. To change the load, we choose to fix both parameters $\tau$ and $n$, and to modify the period $P$, which allows for smooth control of the load and does not impact the execution time of the algorithms. In most experiments, we fix the number of routes to $n = 8$. In the case of fronthaul networks, the period is one ms, which corresponds to $20,000$ tics and thus a load of $0.95$ with eight routes.

     For all experiments of this paper, the code in C is available on the web page of one author~\cite{webpage} under a copyleft license. The code has been run on a standard laptop with a $2.2$~GHz Intel Core i5 and the sources are compiled with gcc version 11.2. All experiments end in at most a few dozen seconds.
\begin{figure}
\begin{center}
\begin{tabular}{|c|c|c|}
\hline
Parameter& Value & Time in tics \\
\hline
Duration of a tic& $\simeq51$~ns&1 tic\\
\hline
Datagram size ($\tau$)&  $1.18$~Mbit & $2,500$ tics\\
\hline
Period ($P$)& $1$ ms&$\simeq20,000$ tics\\
\hline
Bandwidth of links &  $10$~Gbps & -\\
\hline
Number of routes ($n$) & $8$ & -\\
\hline
\end{tabular}

\end{center}

\caption{Parameters of experiments on realistic network topologies}
\label{tab:params}
\end{figure}
    \paragraph{Short Routes}

We first consider routes that are shorter than $\tau$: a datagram cannot be contained completely in a single arc, which is common in our applications. We generate random star-routed networks by drawing uniformly at random the weights of the arcs of the fronthaul network in $[700]$, which corresponds to links of length less than $5$~km between a BBU and an RRH.

In the following experiment, we generate $10,000$ random instances of \pazl for a load of $1$ down to $0.4$. We represent, in Figure~\ref{fig:short}, the percentage of success of each algorithm as a function of the load. We make three experiments with $8$, $12$, and $16$ routes to understand the effect of the number of routes on the quality of our algorithms. A bound on the maximal success rate is given by the exhaustive search, which always finds a solution if there is one.

\begin{figure}[h]
\begin{center}
\includegraphics[width=0.9\textwidth]{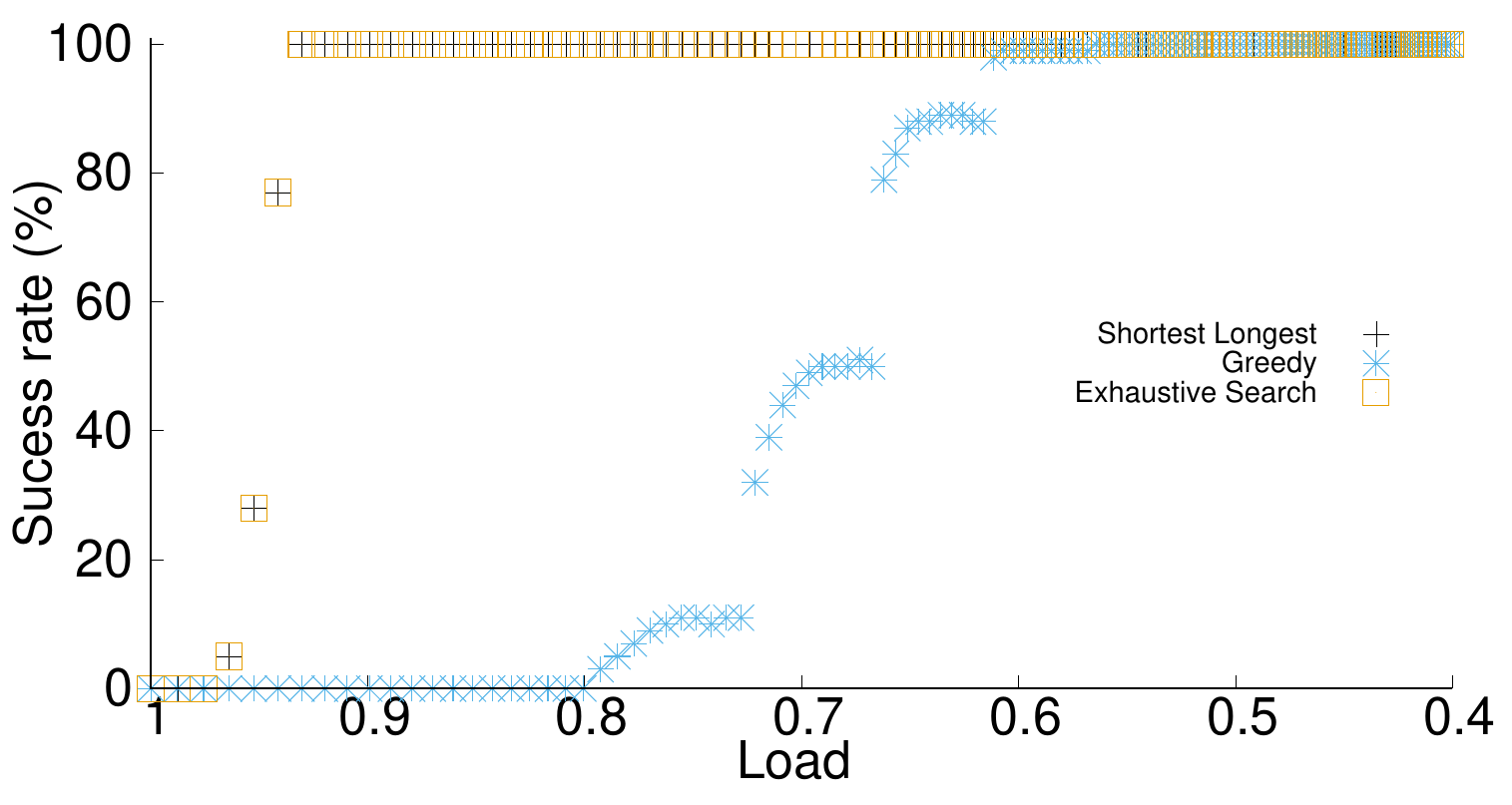}
\vspace{1cm}

\includegraphics[width=0.45\textwidth]{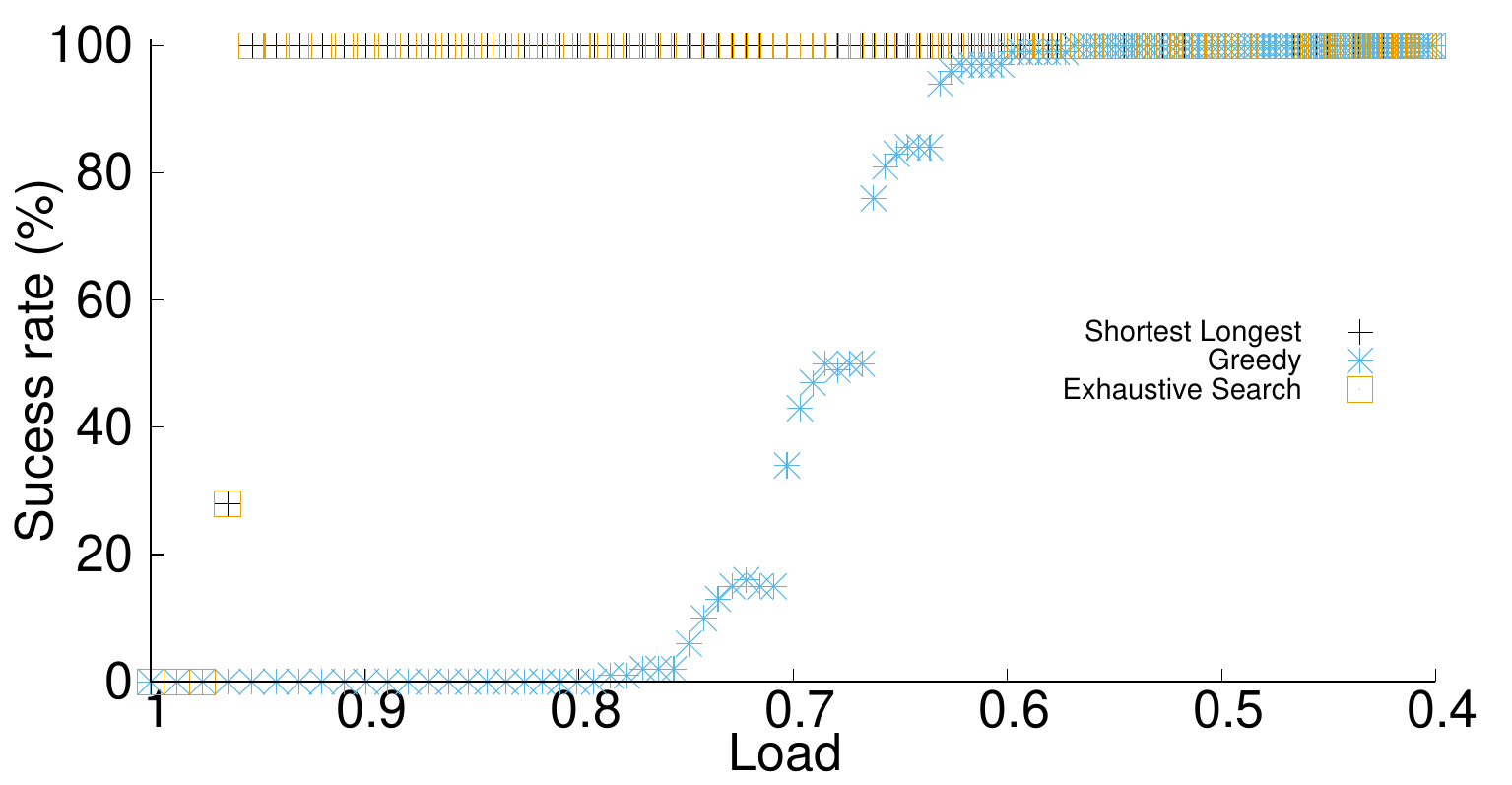}
\includegraphics[width=0.45\textwidth]{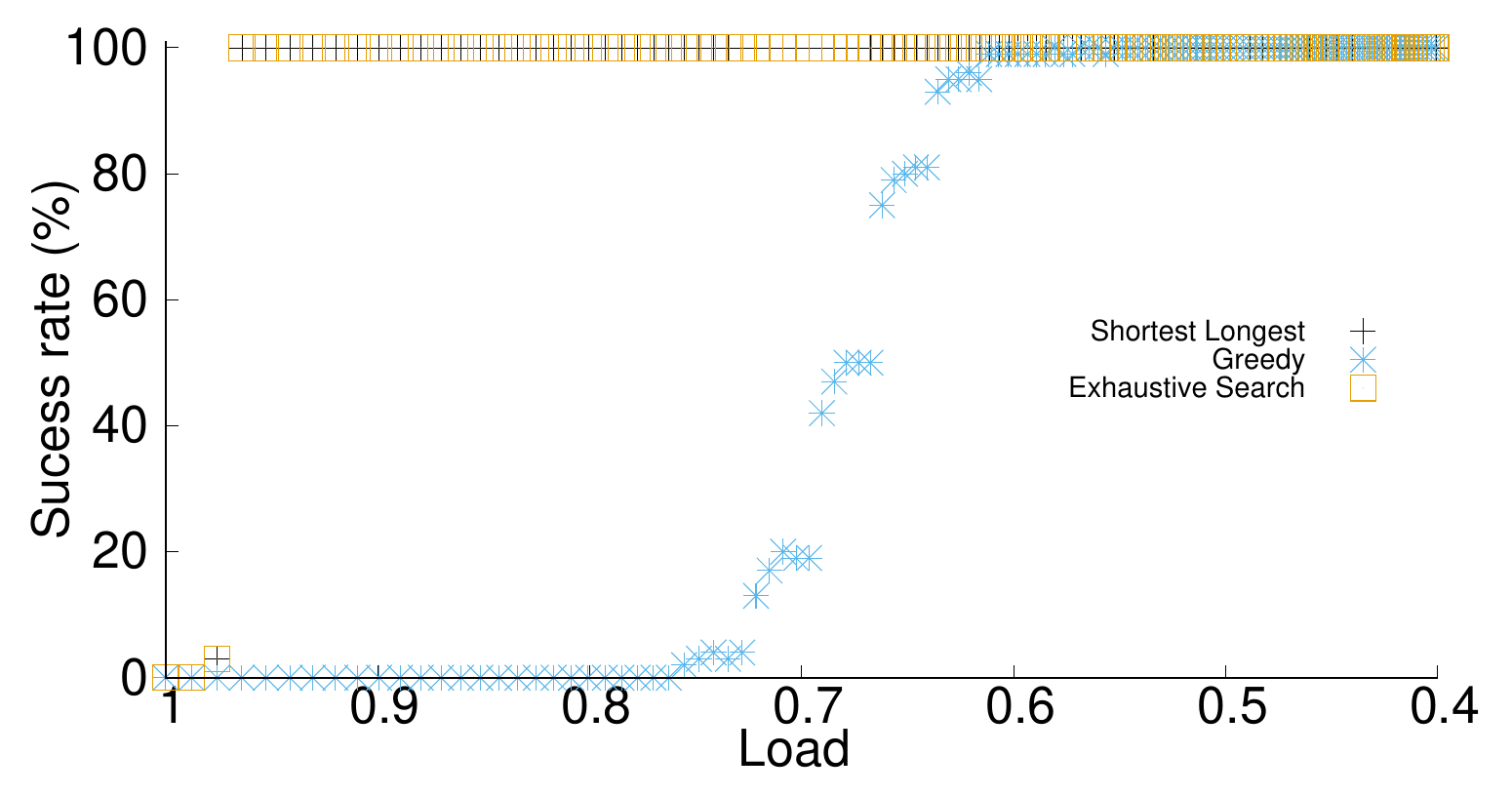}
\end{center}
\caption{Success rate of the three algorithms solving \pazl, for short routes and $8$ routes (top), $12$ routes (bottom left) and $16$ routes (bottom right)}\label{fig:short}
\end{figure}

First, \ESCA finds a solution even when the load is high. It justifies the idea to look for a bufferless assignment in this short routes regime.
It seems that increasing the number of routes makes the exhaustive search even more efficient, meaning that the more the routes, the more instances have a bufferless assignment.
Second, \shortestlongest is as good as the exhaustive search. While it was expected to be good with short routes (see Proposition~\ref{prop:SL}), it turns out to be optimal for all the random star routed networks we have tried. Therefore, we should use it in practical applications with short routes, instead of the exhaustive search, which is much more computationally expensive.

Finally, the greedy algorithm seems to always work when the load is less than $1/2$ and has a good probability to work up to a load of $2/3$, which is twice as good as the theoretical bound. The performance of \metaoffset seems to depend on the load only and not on the number of routes. There are discontinuities in the probability of success at several loads, which seem to smooth out when the number of routes increases. It can be explained by the fact that \metaoffset becomes better when decreasing the load makes the number of available meta-offsets larger. The number of meta-offsets increases when $\tau$ is added to the period, which is more frequent when there are more routes.
      
        \paragraph{Long routes}
      
      We now want to understand the performance of these algorithms when the length of the routes is unbounded. In this experiment, we fix the number of routes to eight. The weights of the arcs of the fronthaul network are drawn following a uniform distribution in $[P]$. We represent in Figure~\ref{fig:long} the percentage of success of each algorithm computed from $10,000$ random instances, for load from $1$ down to $0.4$.
\begin{figure}[h]

       \begin{center}
      \includegraphics[width=0.9\textwidth]{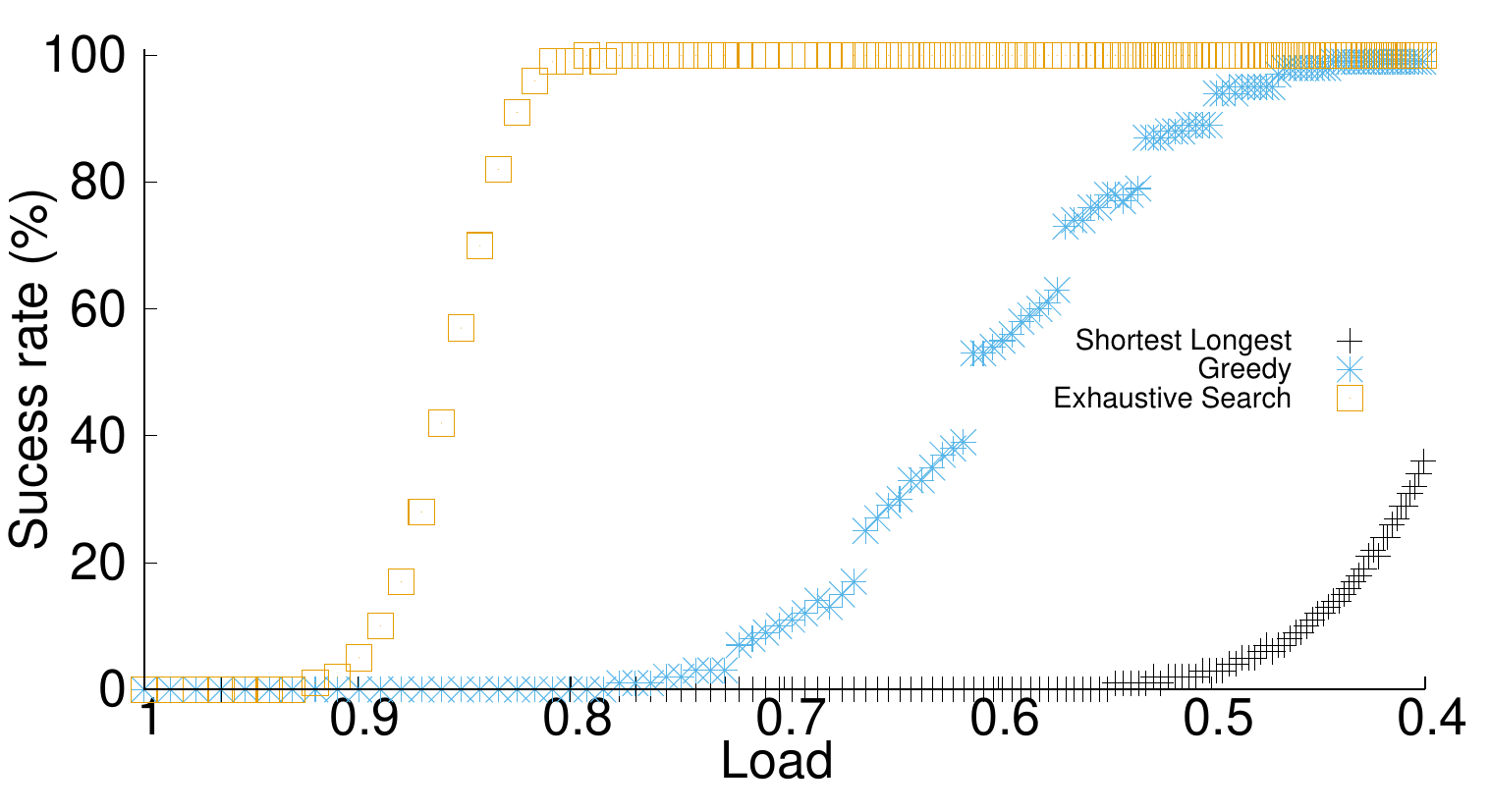}
      \end{center}
        
      \caption{Success rate of the three algorithms solving \pazl, for $8$ routes of arbitrary length}\label{fig:long}
     \end{figure}
      
       In this regime, the performance of \shortestlongest is abysmal because it depends on the difference of size between the longest and the smallest route, which is large here.  Algorithm \metaoffset has a performance not far from the short routes regime, which is expected since it does not directly depend on the size of the route. 
      
       When the load is larger than $0.5$, the \ESCA finds more solutions than \metaoffset which justifies its use. However, for loads larger than $0.8$ there are instances for which there is no solution to \pazl. It means that with long routes and high load, looking for a bufferless assignment is far too restrictive. This justifies the design of algorithms for the general \pall problem, which we present in the next section. We will test them on $8$ long routes and a load between $1$ and $0.8$, parameters for which, as shown here, there is not always a bufferless assignment.
      
       The computation time of \ESCA is bounded by $O(4^nn!)$ as shown in Theorem~\ref{th:FPT}, but it can be much better in practice, either because it finds a solution quickly or because a large part of the tree of compact assignments is pruned during the algorithm. We study the evolution of the running time  of the algorithm when $n$ grows in the following experiment. The weights of the arcs are drawn following a uniform distribution from $[P]$ and the load is set to $0.95$.  The table of Figure~\ref{fig:table} shows the time before \ESCA ends, for $8$ to $16$ routes, averaged over $100$ random star routed networks. This shows that for less than $20$ routes, which corresponds to all current topologies, the algorithm is efficient enough, but we should improve it further to work on more routes.
       
             \begin{figure}[h]
         \begin{center}
         \begin{tabularx}{\textwidth}{|l|X|X|X|X|X|}
    \hline
   $n$ & $8$ & $10$& $12$&$14$& $16$\\
    \hline
   Time (s) & $6.10^{-5}$&$8.10^{-4}$&$2.10^{-2}$& $0.4$& $11$\\
    \hline
      \end{tabularx}
      \end{center}
      \caption{Running time of \ESCA, averaged over 100 random instances}
      \label{fig:table}
      \end{figure}
      
         \section{Solving \texttt{PALL} on Star Routed Networks}\label{sec:PALL}
    
    In this section, we consider the more general \pall problem on star routed networks. The datagrams are allowed to wait in the BBUs to yield more possible assignments. Hence, we allow the transmission time of a route to be greater than the length of the route, but it must be bounded by its deadline.

	\subsection{Simple Star Routed Networks}

	Often in real networks, the lengths of the routes are not arbitrary and we may exploit that to solve \pall easily. For instance, all the weights on the arcs $(c_1,c_2)$ are the same if all the BBUs are in the same data center and all datagrams require the same time to be processed in the BBUs.
    Finding an assignment in that case is trivial: send all datagrams so that they follow each other without gaps in $c_1$. In the corresponding canonical routed network, one can set $o_i = i\tau$.  Since all arcs $(c_1,c_2)$ are of weight zero in this case, the intervals of time used in $c_2$ are the same as for $c_1$ and there is no collision in $c_2$.

	Another possible assumption would be that all deadlines are larger than the longest route, which happens when all RRHs are at almost the same distance to the shared link.

	 \begin{proposition}\label{prop:asym}
	Let $N = ({\cal R}, \omega)$ be a canonical star-routed network with $n$ routes, let $P \geq n\tau$ and let $d$ be a deadline function. Let $r_{n-1}$ be the longest route, and assume that for all $r\in {\cal R}$, $d(r) \geq \lambda(r_{n-1})$. Then, there is a $(P,\tau)$ assignment for $N$ and $d$ and it can be built in time $O(n)$.
	 \end{proposition}
      \begin{proof}
       The idea is to set the waiting times of all routes so their datagrams behave exactly as the datagram of $r_{n-1}$. The offset of the route $r_i$ is set to $i\tau$, which ensures that there is no collision in $c_1$ as soon as $P \geq n\tau$. The waiting time of the route $r_i$ is $w_i = \lambda(r_{n-1}) - \lambda(r_{i})$.
        
    The time at which the datagrams of $r_i$ arrives in $c_2$ is $t(r_i, c_2) = w_i + i\tau + \lambda(r_{i})$. Substituting $w_i$ by its value, we obtain $t(r_i, c_2) =  i\tau + \lambda(r_{n-1})$.
    Hence, there is no collision in $c_2$. We denote by $A$ the defined assignment. By definition of the transmission time, we have $TR(r_i,A) = w_i + \lambda(r_i) = \lambda(r_{n-1})$. By hypothesis, $d(r_i) \geq \lambda(r_{n-1})$, which proves that the assignment respect the deadlines.

	Finally, the complexity is in $O(n)$ since we have to find the maximum length of the $n$ routes, and the computation of each $w_i$ is done by a constant number of arithmetic operations.
     \end{proof}

     \subsection{Two Stage Approach}
     
      We may decompose an algorithm solving \pall on a star-routed network into two parts: first, set all the offsets of routes so that there is no collision in $c_1$ and then, knowing this information, find waiting times so that there is no collision in $c_2$ while respecting the deadlines.

First, we give several heuristics to choose the offsets, which are experimentally evaluated in Section~\ref{sec:resultsPALL}.
For the first two heuristics, we need to define the margin. The \textbf{margin} of a route $r$ in a routed network $N$, with a deadline function $d$, is $ d(r) - \lambda(r)$. The margin is a bound on the waiting time of a route in a valid assignment.

For all presented algorithms, we assume that the star-routed network is given in its canonical form.
We send the datagrams through $c_1$ in a compact way (no gap between datagrams). It means that for $n$ routes, denoted by $r_0, \dots, r_{n-1}$, the offsets are $o_i = \sigma(i) \times \tau$, for some permutation $\sigma \in \Sigma_n$. We consider the following orders $\sigma$:

\begin{itemize}
\item Decreasing Margin (DM): Decreasing order on the margin of the routes.
\item Increasing Margin (IM): Increasing order on the margin of the routes.
\item Decreasing Arc Weight (DA): Decreasing order on the weight of the arcs $(c_1,c_2)$.
\item Increasing Arc Weight (IA): Increasing order on the weight of the arcs $(c_1,c_2)$. This sending order yields a $(P,\tau)$ assignment in which the waiting times are zero if the period is large enough (see Proposition \ref{prop:SL}).
\end{itemize}

Alternatively, we propose to fix the offsets of the routes according to some random order.
If we pack the datagrams as previously, we call Random Order (RO), the heuristic of choosing an order
uniformly at random. We may also allow some time between two consecutive datagrams in $c_1$. The order of the routes in $c_1$ is still random, and we consider two variations. Either the time between two datagrams in $c_1$ is random, and we call this heuristic Random Order and Random Spacing (RORS) or the time between two consecutive datagrams is always the same, and we call this heuristic Random Order and Balanced Spacing (ROBS).

We call \textbf{W}aiting \textbf{T}ime \textbf{A}ssignment or \wta the problem \pall where the offsets of the routes are also given as input. A solution to \wta is a valid assignment such that the offsets coincide with those given in the instance.

 \bigskip

      \noindent {\bf Waiting Time Assignment }(\wta)

      \noindent {\bf Input:}  A routed network $N=({\cal R},{\cal B},\omega)$, integers $P$ and $\tau$, a deadline function $d$ and an offset $o_r$ for each route $r \in {\cal R}$.
      
      \noindent {\bf Question:} Does there exist a valid $(P,\tau)$ assignment $A$ of $N$ such that for all $r \in {\cal R}$, $TR(r,A) \leq d(r)$ and $A(r) = (o_r,w_r)$?
      \bigskip

   In the rest of the section, we study different methods to solve \wta either by polynomial-time heuristics or by an FPT algorithm. The methods to solve \wta are then combined with the heuristics we have described to fix the offsets of the routes, which yields an algorithm for solving \pall.  
   
   \subsection{Greedy Scheduling of Waiting Times}

   We consider an instance of the problem \wta, consisting of a canonical routed network, a deadline function and an offset for each route. The \textbf{release time} of a route is defined as the first time its datagram can go through $c_2$: for a route $r$ with offset $o_r$, it is $\lambda(r,c_2) + o_r$, it is the same as the arrival time in $c_1$, $t(r,c_1)$, and it is fixed in an instance of \wta.

    The first algorithm we propose to solve \wta is a greedy algorithm that sets the waiting times by prioritizing the routes with the earliest deadline to best satisfy the constraints on the transmission time. Since the network is in canonical form, $\omega(r,t_r) = 0$ for all routes $r$, thus choosing the earliest deadline is equivalent to choosing the route with the smallest margin.
    
    We call the algorithm \greedydeadline, and it works as follows. Set $t=0$ and $U = \cal{R}$. While there is a route in $U$, find $s \geq t$ the smallest time for which there is $r \in U$ with a release time less than or equal to $s$. If there are several routes in $U$ with a release time less than or equal to $s$, then $r$ with the smallest deadline is selected, and we set $w_r = s - \lambda(r,c_2)$, $t = s + \tau$ and $ U = U \setminus \{r\}$.

    This algorithm does not take into account the periodicity, which may create an assignment that is not valid. Let $r_0$ be the first route selected by the algorithm; then $t_0 = t(r_0,c_2)$ is the first time at which a datagram go through $c_2$.
	Then, if all routes $r$ are such that $t(r, c_2) \leq t_0 + P - \tau$, 
	then by construction, there is no collision on the central arc.
      However, if a route $r$ has $t(r, c_2)$ larger than $t_0 + P - \tau$, since we consider everything modulo $P$ to determine collision, it may collide with another route. Therefore, we correct \greedydeadline by this simple modification: $s \geq t$ is the smallest time for which there is $r \in U$ with a release time less than or equal to $s$ \emph{such that there is no collision if a datagram goes through $c_2$ at time $s$}. This rule guarantees that if \greedydeadline succeeds to set all waiting times, it finds a solution to \wta, as illustrated in Figure~\ref{fig:greedydeadline}. However, it can fail to find the value $s$ at some point because the constraint on collisions cannot be satisfied. In that case, \greedydeadline stops without finding a solution.
    
    \begin{figure}
          \begin{center}
   \begin{tabularx}{0.9\textwidth}{|c|X|X|X|X|X|X|}
    \hline
     Route& $0$ & $1$ & $2$& $3$ & $4$\\
    \hline
    Deadline & $10$ &$15$&$5$&$7$&$32$\\
    \hline
     Release time & $0$ &$2$&$3$&$16$&$17$\\
    \hline
    Waiting time & $0$ &$5$&$1$&$0$&$15$\\
    \hline
      \end{tabularx}
      
      \vspace{1cm}
      \includegraphics[width=0.9\textwidth]{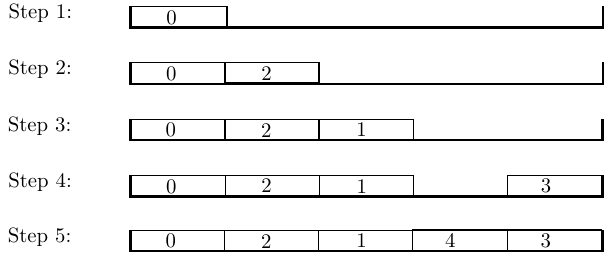}
      \caption{A run of \greedydeadline with $P = 20, \tau = 4$.}
           \label{fig:greedydeadline}
      \end{center}
      
    \end{figure}

    The complexity of \greedydeadline is in $O(n\log(n))$, using the proper data structures. The set of routes $\cal{R}$ must be maintained in a binary heap to be able to find the one with the smallest deadline in time $O(\log(n))$. To deal with the possible collisions, one maintains a list of the intervals
    of time during which a datagram can go through $c_2$. When the waiting time of a route is fixed, an interval is split into at most two intervals in constant time. During the whole algorithm, each element of this list is used at most twice, either when doing an insertion or when looking for the next free interval. Hence, the time needed to maintain the list is in $O(n)$. 
  
     \subsection{Earliest Deadline Scheduling}\label{sec:wtaheuristic}

     If we forget periodicity, the problem \wta is similar to the classical \emph{single processor scheduling} problem: Given a set of tasks with \emph{release times} and \emph{deadlines}, schedule all tasks on a single processor; that is, choose the time at which they are executed, so that no two tasks are scheduled at the same time. A task is always scheduled after its release time and it must be dealt with before its deadline. This scheduling problem can be solved in polynomial time when \emph{jobs are unit-time}~\cite{simons1978fast}, that is, when they are run for the same time.
     The problem is $\NP$-complete~\cite{lenstra1977complexity} when the running times of the jobs are different. Several algorithms solve this problem
     for all tasks with the same running time~\cite{simons1978fast,carlier1979probleme,garey1981scheduling} and the fastest one is in time $O(n\log(n))$, where $n$ is the number of jobs.
     Note that all algorithms also minimize the makespan, that is the time at which the last job is scheduled, a property useful for our work. 

     The problem \wta is the same as the single processor scheduling problem with unit-task but adding constraints arising from
     the periodicity: The tasks are the routes, the size of a datagram is the running time of a task, 
     the release time and the deadline are the same in both models, when we assume the star routed network to be canonical.
	 Let us call \textbf{M}inimal \textbf{L}atency \textbf{S}cheduling, denoted by \MLS, the algorithm which transforms an instance of \wta into an instance of the described scheduling problem to solve it in time $O(n\log(n))$ using the algorithm of~\cite{garey1981scheduling}.
     
     Recall that $t(r,c_2)$ is the time at which the datagram of $r$ goes through $c_2$. Let us denote by $t_{min}$ and $t_{max}$ the smallest and largest value of $t(r_i,c_2)$ for all $i \in[n]$. When \MLS finds an assignment $A$, it always satisfies $PT(r) \leq d(r)$ for all $r$. Moreover, by construction \MLS schedules the datagrams without collision if we forget about the periodicity (each route sends only one datagram). Let us assume that $t_{max}- t_{min} \leq P -\tau $; then all datagrams go through $c_2$ during an interval of time less than $P$. Hence, when we compute potential collisions modulo $P$, all the relative positions of the datagrams stay the same, which implies there is no collision. However, if $t_{max}- t_{min} > P -\tau $, then computing $t(r_i,c_2)$ modulo $P$ for all $i$ may reveal some collisions. Since the scheduling algorithm minimizes $t_{max}$, it tends to find small values for $t_{max} - t_{min}$ and \PMLS may succeed in finding a valid assignment (as shown in Section~\ref{sec:resultsPALL}), but not for all instances. 
     
     We now present a variant of the previous algorithm that we call
     \textbf{P}eriodic \textbf{M}inimal \textbf{L}atency \textbf{S}cheduling, denoted by \PMLS. The aim is to deal with the periodicity, by modifying the instance without changing the assignments, so that there is a better chance of finding a solution with $t_{max}- t_{min} \leq P -\tau $.  If an instance has a valid assignment, we can guarantee that one route has a waiting time of zero in some valid assignment. 
     
      Recall that $t(r,c_1)$ is the release time of $r$. The \PMLS algorithm runs, for each route $r \in \cal{R}$, the algorithm \MLS on an instance defined as follows. Subtract $t(r,c_1)$ to all the release times and deadlines of the routes to obtain an equivalent problem. Therefore, $t(r,c_1)$ is zero in the instance we build, and the waiting time $w_r$ is set to zero. Hence, the datagram of $r$ goes through $c_2$ at time $0$ and $t_{min} = 0$.
     Then, as in Proposition~\ref{prop:canonical}, the instance is modified so that all release times are in $[P-\tau]$. Each release time $t(r_i,c_1)$ is replaced by $t(r_i,c_1) \mod P$ and $d(r_i) = d(r_i) - (t(r_i,c_1) - (t(r_i,c_1) \mod P))$. Furthermore, if the release time of a route $r$ is between $P-\tau$ and $P$, we set it to $0$ and $d(r) = d(r) - P$.  The deadline of each route is set to the minimum of its deadline and $P - \tau$. Hence, if \MLS finds a solution for such a modified instance, by construction of the instance, we have $t_{max} \leq P -\tau $. Since $t_{min} = 0$, the assignment is valid. \PMLS returns the first assignment it finds when running \MLS for some $r \in \cal{R}$.

     The instance of \wta we have defined in this transformation is equivalent 
     to the original instance, except we have fixed the waiting time of 
     $r$ to be zero. If there is some valid assignment, then at least one route has a waiting time of zero. Hence, when \MLS finds an assignment, \PMLS also finds one. \MLS is used at most $n$ times, thus the complexity of \PMLS is in $O(n^2\log(n))$. Note that \PMLS is a heuristic and may fail to find a solution even if it exists. It is the case when, for the $n$ modified instances, there is no solution with times $t(r_i,c_2)$ using an interval of time less than $P$ in $c_2$.

\subsection{FPT algorithms for \texttt{WTA} and \texttt{PALL}}

As a warm-up, we give a simple FPT algorithm for \wta, which is practical,
and then we build on it to give a more complicated FPT algorithm for \pall. Unfortunately, the dependency on $n$, the number of routes in the second algorithm, is too large to be useful in practice. 

\begin{theorem}\label{th:braFPT}
$\wta \in \FPT$ over star-routed networks when parametrized by the number of routes.
\end{theorem}
\begin{proof}
 Consider an instance of \wta, which is given as a release time and a deadline for each route.
 We show that we can build a set of instances such that one of these instances has a valid assignment if and only if the original instance has a valid assignment.

  As for \PMLS, for each route $r$, we consider the instance where $r$ has release time and waiting time zero ($t(r,c_1) = w_r = 0$). The release times and deadlines of all routes are modified so that all release times are less than $P$ as in the transformation described for \PMLS. If there is an assignment such that $t_{max} < P-\tau$, then the periodicity does not come into play for this assignment and the algorithm \MLS will find the assignment as explained in Section~\ref{sec:wtaheuristic}.

 If there is a valid assignment for an instance with the previously stated properties,
 then there is a valid assignment satisfying for all $i$, $t(r_i,c_2) \leq 2P - \tau$.  
 Indeed, if there is a $i$ such that $t(r_i,c_2) \geq 2P$ in an assignment, then we have 
 $w_i = t(r_i,c_2) - \lambda(r_i,c_2) \geq P$. Hence, we can set $w_i = w_i -P \geq 0$ and we still have 
 a valid assignment. Moreover, for all $r_i \neq r$, it is not possible that $2P-\tau < \lambda(r_i,c_2) \leq 2P$, since it implies a collision between $r$ and $r_i$.

From an instance $I$, with the properties of the first paragraph, we define a new instance $I'$ whose valid assignments are a subset of the ones of $I$. Moreover, one of the valid assignments of $I'$ satisfies that, for all $i \in [n]$, $t(r_i,c_2) \leq P - \tau$ and is thus found by \MLS. 
Let us now consider $A$ a valid assignment of $I$; we can assume that, for all $i \in [n]$, $t(r_i,c_2) \leq 2P - \tau$. Let $S$ be the set of routes $r_i$ such that  $P - \tau < t(r_i,c_2) \leq 2P - \tau$. The instance $I'$ is defined by changing, for all routes $r \in S$, $t(r,c_1)$ and $d(r)$ to $t(r,c_1) - P$ and $d(r) - P$. Then, by construction, $A$ is also a valid assignment of $I'$. Assignment $A$ as a solution of $I'$, satisfies $t(r_i,c_2) \leq P - \tau$ for all $i\in [n]$. 

The FTP algorithm is the following: for each route, $r$ build a modified instance as in $\PMLS$.
Then, for each subset $S$ of routes, remove $P$ to the release time and the deadline of each route in $S$ and run \MLS on the instance so modified. If there is a valid assignment, then we have proved that there is some $S$, such that the instance built from $S$ has a valid assignment with $t(r_i,c_2) \leq P - \tau$ for all $i\in [n]$. Hence, \MLS finds a valid assignment for this instance.
\end{proof}

The algorithm of Theorem~\ref{th:braFPT} has a complexity of $O(2^nn^2\log(n))$. If we consider some valid assignment, the routes $r$ with $t(r,c_2) > P$, must satisfy $t(r,c_2) > P + \tau$ to avoid collision with the first route. Hence, the deadline of these routes must be larger than $P + \tau$. These routes are exactly those that must be put in $S$. Hence, we can enumerate only the subsets of routes with a deadline larger than $P + \tau$. In practice, only $k$ routes have a deadline larger than $P + \tau$ with $k << n$, and we need only to consider $2^k$ subsets. Let us call this algorithm \textbf{A}ll \textbf{S}ubsets \PMLS, and let us denote it by \ASPMLS.

\begin{theorem}\label{th:pallFPT}
$\pall \in \FPT$ over star-routed networks when parameterized by the number of routes.
\end{theorem}
\begin{proof}
 Consider an instance of \pall with a valid assignment. We characterize such a valid assignment by a set of necessary and sufficient linear systems it must satisfy.  These conditions are expressed in terms of the values $t(r,c_1)$ and $t(r,c_2)$, and choosing these values is equivalent to choosing the offsets and the waiting times, that is, choosing an assignment.

First, we assume the star-routed network is canonical. Hence, there is a valid assignment $A$, such that for all routes $r \in \cal{R}$, $0 \leq t(r,c_1) < P -\tau$ and $0 \leq t(r,c_2) < 2P-\tau$. 
By definition $t(r,c_2) = t(r,c_1) + \omega(r,c_2) + w_r$. Since a waiting time is non-negative, we have $t(r,c_2) \leq t(r,c_1) + \omega(r,c_2)$. 
Now, let $S$ be the set, defined as in Theorem~\ref{th:braFPT}, of the routes $r$ such that  $P - \tau < t(r,c_2) \leq 2P - \tau$. We want to guarantee that for $r \in \cal{R}$, $t(r,c_2) \in [P-\tau]$.
To do that, we replace the inequality $t(r,c_2) \leq t(r,c_1) + \omega(r,c_2)$ by $t(r,c_2) \leq t(r,c_1) + \omega(r,c_2) - P$ and $d(r)$ by $d(r) - P$ for all $r \in S$. The presented linear constraints now depend on $S$, which itself depends on $A$.

 Let $\sigma$ and $\sigma'$ be two permutations of $\Sigma_n$ such that $\sigma$ is the order 
 of the routes $r_0,\dots, r_{n-1}$ according to the value $t(r,c_1)$ and $\sigma'$ according to the value $t(r,c_2)$.  Since all $t(r,c_1)$ and $t(r,c_2)$ are in $[P-\tau]$, we have $t(r,c_1) = t(r,c_1) \mod P $ and $t(r,c_2) = t(r,c_2) \mod P $. Hence, we can express the constraints on the absence of collision between routes by adding the following inequalities to the ones of the previous paragraph:
 
 \begin{itemize}
 	\item for all $i < n-1$, $t(r_{\sigma_{i}},c_1) \leq r_{\sigma_{i+1}},c_1 + \tau)$ (no collision in $c_1$)
 	\item for all $i < n-1$, $t(r_{\sigma'_{i}},c_2) \leq r_{\sigma'_{i+1}},c_2 + \tau)$ (no collision in $c_2$)
 	\item for all $i < n$,  $t(r_{i},c_2) < d(r_i)$ (deadline respected)
 \end{itemize}

Consider now the system of inequalities $E_{S,\sigma,\sigma'}$ we have built from $A$.
The values $t(r,c_1)$ and $t(r,c_2)$ given by $A$ satisfy the system by construction. 
Moreover, any solution to these inequalities yields a valid assignment because the inequalities guarantee 
that there is no collision, that the offsets and the waiting times are non-negative, and that all routes meet their deadlines. However, a solution of $E_{S,\sigma,\sigma'}$ may be rational, while offsets and waiting times must be integers. We use the following simple fact: $x + e_1 \leq y + e_2$ implies $\lceil x \rceil + e_1 < \lceil y \rceil + e_2$ when $e_1$ and $e_2$ are integers. Since all inequalities of $E_{S,\sigma,\sigma'}$ have this form, if we take the upper floor of the components of a solution, it is still a solution of $E_{S,\sigma,\sigma'}$ with \emph{integer} values. As a consequence, any solution to $E_{S,\sigma,\sigma'}$ yields a valid assignment of the original instance of \pall.

The algorithm to solve $\pall$ is the following. Build $E_{S,\sigma,\sigma'}$ for all triples $(S,\sigma,\sigma')$. Then, solve each linear system, and if it admits a solution, convert it back into a
valid assignment of the instance of \pall by rounding. There are $2^n$ sets $S$ and $n!$ orders $\sigma$. Thus, $2^n(n!)^2$ systems with $2n$ variables and a bit size of the same order as the original instance are solved at most. Since solving each system can be done in time polynomial in the size of the instance, it proves that the algorithm is $\FPT$ in $n$. Moreover, it always finds a valid assignment if there is one, since we have shown that from a valid assignment, we can find $(S,\sigma,\sigma')$ for which the values associated with $A$ satisfy $E_{S,\sigma,\sigma'}$.
\end{proof}

    \subsection{Experimental Evaluation}
    \label{sec:resultsPALL}

    In this section, we set the number of routes to $8$ to make comparisons with the results of Section~\ref{sec:exp_PAZL} easier (the datagram size and the bandwidth of the links also remain the same). To not overload the section, we choose to draw the weights of the arcs of the fronthaul network uniformly in $[P]$ because it is harder than considering small routes and this highlights the performance differences of the algorithms we propose. At the end of this section, we explore even harder distributions of weights, which correspond to the realistic cases of RRHs connected to one or two data centers. 

    We use \emph{the same deadline} for all routes, which is the most common constraint when modeling a C-RAN problem: all RRHs have the same latency constraint. We define the {\bf margin} of an instance as the margin of the longest route of the routed network. Since all routes have the same deadline, it is the difference between the length of the longest route and the deadline. Note that the margin is defined before making the network canonical since this operation makes the deadlines all different and thus breaks the semantics of the margin.

	The margin represents the \emph{logical latency} that can be used by the communication process we try to find, abstracting away the physical length of the network. For a given star-routed network, it is equivalent to set a margin or to set all the deadlines to the length of the longest route plus the margin. However, to compare different star routed networks with different lengths of routes, the margin is more relevant than the deadline. Hence, in all the following experiments, we give the success rates of different algorithms, for margins from $0$ to $3,000$ tics, to characterize how much logical latency is needed in the assignments we find. We look at two different regimes, a medium load of $0.8$ and a high load of $0.95$. Considering smaller loads is not relevant since we can solve the problem using bufferless assignments, as shown in Section~\ref{sec:exp_PAZL}.

    \paragraph{Finding the best first stage heuristic}

   	We first try to understand what is the best choice of heuristics for the first stage of the algorithm. The first stage is followed in this experiment by \greedydeadline, the simplest algorithm to solve \wta. In Figure~\ref{fig:success1random}, the success rate of all possible first-stage heuristics to solve \pall is given as a function of the margin of the instances. The success rate is an average computed over $10,000$ random star-routed networks.

\begin{figure}[h] 
\begin{center} 
  \begin{minipage}[b]{0.45\linewidth}

  \includegraphics[height=5cm]{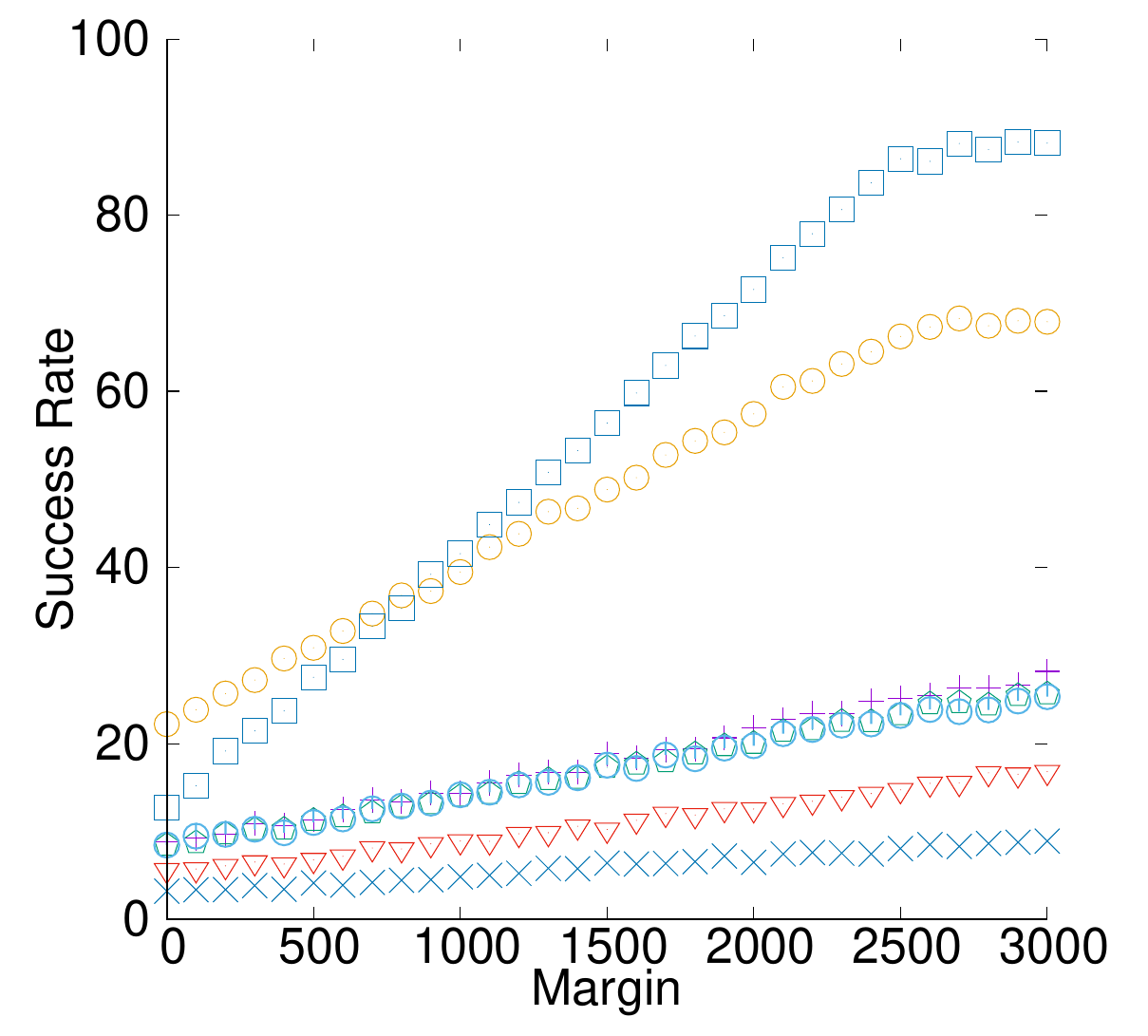}
  \end{minipage}
  \begin{minipage}[b]{0.54\linewidth}
  \includegraphics[height=5cm]{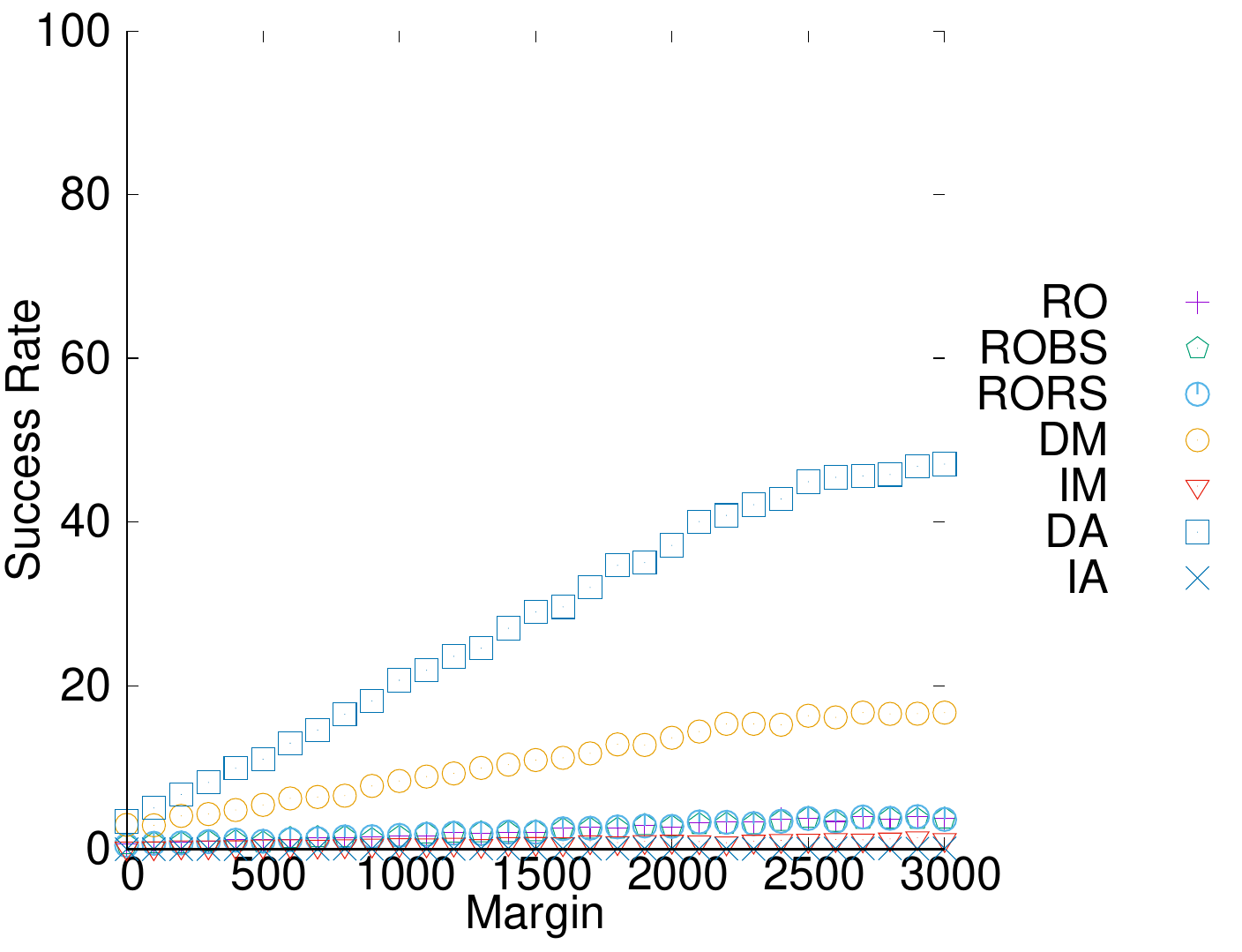}

\end{minipage}
      \caption{Success rate of different sending orders; load of $0.8$ at left and load of $0.95$ at right.}
           \label{fig:success1random}
           \end{center}

     \end{figure}

  According to our experiments, policy IA, which consists of sending the datagrams by increasing order on the length of the arcs $(c_1,c_2)$, does not work well. It corresponds to the policy of Proposition~\ref{prop:SL}, which we already know to be bad for \pazl when the routes are long, as in this experiment. Sending in decreasing order by the margin of the routes (DM) or by the length of the arcs $(c_1,c_2)$ (DA) works better, and it seems that DA is better than DM, especially in a loaded network.

Sending the datagrams using a random order does not perform well,
but is still better than IM and IA, which shows that the latter are poor choices for the first stage of our algorithm. The interest in using a random order is that we can draw many of them. In Figure~\ref{fig:success1000random}, the same experiment is made for the three random heuristics, but we now draw $1,000$ different random orders and solve each induced instance of \wta using \greedydeadline. The algorithm is considered to succeed as soon as a valid assignment is found for one order. Each random order drawn is used for RO, RORS, and ROBS to make the comparison fairer.

\begin{figure}[h] 
  \centering
  \begin{minipage}[b]{0.45\linewidth}

  \includegraphics[height=5cm]{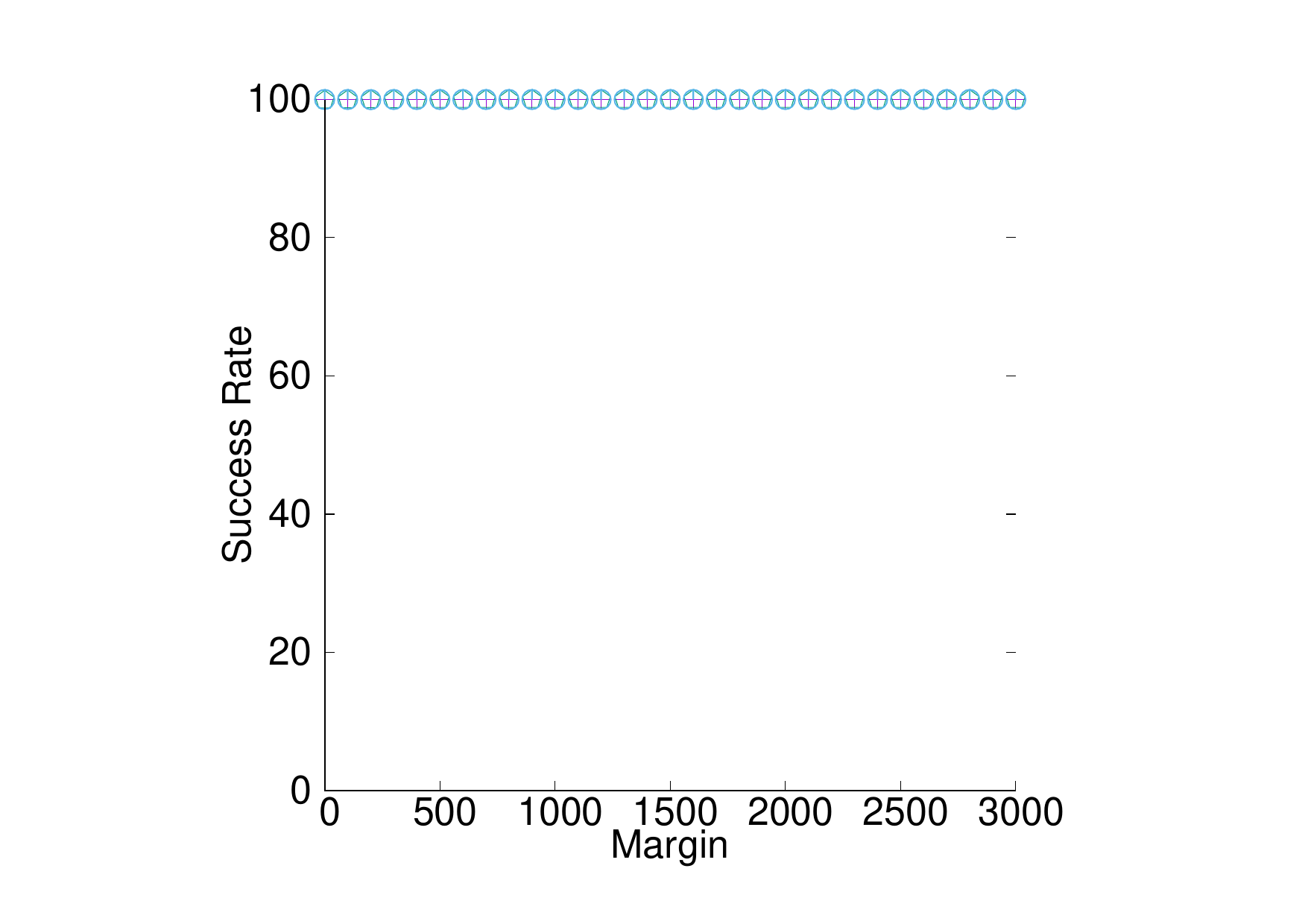}
  \end{minipage}
  \begin{minipage}[b]{0.54\linewidth}
  \includegraphics[height=5cm]{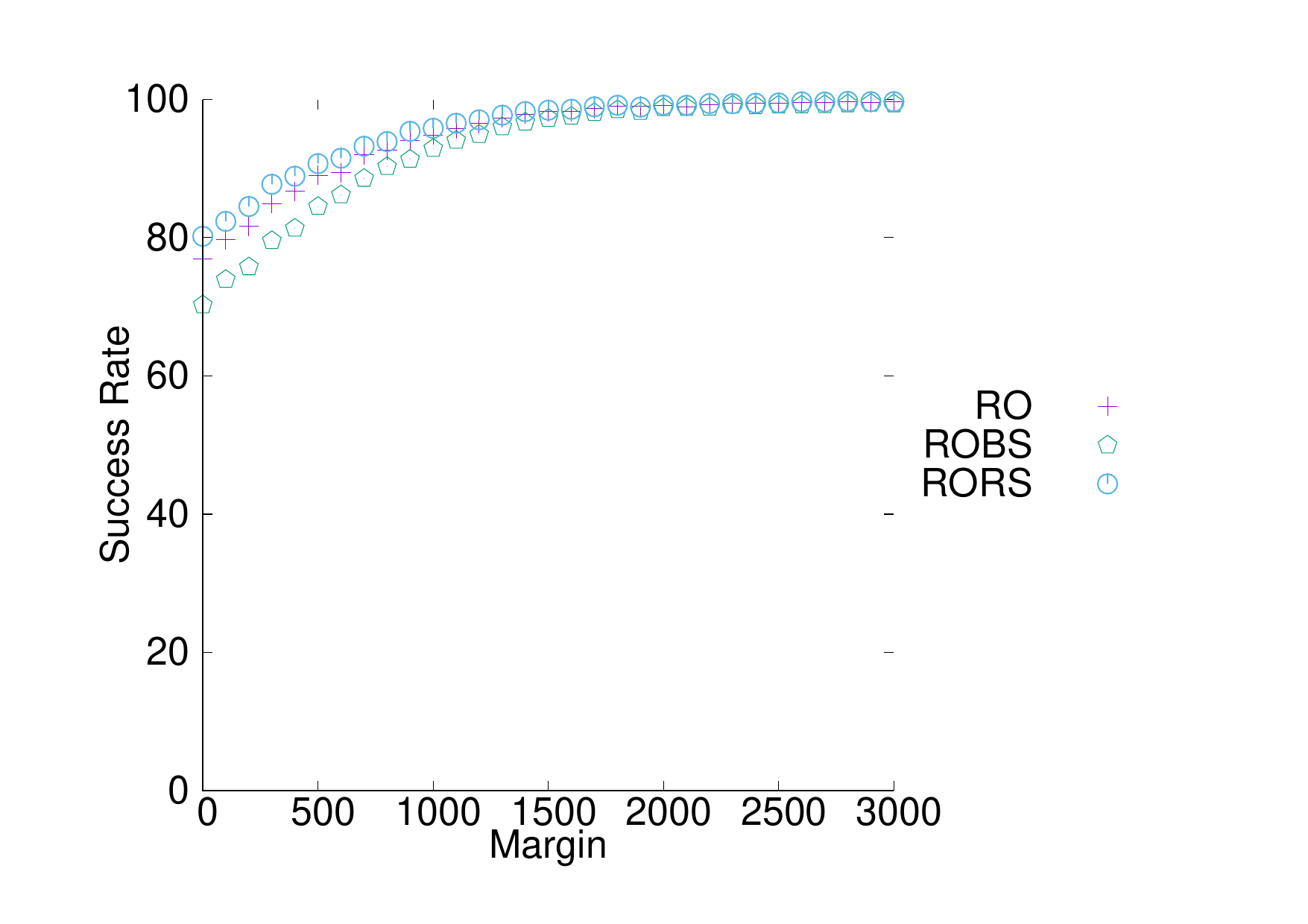}

\end{minipage}
       \caption{Success rate of different sending orders with the random orders generated $1000$ times; load of $0.8$ at left and load of $0.95$ at right.}
      \label{fig:success1000random}
          \end{figure}

     Our algorithms find assignments with a margin $0$ for many instances with a load of $0.95$ and long routes. This is not possible when only looking for bufferless assignments (see Section~\ref{sec:exp_PAZL}). It justifies the interest of studying \pall and not only \pazl.
  
     Using many random orders is much better than DA, the best policy using one specific order. 
     With a load of $0.95$, a solution is found with a margin of $0$ most of the time. The three random order policies have similar performances, but RORS has a slightly better success rate than the other two, under high load and small margin. Hence, in the following experiments, we always draw $1,000$ random orders using the policy RORS to set the offsets of the assignments.
    
 \paragraph{Comparison of the algorithms solving \texttt{WTA}}

We now compare the performances of the four different algorithms used in the second stage to set the waiting times. Since \greedydeadline already finds assignments with a margin $0$ under a mild load of $0.8$, it is more interesting to focus on the behavior of the algorithms under a high load of $0.95$. In Figure~\ref{fig:success21000}, we represent the success rate of the four algorithms with regard to the margin, computed over $10,000$ random star routed networks generated with the same parameters as previously.

\begin{figure} [h]
\begin{center}
\includegraphics[width=0.8\textwidth]{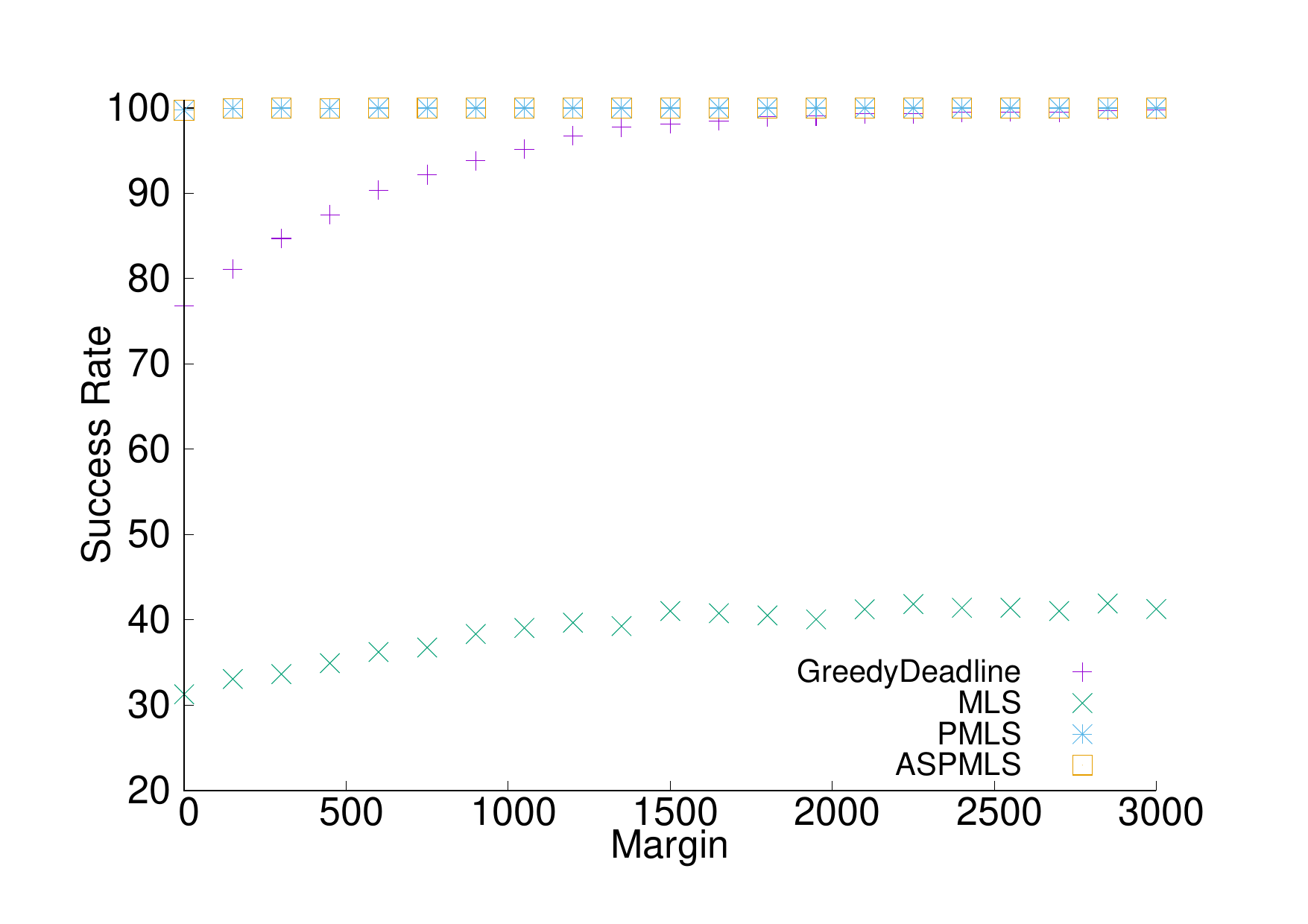}
\end{center}
\caption{Success rate of four algorithms solving \pall, load of $0.95$.}
\label{fig:success21000}
\end{figure}

As explained in Section~\ref{sec:wtaheuristic}, the \MLS algorithm does not consider periodicity while computing the waiting times, unlike the other algorithms for \wta. This explains why \MLS performs poorly, worse than \greedydeadline, \PMLS, and \ASPMLS, and shows that \emph{taking into account the periodicity} is fundamental.
\greedydeadline has close to a $100\%$ success rate for margins larger than $1,500$ while \PMLS and \ASPMLS algorithms find a solution for more than $99\%$ of the random instances, even \emph{with a margin $0$}. In other words, for very high load and no margin, there are very few instances for which we do not find an assignment. With a margin of $300$, which corresponds to about $15\mu$s of additional delay with the chosen parameters, we always find a solution.

It turns out that the performances of \PMLS and \ASPMLS are almost identical. Even with a load of $1$ and a margin of $0$, we have to draw $100,000$ random instances before finding one that can be solved by \ASPMLS and not by \PMLS. Since \ASPMLS is of exponential complexity in $n$, it may not be relevant to use it within the parameters of this experiment. To verify that, we present the computing time of \PMLS and \ASPMLS for different instance sizes. To stress the algorithms, we set the margin to $0$ and the load to $0.95$. The table of Figure~\ref{fig:tps_fpt} shows the computation times of \PMLS and \ASPMLS, averaged over $1,000$ instances. Recall that both \PMLS and \ASPMLS use the same first stage, which produces $1,000$ instances of \wta, using the policy RORS.
     
          \begin{figure}[h] 
       \begin{center}
   \begin{tabularx}{0.8\textwidth}{|c|X|X|X|X|X|X|}
    \hline
    \# routes& $8$ & $12$ & $16$& $20$ & $24$\\
    \hline
    \ASPMLS (ms) & $1.88$ &$5.98$&$47.75$&$209.2$&$1815$\\
    \hline
     \PMLS (ms) & $0.07$ &$0.08$&$0.09$&$0.10$&$0.12$\\
    \hline
    Ratio & $27$ &$78$&$523$&$2122$&$14882$\\
    \hline
      \end{tabularx}
      \end{center}
   \caption{Computation time for \PMLS and \ASPMLS, as a function of the number of routes}
        \label{fig:tps_fpt}
     \end{figure}

  The complexity of both algorithms depends on the number of routes. As shown in Figure~\ref{fig:tps_fpt}, the time complexity of \PMLS seems linear on \emph{average}, while its theoretical worst-case complexity is roughly quadratic. As expected, \ASPMLS scales exponentially with the number of routes. Both algorithms are fast enough for instances of at most $20$ routes, but for $40$ routes or more \ASPMLS becomes too slow. Since \ASPMLS rarely finds a solution when \PMLS does not and is much slower, one should prefer to use \PMLS. 

    When evaluating the computing time of our method, we should take into account how many random orders are drawn. In previous experiments, we have drawn $1,000$ random orders, which may be $1,000$ time slower than using a single fixed order. There is a trade-off between the number of random orders and the success rate. We investigate the success rate of our algorithms with regard to the number of random orders drawn, a load of $0.95$, and a margin of $0$. The table of Figure~\ref{fig:randomdrawing} presents the success rate for different numbers of sending orders, averaged over $10,000$ instances, for \greedydeadline, \PMLS, and \ASPMLS.

         \begin{figure}[h] 
       \begin{center}
   \begin{tabularx}{0.8\textwidth}{|c|X|X|X|X|X|X|}
    \hline
    \# orders& $1$ & $10$ & $100$& $1,000$& $10^{4}$&$10^{5}$\\
    \hline
    \greedydeadline & $0.55$ &$6.05$&$35.44$&$77.43$&$90.1$&$92.4$\\
    \hline
    \PMLS & $82.04$ &$98.84$&$99.71$&$99.80$&$99.83$&$99.83$\\
    \hline
    \ASPMLS & $91.33$&$99.17$&$99.72$&$99.80$ &$99.83$&$99.83$\\
    \hline
      \end{tabularx}
      \end{center}
   \caption{Success rates function of the number of random orders drawn in the first stage of the three algorithms}
        \label{fig:randomdrawing}
     \end{figure}

	First, observe that the better the algorithm to solve $\wta$ is, the fewer random orders it needs in stage one to achieve its best success rate. In particular, \ASPMLS has better results than \PMLS for less than $1,000$ random orders, but not beyond. This further justifies our choice to draw $1,000$ random orders to obtain the best success rate within the shortest time.

	The number of different orders is $7!= 5,040$ since we have $8$ routes and the solutions are invariant up to a circular permutation of the order. Hence, for $8$ routes, it is possible to test every possible order. However, the computation time of this exhaustive method scales badly with $n$. The fact that \PMLS and \ASPMLS already have high success rates for $10$ random orders hints that even for a larger number of routes, drawing $1000$ random orders is sufficient to obtain good assignments.

     \paragraph{Harder Topologies}
     
    Previous experiments use instances with weights of arcs uniformly drawn from a large interval. However, it is quite natural to consider that most routes are of roughly the same length or can be arranged in two groups of similar length when the fronthaul network involves one or two data centers.
    
		  By Proposition~\ref{prop:asym}, there is an assignment with a margin equal to the maximum difference
    between the sizes of the routes. Hence, if all routes have almost the same size, the needed margin is small. If the routes are drawn uniformly in a large interval, then the expected difference between the longest route and the second longest route is large. This difference can be seen as a free waiting time for most routes, hence we expect to need little margin in this regime too. As a consequence, the harder instances should be for routes with length drawn from an interval of moderate size compared to the period.

  	Figure \ref{fig:2grp} shows the probability of success of \PMLS  over $10,000$ instances as a function of the margin. In the top experiment, the weights of the arcs are drawn from $[0,I]$, where $I$ goes from $0$ to $6400$. As expected, the success rate decreases when the size of the interval increases until $I = 1600$ and then increases again. In the most difficult settings, only $78\%$ of the instances can be solved with a margin $0$, and we need a margin of $1,900$ to ensure that \PMLS always finds a solution. Results for \ASPMLS are not shown since they are the same as for \PMLS, even on these hard instances.

 	 We do the same experiment at the bottom of the figure, except that the weights of arcs of half of the routes are drawn from $[I]$, and the weights of the other half are drawn from $[P/2,P/2 + I[$. Remember that $P = 20.000$. The situation is the same as for the previous experiment but with better success rates, hence the case of two data centers seems simpler to deal with.
  
           \begin{figure}

       \begin{center}
      \includegraphics[width = 0.9\linewidth]{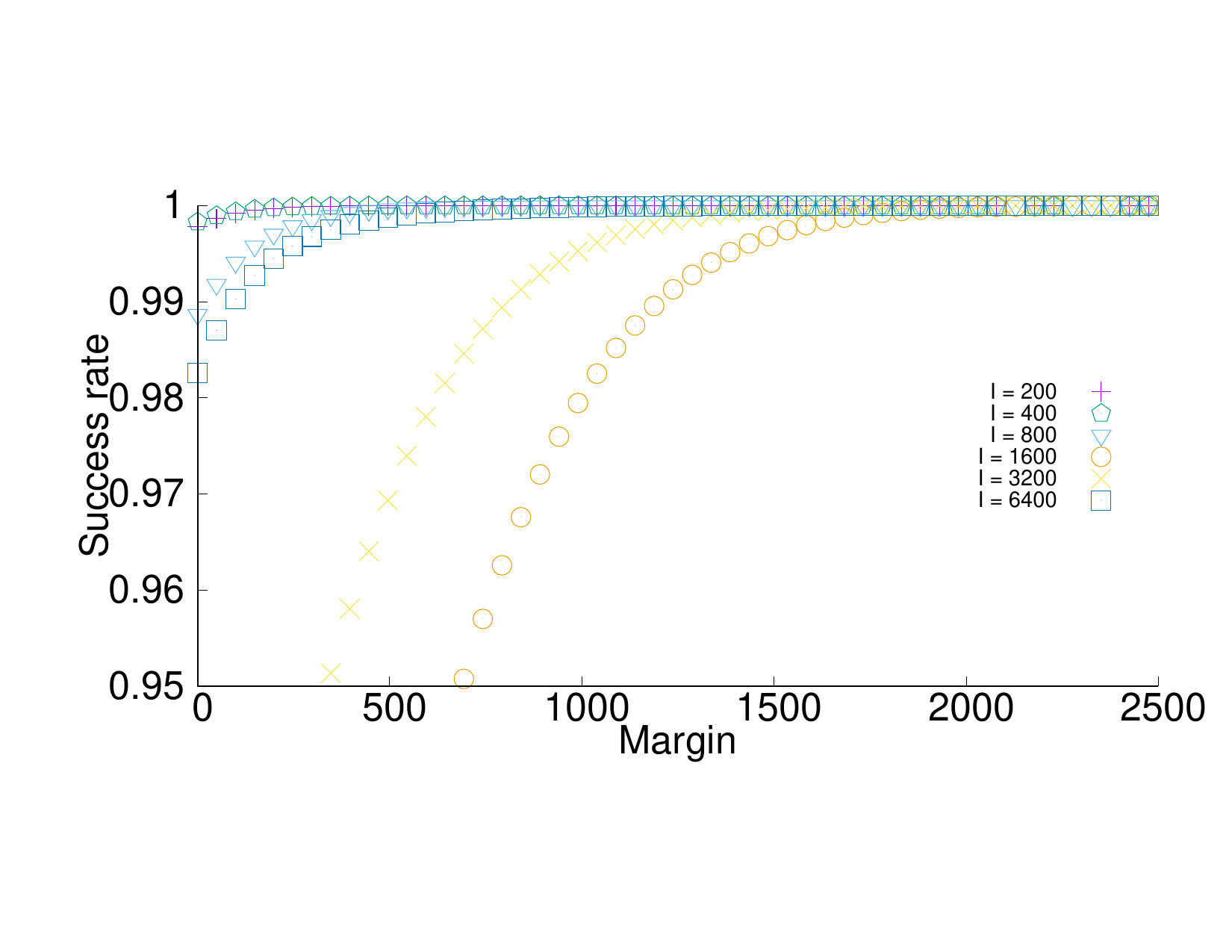}

      \includegraphics[width = 0.9\linewidth]{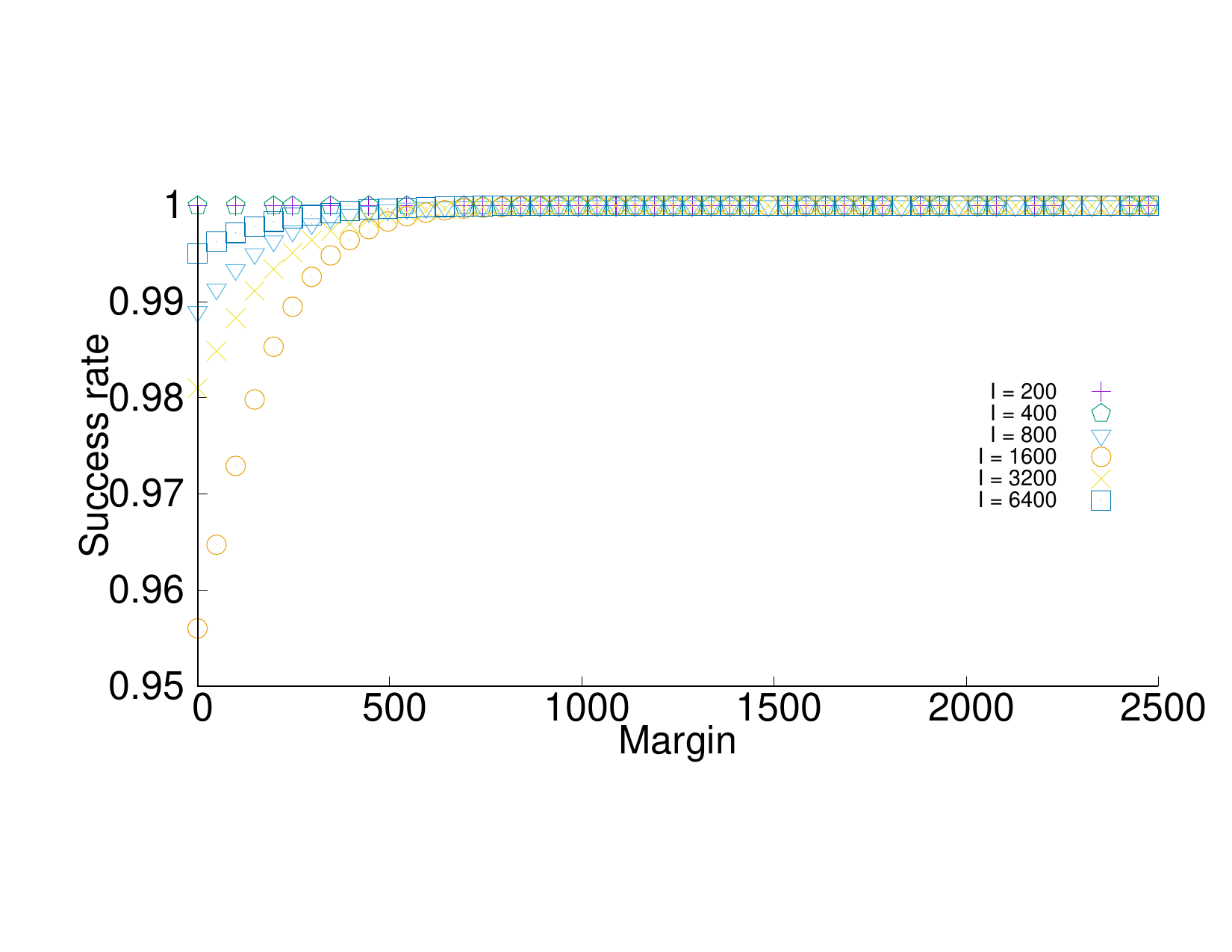}

         \end{center}
         \caption{Success rate of \PMLS, with length of arcs drawn either from $[I]$ (top) or from $[P/2,P/2 + I[$ (bottom).}
      \label{fig:2grp} 
  \end{figure}
\section{Deterministic Assignments vs Statistical Multiplexing}\label{sec:comparison}

\subsection{Performance of Statistical Multiplexing}

Now that we have designed and tuned \PMLS to solve \pall efficiently, we compare its performances against the actual way to manage the messages in a network: \emph{statistical multiplexing}, with a \FIFO buffer in each node of the network to resolve collisions. For statistical multiplexing, the time at which the datagrams are sent in the network is not managed by the user as in our approach. Thus, we assume the offset of each route is fixed to some random value and stays the same over time.
We consider a second policy to manage buffers called \critdead. In a buffer with several datagrams, this policy sends the one with the smallest remaining margin, which is the time it can wait before missing its deadline.

We have implemented a statistical multiplexing algorithmic simulator to evaluate the performance of these two policies. We compare them to our solution, finding an assignment with the smallest possible margin using \PMLS.
For statistical multiplexing, both contention points have a buffer. The process is not periodic:
even if the offset of a route is the same each period, it is possible that some datagram does not arrive at the same time in a contention point in two consecutive periods because of buffering. Therefore, we must measure the transmission time of each route over several periods if we want to compute the maximum latency of the network. We choose to simulate it for $1,000$ periods, but we have observed that the transmission time usually stabilizes in less than $10$ periods. The \textbf{margin}, for statistical multiplexing, is defined as the maximum transmission time, computed as explained, minus the size of the longest route of the star-routed network.

In Figure~\ref{fig:sto}, we represent the probability of success of statistical multiplexing and \PMLS for different margins. The success rates are computed from $10,000$ star-routed networks for each margin. On the top of Figure~\ref{fig:sto}, the network arc lengths are uniformly drawn from $[P]$, while on the bottom of Figure~\ref{fig:sto}, the network arc lengths are uniformly drawn from $[1600]$ (the hardest settings of the previous section). The other parameters of the experiences are the same as previously and the load is $0.95$.

    \begin{figure}

       \begin{center}
      \includegraphics[width = 0.9\linewidth]{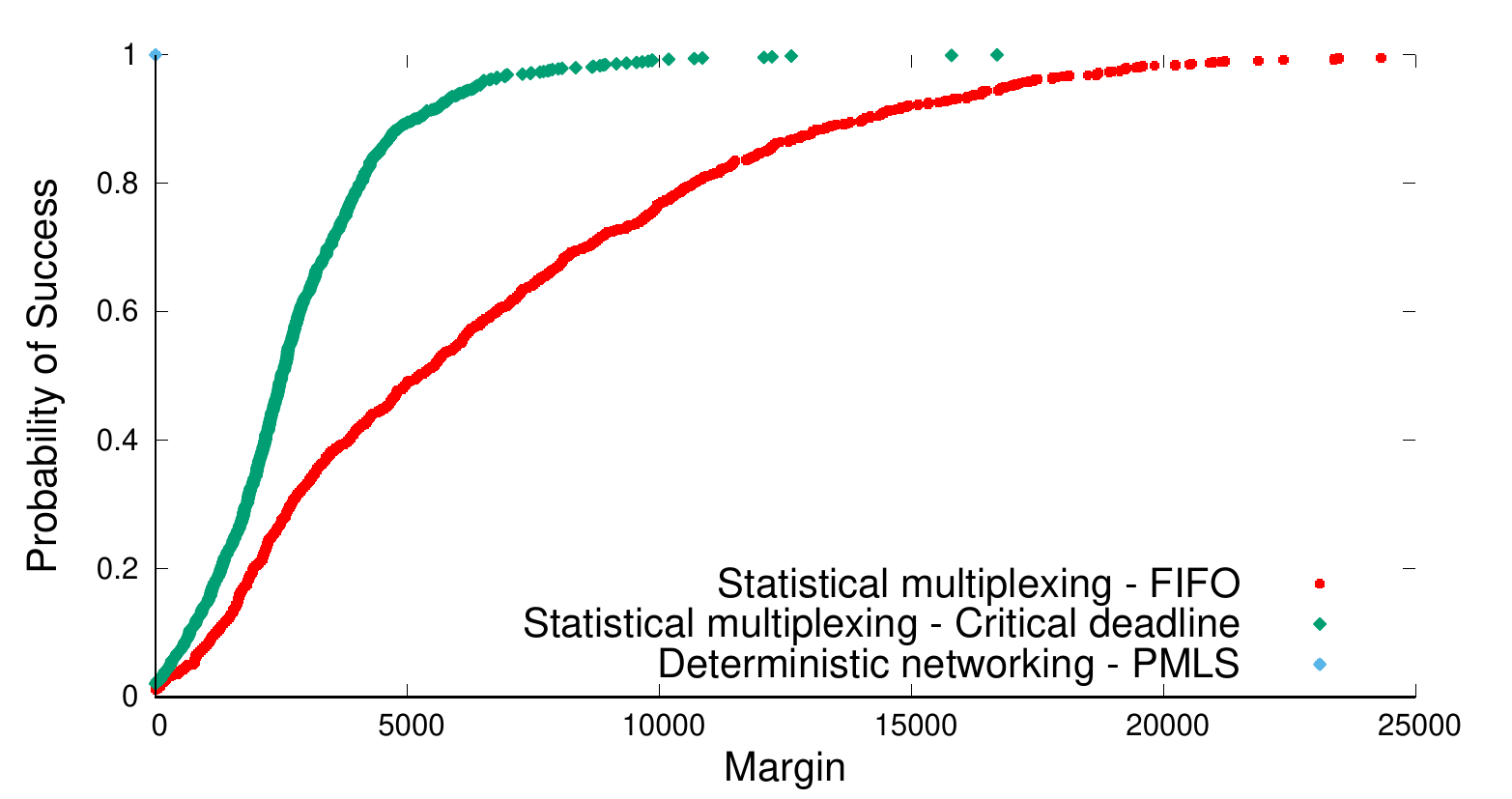}

     \includegraphics[width = 0.9\linewidth]{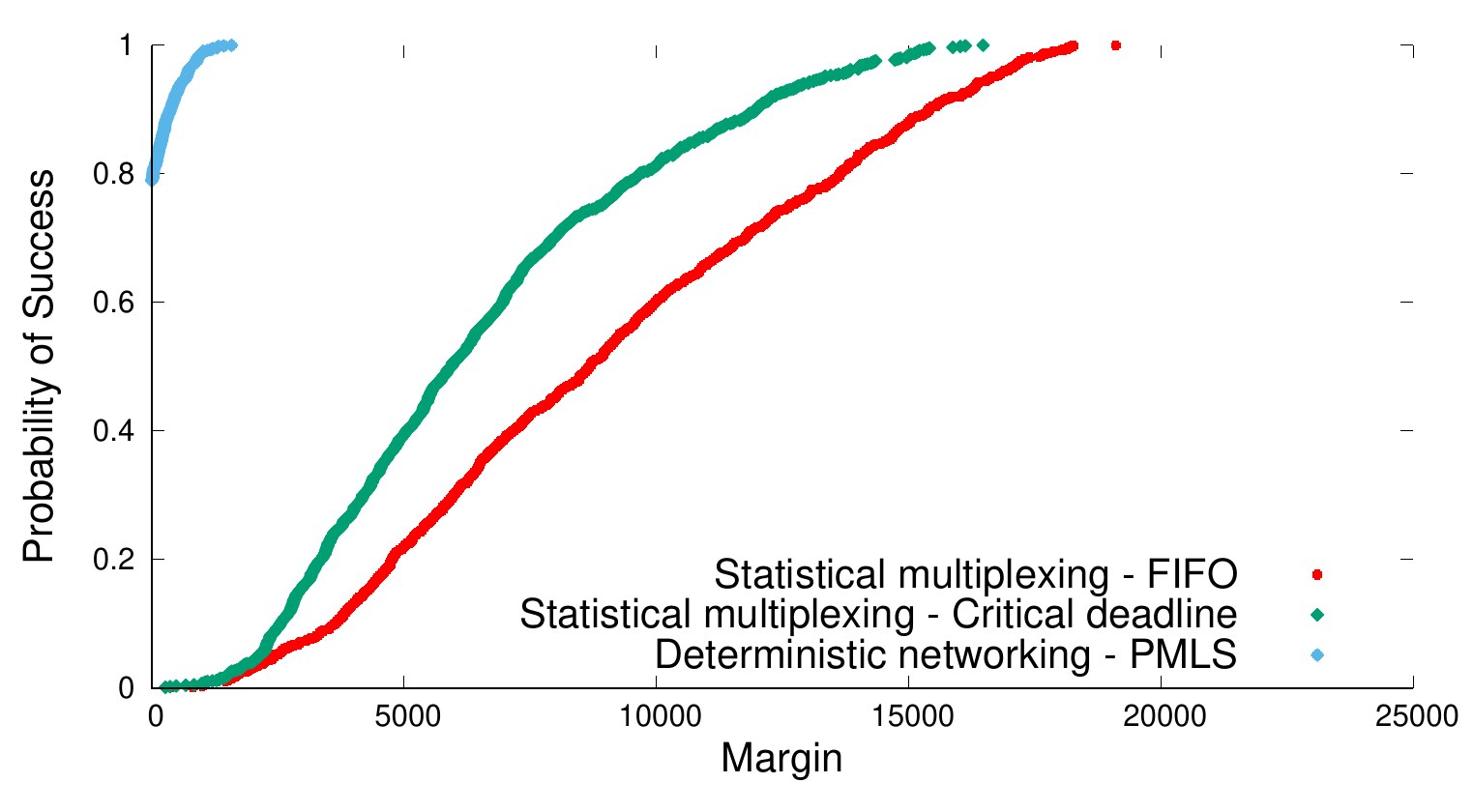}

       \end{center}

  \caption{Probability of success of statistical multiplexing and \PMLS for several margins on random topologies when network arc lengths are uniformly drawn either from $[P]$ (top) or from  $[1600]$ (bottom).}
      \label{fig:sto} 
      \end{figure}

   The experiment shows that statistical multiplexing cannot ensure a small enough latency. 
    For random topologies, the latency is extremely high when using FIFO ($6,538$ tics on average), with a margin of about $10,000$ for the worst $30\%$ of instances, which corresponds to half the period ($0.5$~ms). Even when the messages are managed with \critdead, $20\%$ of the instances have a margin of more than $4,000$ ($2,838$ tics on average) while PMLS finds an assignment with $0$ margin $99\%$ of the time! 
    
    For hard topologies (bottom figure), the average margin of statistical multiplexing ($9,052$ tics for FIFO, $6,574$ tics for \critdead) is worse than for random topologies. The worst case of \critdead remains the same ($\simeq 16,500$ tics) while the worst case of FIFO decreases from $30,828$ tics for random topologies to $19,105$ tics on hard topologies. The settings are stressful for \PMLS, and we find an assignment with a margin of $0$ in only $78\%$ of the instances, and it needs a margin of $2,000$ tics to be sure to find an assignment. However, \PMLS still vastly outperforms statistical multiplexing both for the average margin and for the worst margin. 
    
    Even under a light load of $0.4$, for which we can always find a bufferless assignment, statistical multiplexing has a very high average margin ($1,290$ tics for FIFO and $1,052$ tics for \critdead) and worst-case margin ($10,963$ tics for FIFO and $6,938$ tics for \critdead).

    For each $1,000$ tics of latency, we save from the periodic process, we can lengthen the routes by $10$km, which has a huge economic impact. We feel that it strongly justifies the use of a deterministic sending scheme for latency-critical applications such as our C-RAN motivating problem.    
     
    \subsection{Periodic Assignment and Random Traffic}
    
    The algorithms proposed in this paper are designed to manage deterministic periodic flows in dedicated networks. In this section, the objective is to determine the effect of adding in the network non-deterministic flows (internet traffic, best-effort) managed by statistical multiplexing.

    The algorithms solving \pall are not designed to take into account additional best-effort traffic. In particular, they often build very compact assignments, with all datagrams following one another in a contention vertex, which is bad for the latency of best-efforts datagrams trying to go through the same contention point. Thus, we propose an adaptation of any algorithm for solving \pall, to find assignments where the unused tics are as evenly spaced as possible in the period. Such assignments minimize the maximal latency of any random datagram trying to go through the contention points. A similar approach to decrease the latency of best-effort datagrams while scheduling C-RAN datagrams on an optical ring can be found in~\cite{DBLP:conf/ondm/BarthGS19}.

    \subsubsection{Spaced Assignments}

    Most algorithms for \pall, when determining the waiting times, send datagrams as early as possible
    and thus create long sequences of datagrams in $c_2$, without free tics between them. We propose to modify any algorithm solving \pall on an instance with datagram size $\tau$ as follows: compute a $(P,\tau')$ assignment using the algorithm for the largest possible $\tau' \geq \tau$. 

    \begin{lemma}\label{lemma:smaller_tau}
    Let $I' = (N,P,\tau',d)$ be an instance of \pall, for which there is an assignment, and let 
    $\tau \leq \tau'$. Then, there is also an assignment for $I = (N,P,\tau,d)$.
    \end{lemma}  
    \begin{proof}
    Let $A$ be the assignment of $I'$, the absence of collision is the absence of 
    intersection between intervals $[r_i,c_1]_{P,\tau'}$ (and $[r_i,c_2]_{P,\tau'}$). 
    If we consider $A$ as an assignment of $I$, then the intervals of used tics are $[r_i,c_1]_{P,\tau}$ (resp. $[r_i,c_2]_{P,\tau}$). 
    These intervals are strictly included in $[r_i,c_1]_{P,\tau'}$ (resp. $[r_i,c_2]_{P,\tau'}$), hence they do not have intersection either. 
    \end{proof}

     Lemma~\ref{lemma:smaller_tau} gives a way to obtain a solution of an instance from the same instance with a larger message size, as illustrated in Figure~\ref{fig:space}. 
     This transformation guarantees that all datagrams are separated by at least $\tau' - \tau$ free tics in each contention point. We are interested in finding the maximal $\tau'$ for which there is an assignment. Since the property of having an assignment is monotone with regard to $\tau'$, we can do so by a dichotomic search on $\tau$.

           \begin{figure}
       \begin{center}
      \includegraphics[width = 0.8\textwidth]{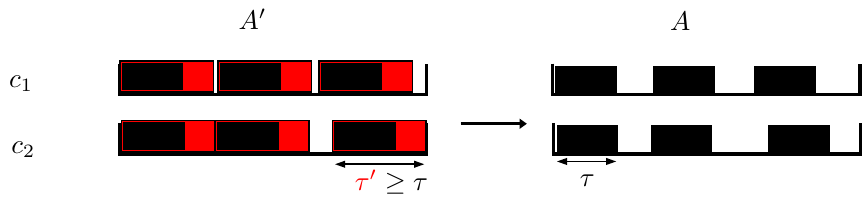}
      \end{center} 
      \caption{A $(P,\tau')$-assignment interpreted as a $(P,\tau)$ assignment}
      \label{fig:space}   
     \end{figure}

    We call \SPMLS, for \textbf{S}paced \PMLS, the adaptation of \PMLS which finds an assignment for the largest possible $\tau$ by dichotomic search on $\tau$. We experimentally investigate how large $\tau'$ can be so that \SPMLS finds a $(P,\tau')$ assignment. In Figure~\ref{fig:spacetau}, we represent the probability of finding a $(P,\tau')$ assignment function of $\tau'$. The star-routed networks are generated as in Section~\ref{sec:resultsPALL}, with $8$ routes having arc lengths drawn from $[P]$. The network has a load of $0.6$ of C-RAN traffic, hence the period is set to $33,333$ for $\tau = 2500$. The network is less loaded with C-RAN traffic than in the previous sections because it will also support stochastic traffic, incurring an additional load.

    \begin{figure}
       \begin{center}
      \includegraphics[width = 0.8\textwidth]{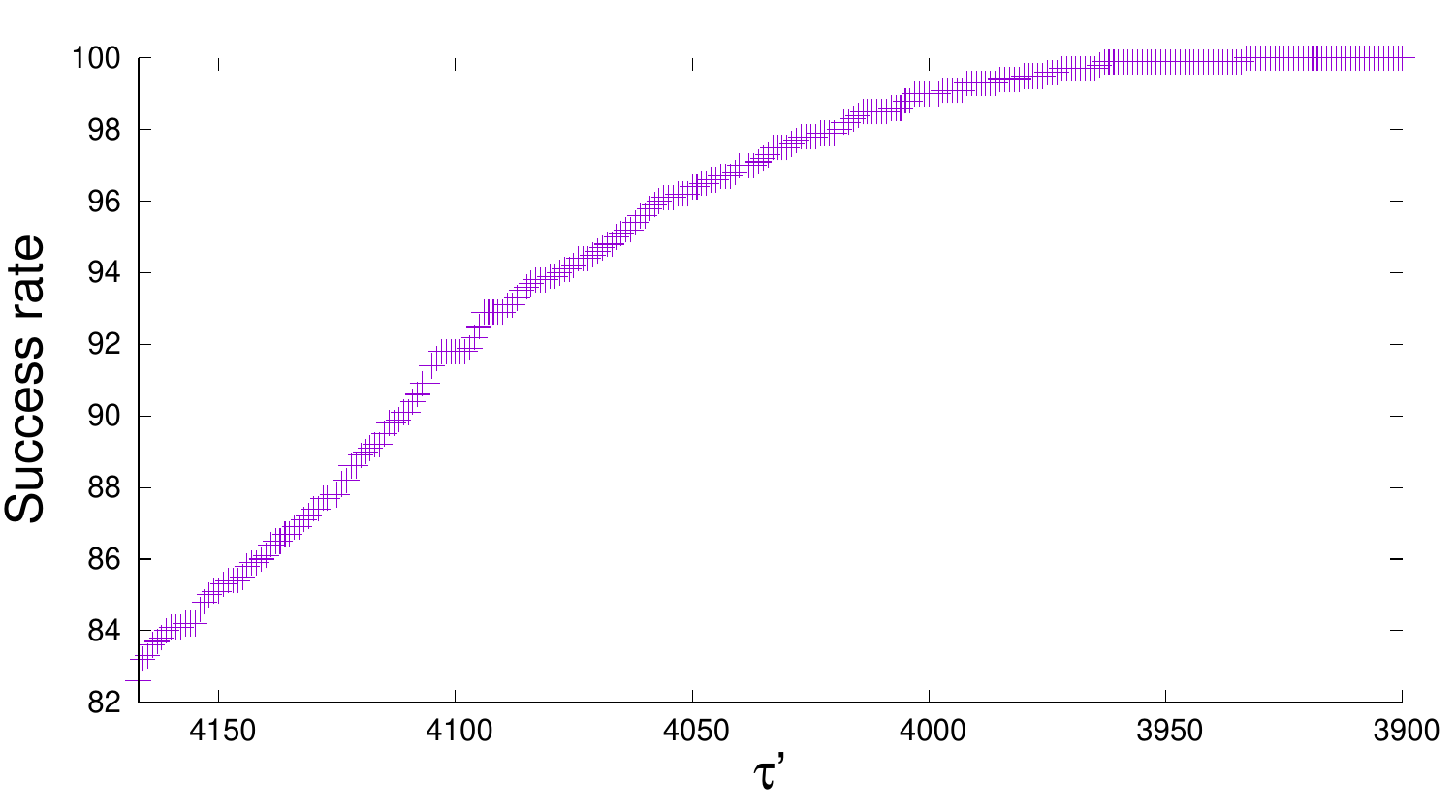}
      \end{center}
      \caption{Probability of finding a $(P,\tau')$-assignment over $10,000$ instances}
      \label{fig:spacetau}   
     \end{figure}   

	For more than $80\%$ of the instances, there is an assignment for the maximal size of a datagram $\tau' = \frac{P}{n} = 4166$. This means that \SPMLS perfectly balances the free tics in the period. In the worst case, a solution with $\tau' = 3925$ is found, which still yields at least $3925 - 2500 = 1425$ unused tics between datagrams. Hence, we expect \SPMLS to work well in conjunction with random traffic. The excellent performance of \PMLS, when the load is high, explains this result and further justifies the work we have done to solve \pall efficiently under high loads rather than just requiring mild loads in applications.

    \subsubsection{Performance Evaluation}
    
    We evaluate in this section different ways to manage statistical and deterministic traffic together in the same network.
 
    \paragraph{Best-effort datagrams generation}
  Let us denote best-effort by BE. The BE traffic is generated as follows. The size of a BE datagram is small in practice and is set to $50$ tics in our experiments. We generate random BE traffic such that it adds on average $0.2$ to the load to achieve a total load of $0.8$. BE datagrams do not go back and forth in the network like C-RAN datagrams; they pass through a single contention point.
We thus independently generate BE datagrams for each of the two contention points $c_1$ and $c_2$. The \textbf{latency} of a BE datagram is defined as the time it is buffered in its contention point.

At each contention point, the generation is split into two exponential distributions, which give the time before the next arrival of datagrams~\cite{el2018performance}. The first one models background traffic, corresponding to an average load of $0.15$. It generates one BE datagram every $333$ tics on average. The second models a burst of BE datagrams and corresponds to an average load of $0.05$; it generates \emph{ten} BE datagrams at the same time every $10,000$ tics on average.

   	\paragraph{Statistical multiplexing policy}

    We try several policies to deal with all traffic using statistical multiplexing.
   	The BE traffic is managed using \FIFO, and we propose two policies to deal with C-RAN. First, all datagrams, BE or C-RAN, are stored in the same buffer and dealt with the \FIFO policy regardless of their type. We call this policy \FIFO.

    To minimize the latency of C-RAN traffic, we can store the two types of datagrams in two different buffers, each managed with \FIFO, but we prioritize the C-RAN datagrams, which are always sent first. It can be technically implemented using TSN 802.1Qbu~\cite{ieee802}, which allows defining priority classes in the traffic to schedule first the traffic with the highest priority, here the C-RAN traffic. We call this policy \framepre.
   
   We also consider C-RAN traffic scheduled by \PMLS or \SPMLS. In that case, we need to forbid the transit of a BE datagram that collides with a C-RAN datagram. Thus, in each contention point, we reserve $50$ tics (the size of a BE datagram) before the arrival of a C-RAN message. Observe that it wastes some resources and thus slightly decreases the maximal throughput and may worsen the latency of BE datagrams.
    
    Figure~\ref{fig:belatency} shows the cumulative distribution of the logical latency of BE datagrams, that is the probability that a BE datagram has a latency less than some value.
    The distribution is computed over $1000$ random instances, and for each instance, the traffic is simulated for ten periods.

     \begin{figure}
       \begin{center}
      \includegraphics[width = 1\textwidth]{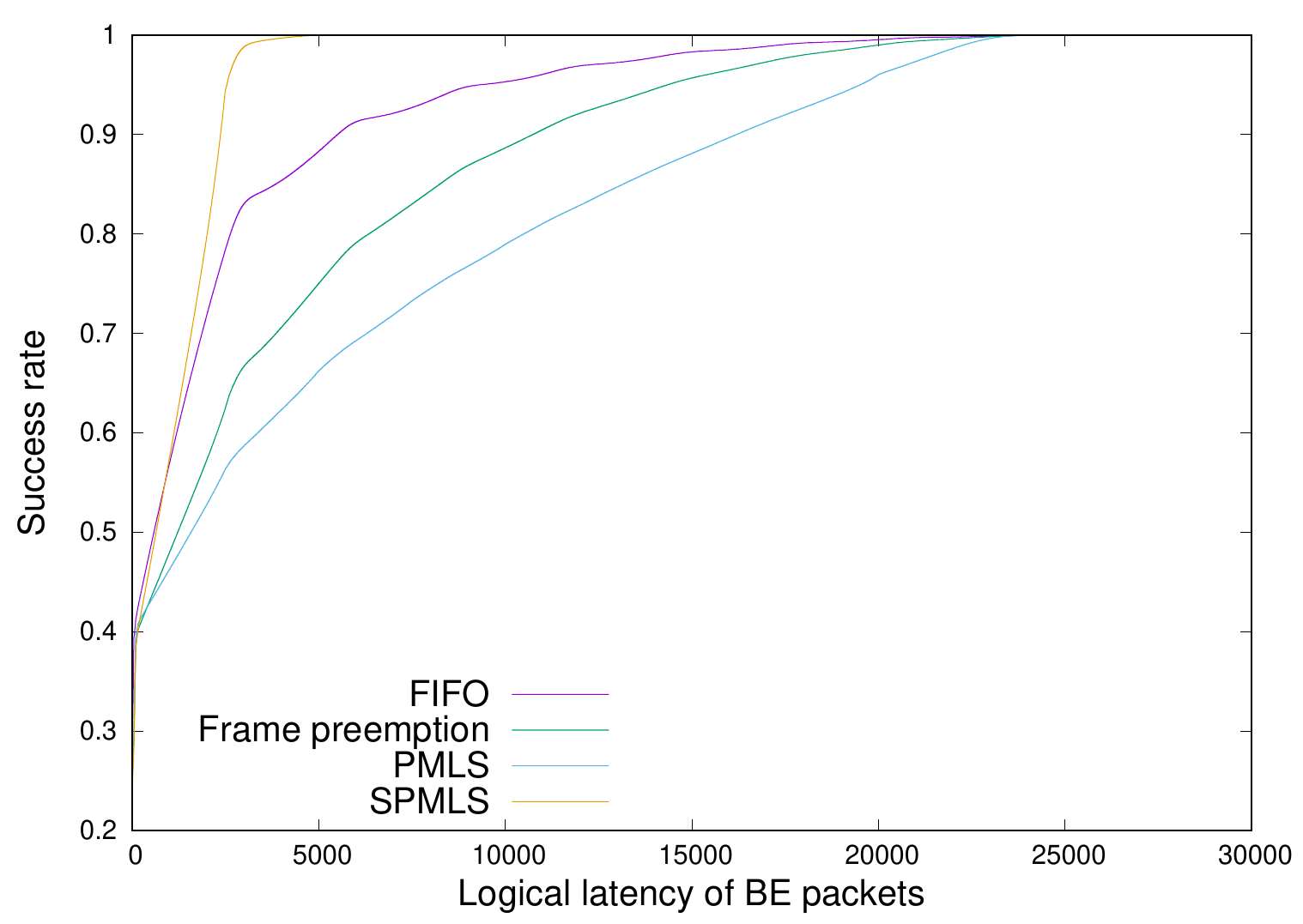}
      \end{center}
      \caption{Cumulative distribution of the latency of BE datagrams for several network management schemes}
      \label{fig:belatency}   
     \end{figure}

     If we compare \FIFO and \framepre, we see that the latency of BE datagrams is better ($1977$ tics on average) with \FIFO. It is expected since in \framepre the C-RAN datagrams are prioritized and thus 
     the latency of the BE datagrams is strictly worse, $3256$ tics on average. However, this is a trade-off with the margin of the C-RAN datagrams, which is strictly better for \framepre: $1919$ tics on average versus $5265$ tics for \FIFO. 

     Using a deterministic approach for C-RAN with \PMLS, the trade-off is even stronger:
      the C-RAN margin is down to $0$, but the BE traffic is more impacted, with a latency of $4909$ tics on average. This can be explained by both the reservation of tics to deal with the periodic sending scheme and the long sequences of C-RAN datagrams without free time in contention points.
     
     Using \SPMLS, the C-RAN traffic is smoothed over the period, to regularly leave some free tics for BE traffic. By construction, we still have a C-RAN margin of $0$ but it improves the latency of BE datagrams to $949$ tics on average, which is even better than with \FIFO. 
     
      This result shows that managing deterministic traffic deterministically is also good for the stochastic sources of traffic in the network. We have already observed such a phenomenon in~\cite{DBLP:conf/ondm/BarthGS19}, a similar problem on an optical ring.

 \section{Conclusion}
 This article proposed two types of deterministic scheduling schemes to achieve low-latency periodic communication between BBUs and RRHs in a fronthaul network. We demonstrated that finding such schemes is $\NP$-complete, even for simple networks with width or depth of 2. Consequently, we focused on solving them in star-routed networks, which, despite their simplicity, effectively model practical fronthaul architectures.

The first scheme eliminates buffering and incurs no additional latency. Such solutions exist for short routes (using \shortestlongest) or when the load is below 0.8 with at most 20 routes (using \ESCA). For higher loads, we introduced \PMLS, which permits buffering at BBUs while maintaining minimal logical latency. Comparative analysis showed that \PMLS performs on par with \ASPMLS, which relies on an optimal FPT subroutine rather than a nearly linear-time heuristic.

Our deterministic approach significantly outperforms statistical multiplexing across all network loads, even in the presence of random traffic. Notably, statistical multiplexing induces excessive latency, even in low-load scenarios, whereas our method achieves zero logical latency in most cases, even under stringent conditions. This highlights the principle that \emph{deterministic traffic is best managed deterministically}.

Several challenges remain before practical deployment in C-RAN fronthaul networks. Theoretical confirmation of the $\NP$-hardness of \pazl and \pall on star-routed networks is needed. Additionally, designing a more efficient FPT algorithm for \pall, comparable to the one for \pazl, would clarify any performance trade-offs of using \PMLS heuristically. Further, extending our study to other fronthaul topologies—such as caterpillars, trees, cycles, and bounded-treewidth graphs—remains an open direction~\cite{DBLP:conf/ondm/BarthGS19, guiraud2021deterministic}.

Beyond C-RAN, variations of our model could capture broader use cases. Allowing datagrams of varying sizes would better represent diverse payloads but would render \PMLS inapplicable due to the increased complexity. Supporting links with different speeds requires precise modeling of inter-link interfaces, potentially addressable through multiprocessor scheduling techniques~\cite{simons1989fast}. Alternative optimization objectives, such as minimizing average rather than worst-case latency, might simplify the problem, making it solvable via linear programming. Introducing preemption—splitting datagrams into smaller packets—could further reduce latency. Additionally, exploring pseudo-periodic scheduling (periodicity over multiple cycles) or dynamically computing routes to minimize $TR(A)$ presents further research opportunities.

\section*{Declarations} 
\subsection*{Ethical Approval} Not applicable
\subsection*{Competing interests} All authors declare that they have no competing interest.
\subsection*{Author's Contributions} The manuscript has been written by all authors. Maël Guiraud prepared figures. All authors reviewed the manuscript.
\subsection*{Funding} This work has been partially supported by the French ANR project N-GREEN.
\subsection*{Availability of data and materials }  For all experiments of this paper, the code in C is available on the web page of one author~\cite{webpage} under a copyleft license. All the data are available on~\cite{datas}.

 	\paragraph*{Acknowledgments} 
 	We thank Olivier Marcé and Brice Leclerc for introducing us to the problem from a practical perspective. We also thank Christian Cadéré and David Auger for their friendly discussions on the subject and insightful remarks. 

\section*{Glossary of Acronyms}
Common acronyms:
\begin{longtable}{|l|p{10cm}|}
    \hline
    \textbf{Acronym} & \textbf{Definition} \\
    \hline
    BBU & Base Band Unit \\
    BE & Best Effort \\
    FIFO & First In, First Out \\
    FPT & Fixed-Parameter Tractable \\
    NP & Non-deterministic Polynomial-time \\
    RAN & Radio Access Network \\
    RRH & Remote Radio Head \\
    TSN & Time-Sensitive Networking \\
    \hline
\end{longtable}
Problems and algorithms acronyms:
\begin{longtable}{|l|p{10cm}|}
    \hline
    \textbf{Acronym} & \textbf{Definition} \\
    \hline
    \wta & Waiting Time Assignment \\
    \pall & Periodic Assignment for Low Latency \\
    \pazl & Periodic Assignment for Zero Latency \\
     \MLS & Minimal Latency Scheduling \\
     \PMLS & Periodic Minimal Latency Scheduling \\
    \ASPMLS & All Subset Periodic Minimal Latency Scheduling \\
    \SPMLS & Spaced Periodic Minimal Latency Scheduling \\
    \hline
\end{longtable}
\bibliography{Sources}


\begin{thebibliography}{85}
\ifx \bisbn   \undefined \def \bisbn  #1{ISBN #1}\fi
\ifx \binits  \undefined \def \binits#1{#1}\fi
\ifx \bauthor  \undefined \def \bauthor#1{#1}\fi
\ifx \batitle  \undefined \def \batitle#1{#1}\fi
\ifx \bjtitle  \undefined \def \bjtitle#1{#1}\fi
\ifx \bvolume  \undefined \def \bvolume#1{\textbf{#1}}\fi
\ifx \byear  \undefined \def \byear#1{#1}\fi
\ifx \bissue  \undefined \def \bissue#1{#1}\fi
\ifx \bfpage  \undefined \def \bfpage#1{#1}\fi
\ifx \blpage  \undefined \def \blpage #1{#1}\fi
\ifx \burl  \undefined \def \burl#1{\textsf{#1}}\fi
\ifx \doiurl  \undefined \def \doiurl#1{\url{https://doi.org/#1}}\fi
\ifx \betal  \undefined \def \betal{\textit{et al.}}\fi
\ifx \binstitute  \undefined \def \binstitute#1{#1}\fi
\ifx \binstitutionaled  \undefined \def \binstitutionaled#1{#1}\fi
\ifx \bctitle  \undefined \def \bctitle#1{#1}\fi
\ifx \beditor  \undefined \def \beditor#1{#1}\fi
\ifx \bpublisher  \undefined \def \bpublisher#1{#1}\fi
\ifx \bbtitle  \undefined \def \bbtitle#1{#1}\fi
\ifx \bedition  \undefined \def \bedition#1{#1}\fi
\ifx \bseriesno  \undefined \def \bseriesno#1{#1}\fi
\ifx \blocation  \undefined \def \blocation#1{#1}\fi
\ifx \bsertitle  \undefined \def \bsertitle#1{#1}\fi
\ifx \bsnm \undefined \def \bsnm#1{#1}\fi
\ifx \bsuffix \undefined \def \bsuffix#1{#1}\fi
\ifx \bparticle \undefined \def \bparticle#1{#1}\fi
\ifx \barticle \undefined \def \barticle#1{#1}\fi
\bibcommenthead
\ifx \bconfdate \undefined \def \bconfdate #1{#1}\fi
\ifx \botherref \undefined \def \botherref #1{#1}\fi
\ifx \url \undefined \def \url#1{\textsf{#1}}\fi
\ifx \bchapter \undefined \def \bchapter#1{#1}\fi
\ifx \bbook \undefined \def \bbook#1{#1}\fi
\ifx \bcomment \undefined \def \bcomment#1{#1}\fi
\ifx \oauthor \undefined \def \oauthor#1{#1}\fi
\ifx \citeauthoryear \undefined \def \citeauthoryear#1{#1}\fi
\ifx \endbibitem  \undefined \def \endbibitem {}\fi
\ifx \bconflocation  \undefined \def \bconflocation#1{#1}\fi
\ifx \arxivurl  \undefined \def \arxivurl#1{\textsf{#1}}\fi
\csname PreBibitemsHook\endcsname

\bibitem{Guir1806:Deterministic}
\begin{bchapter}
\bauthor{\bsnm{Barth}, \binits{D.}},
\bauthor{\bsnm{Guiraud}, \binits{M.}},
\bauthor{\bsnm{Leclerc}, \binits{B.}},
\bauthor{\bsnm{Marce}, \binits{O.}},
\bauthor{\bsnm{Strozecki}, \binits{Y.}}:
\bctitle{Deterministic scheduling of periodic messages for cloud {RAN}}.
In: \bbtitle{2018 25th International Conference on Telecommunications (ICT) (ICT 2018)},
\bconflocation{Saint Malo, France}
(\byear{2018})
\end{bchapter}
\endbibitem

\bibitem{dahlman20185g}
\begin{bbook}
\bauthor{\bsnm{Dahlman}, \binits{E.}},
\bauthor{\bsnm{Parkvall}, \binits{S.}},
\bauthor{\bsnm{Skold}, \binits{J.}}:
\bbtitle{{5G} {NR}: The Next Generation Wireless Access Technology}.
\bpublisher{Academic Press},
\blocation{San Diego}
(\byear{2018})
\end{bbook}
\endbibitem

\bibitem{saad2019vision}
\begin{barticle}
\bauthor{\bsnm{Saad}, \binits{W.}},
\bauthor{\bsnm{Bennis}, \binits{M.}},
\bauthor{\bsnm{Chen}, \binits{M.}}:
\batitle{A vision of 6g wireless systems: Applications, trends, technologies, and open research problems}.
\bjtitle{IEEE network}
\bvolume{34}(\bissue{3}),
\bfpage{134}--\blpage{142}
(\byear{2019})
\end{barticle}
\endbibitem

\bibitem{ali2021urllc}
\begin{barticle}
\bauthor{\bsnm{Ali}, \binits{R.}},
\bauthor{\bsnm{Zikria}, \binits{Y.B.}},
\bauthor{\bsnm{Bashir}, \binits{A.K.}},
\bauthor{\bsnm{Garg}, \binits{S.}},
\bauthor{\bsnm{Kim}, \binits{H.S.}}:
\batitle{{URLLC for 5G} and beyond: Requirements, enabling incumbent technologies and network intelligence}.
\bjtitle{IEEE Access}
\bvolume{9},
\bfpage{67064}--\blpage{67095}
(\byear{2021})
\end{barticle}
\endbibitem

\bibitem{chen2018ultra}
\begin{barticle}
\bauthor{\bsnm{Chen}, \binits{H.}},
\bauthor{\bsnm{Abbas}, \binits{R.}},
\bauthor{\bsnm{Cheng}, \binits{P.}},
\bauthor{\bsnm{Shirvanimoghaddam}, \binits{M.}},
\bauthor{\bsnm{Hardjawana}, \binits{W.}},
\bauthor{\bsnm{Bao}, \binits{W.}},
\bauthor{\bsnm{Li}, \binits{Y.}},
\bauthor{\bsnm{Vucetic}, \binits{B.}}:
\batitle{Ultra-reliable low latency cellular networks: Use cases, challenges and approaches}.
\bjtitle{IEEE Communications Magazine}
\bvolume{56}(\bissue{12}),
\bfpage{119}--\blpage{125}
(\byear{2018})
\end{barticle}
\endbibitem

\bibitem{nguyen20216g}
\begin{barticle}
\bauthor{\bsnm{Nguyen}, \binits{D.C.}},
\bauthor{\bsnm{Ding}, \binits{M.}},
\bauthor{\bsnm{Pathirana}, \binits{P.N.}},
\bauthor{\bsnm{Seneviratne}, \binits{A.}},
\bauthor{\bsnm{Li}, \binits{J.}},
\bauthor{\bsnm{Niyato}, \binits{D.}},
\bauthor{\bsnm{Dobre}, \binits{O.}},
\bauthor{\bsnm{Poor}, \binits{H.V.}}:
\batitle{6g internet of things: A comprehensive survey}.
\bjtitle{IEEE Internet of Things Journal}
\bvolume{9}(\bissue{1}),
\bfpage{359}--\blpage{383}
(\byear{2021})
\end{barticle}
\endbibitem

\bibitem{niknam2020intelligent}
\begin{bchapter}
\bauthor{\bsnm{Niknam}, \binits{S.}},
\bauthor{\bsnm{Roy}, \binits{A.}},
\bauthor{\bsnm{Dhillon}, \binits{H.S.}},
\bauthor{\bsnm{Singh}, \binits{S.}},
\bauthor{\bsnm{Banerji}, \binits{R.}},
\bauthor{\bsnm{Reed}, \binits{J.H.}},
\bauthor{\bsnm{Saxena}, \binits{N.}},
\bauthor{\bsnm{Yoon}, \binits{S.}}:
\bctitle{Intelligent o-ran for beyond 5g and 6g wireless networks}.
In: \bbtitle{2022 IEEE Globecom Workshops (GC Wkshps)},
pp. \bfpage{215}--\blpage{220}
(\byear{2022}).
\bcomment{IEEE}
\end{bchapter}
\endbibitem

\bibitem{larsen2022deployment}
\begin{bchapter}
\bauthor{\bsnm{Larsen}, \binits{L.M.}},
\bauthor{\bsnm{Christiansen}, \binits{H.L.}},
\bauthor{\bsnm{Ruepp}, \binits{S.}},
\bauthor{\bsnm{Berger}, \binits{M.S.}}:
\bctitle{Deployment guidelines for cloud-ran in future mobile networks}.
In: \bbtitle{2022 IEEE 11th International Conference on Cloud Networking (CloudNet)},
pp. \bfpage{141}--\blpage{149}
(\byear{2022}).
\bcomment{IEEE}
\end{bchapter}
\endbibitem

\bibitem{gavrilovska2020cloud}
\begin{barticle}
\bauthor{\bsnm{Gavrilovska}, \binits{L.}},
\bauthor{\bsnm{Rakovic}, \binits{V.}},
\bauthor{\bsnm{Denkovski}, \binits{D.}}:
\batitle{From {C}loud {RAN} to {O}pen {RAN}}.
\bjtitle{Wirel. Pers. Commun.}
\bvolume{113}(\bissue{3}),
\bfpage{1523}--\blpage{1539}
(\byear{2020})
\end{barticle}
\endbibitem

\bibitem{mobile2011c}
\begin{botherref}
\oauthor{\bsnm{Mobile}, \binits{C.}}:
C-{RAN}: the road towards green {RAN}.
Technical report
(October 2011)
\end{botherref}
\endbibitem

\bibitem{checko2014cloud}
\begin{barticle}
\bauthor{\bsnm{Checko}, \binits{A.}},
\bauthor{\bsnm{Christiansen}, \binits{H.L.}},
\bauthor{\bsnm{Yan}, \binits{Y.}},
\bauthor{\bsnm{Scolari}, \binits{L.}},
\bauthor{\bsnm{Kardaras}, \binits{G.}},
\bauthor{\bsnm{Berger}, \binits{M.S.}},
\bauthor{\bsnm{Dittmann}, \binits{L.}}:
\batitle{Cloud {RAN} for mobile networks - {A} technology overview}.
\bjtitle{IEEE Communications Surveys \& Tutorials}
\bvolume{17}(\bissue{1}),
\bfpage{405}--\blpage{426}
(\byear{2014})
\end{barticle}
\endbibitem

\bibitem{ieeep802}
\begin{botherref}
\oauthor{\bsnm{{Time-Sensitive Networking Task Group of IEEE 802.1}}}:
Time-sensitive networks for fronthaul
(2016).
IEEE P802.1/D0.4
\end{botherref}
\endbibitem

\bibitem{bouguen2012lte}
\begin{bbook}
\bauthor{\bsnm{Bouguen}, \binits{Y.}},
\bauthor{\bsnm{Hardouin}, \binits{E.}},
\bauthor{\bsnm{Maloberti}, \binits{A.}},
\bauthor{\bsnm{Wolff}, \binits{F.-X.}}:
\bbtitle{LTE et les R{\'e}seaux {4G}}.
\bpublisher{Editions Eyrolles},
\blocation{Paris, France}
(\byear{2012})
\end{bbook}
\endbibitem

\bibitem{dogra2020survey}
\begin{barticle}
\bauthor{\bsnm{Dogra}, \binits{A.}},
\bauthor{\bsnm{Jha}, \binits{R.K.}},
\bauthor{\bsnm{Jain}, \binits{S.}}:
\batitle{A survey on beyond {5G} network with the advent of {6G}: Architecture and emerging technologies}.
\bjtitle{IEEE Access}
\bvolume{9},
\bfpage{67512}--\blpage{67547}
(\byear{2020})
\end{barticle}
\endbibitem

\bibitem{3gpp5g}
\begin{botherref}
3GPP:
3rd Generation Partnership Project; Technical Specification Group Services and System Aspects; Service Requirements for the {5G} System;.
3GPP. Stage 1 (Release 16)
\end{botherref}
\endbibitem

\bibitem{boccardi2014five}
\begin{barticle}
\bauthor{\bsnm{Boccardi}, \binits{F.}},
\bauthor{\bsnm{Heath}, \binits{R.W.}},
\bauthor{\bsnm{Lozano}, \binits{A.}},
\bauthor{\bsnm{Marzetta}, \binits{T.L.}},
\bauthor{\bsnm{Popovski}, \binits{P.}}:
\batitle{Five disruptive technology directions for {5G}}.
\bjtitle{IEEE Communications Magazine}
\bvolume{52}(\bissue{2}),
\bfpage{74}--\blpage{80}
(\byear{2014})
\end{barticle}
\endbibitem

\bibitem{romano2019imt}
\begin{botherref}
\oauthor{\bsnm{Romano}, \binits{G.}}:
{IMT}-2020 requirements and realization.
Wiley 5G Ref: The Essential 5G Reference Online,
1--28
(2019)
\end{botherref}
\endbibitem

\bibitem{ieee802qat}
\begin{barticle}
\bauthor{\bsnm{Bujosa}, \binits{D.}},
\bauthor{\bsnm{Álvarez}, \binits{I.}},
\bauthor{\bsnm{Proenza}, \binits{J.}}:
\batitle{{CSRP}: {A}n enhanced protocol for consistent reservation of resources in {AVB/TSN}}.
\bjtitle{IEEE Transactions on Industrial Informatics}
\bvolume{17}(\bissue{5}),
\bfpage{3640}--\blpage{3650}
(\byear{2021}).
\doiurl{10.1109/TII.2020.3015926}
\end{barticle}
\endbibitem

\bibitem{6755436}
\begin{botherref}
{IEEE D}raft standard for local and metropolitan area networks: Overview and architecture.
IEEE P802-REV/D1.7,
1--68
(2014)
\end{botherref}
\endbibitem

\bibitem{mohamed2021software}
\begin{barticle}
\bauthor{\bsnm{Mohamed}, \binits{A.}},
\bauthor{\bsnm{Hamdan}, \binits{M.}},
\bauthor{\bsnm{Khan}, \binits{S.}},
\bauthor{\bsnm{Abdelaziz}, \binits{A.}},
\bauthor{\bsnm{Babiker}, \binits{S.F.}},
\bauthor{\bsnm{Imran}, \binits{M.}},
\bauthor{\bsnm{Marsono}, \binits{M.}}:
\batitle{Software-defined networks for resource allocation in cloud computing: A survey}.
\bjtitle{Computer Networks}
\bvolume{195},
\bfpage{108}--\blpage{151}
(\byear{2021}).
\doiurl{10.1016/j.comnet.2021.108151}
\end{barticle}
\endbibitem

\bibitem{li2015software}
\begin{barticle}
\bauthor{\bsnm{Li}, \binits{Y.}},
\bauthor{\bsnm{Chen}, \binits{M.}}:
\batitle{Software-defined network function virtualization: A survey}.
\bjtitle{IEEE Access}
\bvolume{3},
\bfpage{2542}--\blpage{2553}
(\byear{2015})
\end{barticle}
\endbibitem

\bibitem{7356556}
\begin{bchapter}
\bauthor{\bsnm{{Nayak}}, \binits{N.G.}},
\bauthor{\bsnm{{Durr}}, \binits{F.}},
\bauthor{\bsnm{{Rothermel}}, \binits{K.}}:
\bctitle{Software-defined environment for reconfigurable manufacturing systems}.
In: \bbtitle{2015 5th International Conference on the Internet of Things (IOT)},
pp. \bfpage{122}--\blpage{129}
(\byear{2015}).
\doiurl{10.1109/IOT.2015.7356556}
\end{bchapter}
\endbibitem

\bibitem{omri2019synchronization}
\begin{barticle}
\bauthor{\bsnm{Omri}, \binits{A.}},
\bauthor{\bsnm{Shaqfeh}, \binits{M.}},
\bauthor{\bsnm{Ali}, \binits{A.}},
\bauthor{\bsnm{Alnuweiri}, \binits{H.}}:
\batitle{Synchronization procedure in {5G NR} systems}.
\bjtitle{IEEE Access}
\bvolume{7},
\bfpage{41286}--\blpage{41295}
(\byear{2019})
\end{barticle}
\endbibitem

\bibitem{khalili2016uplink}
\begin{barticle}
\bauthor{\bsnm{Khalili}, \binits{S.}},
\bauthor{\bsnm{Simeone}, \binits{O.}}:
\batitle{Uplink {HARQ} for {C}loud {RAN} via separation of control and data planes}.
\bjtitle{IEEE Transactions on Vehicular Technology}
\bvolume{66}(\bissue{5}),
\bfpage{4005}--\blpage{4016}
(\byear{2016})
\end{barticle}
\endbibitem

\bibitem{yemini2016multiple}
\begin{barticle}
\bauthor{\bsnm{Yemini}, \binits{M.}},
\bauthor{\bsnm{Somekh-Baruch}, \binits{A.}},
\bauthor{\bsnm{Leshem}, \binits{A.}}:
\batitle{On the multiple access channel with asynchronous cognition}.
\bjtitle{IEEE Transactions on Information Theory}
\bvolume{62}(\bissue{10}),
\bfpage{5643}--\blpage{5663}
(\byear{2016})
\end{barticle}
\endbibitem

\bibitem{mahmood2019time}
\begin{barticle}
\bauthor{\bsnm{Mahmood}, \binits{A.}},
\bauthor{\bsnm{Ashraf}, \binits{M.I.}},
\bauthor{\bsnm{Gidlund}, \binits{M.}},
\bauthor{\bsnm{Torsner}, \binits{J.}},
\bauthor{\bsnm{Sachs}, \binits{J.}}:
\batitle{Time synchronization in {5G} wireless edge: Requirements and solutions for critical-{MTC}}.
\bjtitle{IEEE Communications Magazine}
\bvolume{57}(\bissue{12}),
\bfpage{45}--\blpage{51}
(\byear{2019})
\end{barticle}
\endbibitem

\bibitem{peng2021latency}
\begin{bchapter}
\bauthor{\bsnm{Peng}, \binits{H.}},
\bauthor{\bsnm{T{\"a}rneberg}, \binits{W.}},
\bauthor{\bsnm{Kihl}, \binits{M.}}:
\bctitle{Latency-aware {R}adio {R}esource {A}llocation over {C}loud {RAN} for {I}ndustry 4.0}.
In: \bbtitle{2021 International Conference on Computer Communications and Networks (ICCCN)},
pp. \bfpage{1}--\blpage{8}
(\byear{2021}).
\bcomment{IEEE}
\end{bchapter}
\endbibitem

\bibitem{garcia2019latency}
\begin{barticle}
\bauthor{\bsnm{Garc{\'\i}a-Morales}, \binits{J.}},
\bauthor{\bsnm{Lucas-Esta{\~n}}, \binits{M.C.}},
\bauthor{\bsnm{Gozalvez}, \binits{J.}}:
\batitle{Latency-sensitive 5{G} {RAN} slicing for {I}ndustry 4.0}.
\bjtitle{IEEE Access}
\bvolume{7},
\bfpage{143139}--\blpage{143159}
(\byear{2019})
\end{barticle}
\endbibitem

\bibitem{nikhileswar2022traffic}
\begin{bchapter}
\bauthor{\bsnm{Nikhileswar}, \binits{K.}},
\bauthor{\bsnm{Prabhu}, \binits{K.}},
\bauthor{\bsnm{Cavalcanti}, \binits{D.}}:
\bctitle{Traffic steering in edge compute devices using express data path for 5g and tsn integration}.
In: \bbtitle{2022 IEEE 18th International Conference on Factory Communication Systems (WFCS)},
pp. \bfpage{1}--\blpage{6}
(\byear{2022}).
\bcomment{IEEE}
\end{bchapter}
\endbibitem

\bibitem{john2024industry}
\begin{barticle}
\bauthor{\bsnm{John}, \binits{J.}},
\bauthor{\bsnm{Noor-A-Rahim}, \binits{M.}},
\bauthor{\bsnm{Vijayan}, \binits{A.}},
\bauthor{\bsnm{Poor}, \binits{H.V.}},
\bauthor{\bsnm{Pesch}, \binits{D.}}:
\batitle{Industry 4.0 and beyond: The role of 5g, wifi 7, and time-sensitive networking (tsn) in enabling smart manufacturing}.
\bjtitle{Future Internet}
\bvolume{16}(\bissue{9}),
\bfpage{345}
(\byear{2024})
\end{barticle}
\endbibitem

\bibitem{electronics9122131}
\begin{botherref}
\oauthor{\bsnm{Li}, \binits{Q.}},
\oauthor{\bsnm{Li}, \binits{D.}},
\oauthor{\bsnm{Jin}, \binits{X.}},
\oauthor{\bsnm{Wang}, \binits{Q.}},
\oauthor{\bsnm{Zeng}, \binits{P.}}:
A simple and efficient time-sensitive networking traffic scheduling method for industrial scenarios.
Electronics
\textbf{9}(12)
(2020).
\doiurl{10.3390/electronics9122131}
\end{botherref}
\endbibitem

\bibitem{bhattacharjee2020time}
\begin{bchapter}
\bauthor{\bsnm{Bhattacharjee}, \binits{S.}},
\bauthor{\bsnm{Schmidt}, \binits{R.}},
\bauthor{\bsnm{Katsalis}, \binits{K.}},
\bauthor{\bsnm{Chang}, \binits{C.-Y.}},
\bauthor{\bsnm{Bauschert}, \binits{T.}},
\bauthor{\bsnm{Nikaein}, \binits{N.}}:
\bctitle{Time-sensitive networking for {5G} fronthaul networks}.
In: \bbtitle{ICC 2020-2020 IEEE International Conference on Communications (ICC)},
pp. \bfpage{1}--\blpage{7}
(\byear{2020}).
\bcomment{IEEE}
\end{bchapter}
\endbibitem

\bibitem{Marc2201:Experimental}
\begin{bchapter}
\bauthor{\bsnm{Guiraud}, \binits{M.}},
\bauthor{\bsnm{Leclerc}, \binits{B.}},
\bauthor{\bsnm{Marce}, \binits{O.}}:
\bctitle{An experimental platform for hard {TSN}}.
In: \bbtitle{2022 IEEE 19th Annual Consumer Communications \& Networking Conference (CCNC) (CCNC 2022)}
(\byear{2022})
\end{bchapter}
\endbibitem

\bibitem{howe2005time}
\begin{botherref}
\oauthor{\bsnm{Howe}, \binits{W.}}:
Time-scheduled and time-reservation packet switching.
Google Patents.
US Patent App. 10/947,487
(2005).
\url{https://www.google.com.gt/patents/US20050058149}
\end{botherref}
\endbibitem

\bibitem{leclerc2016transmission}
\begin{botherref}
\oauthor{\bsnm{Leclerc}, \binits{B.}},
\oauthor{\bsnm{Marcé}, \binits{O.}}:
Transmission of coherent data flow within packet-switched network.
Google Patents.
EP Patent App. EP20,140,307,006
(2016).
\url{https://www.google.com.gt/patents/EP3032781A1?cl=en}
\end{botherref}
\endbibitem

\bibitem{guiraud2021deterministic}
\begin{botherref}
\oauthor{\bsnm{Guiraud}, \binits{M.}}:
{Deterministic scheduling of periodic datagrams for low latency in {5G} and beyond}.
PhD thesis,
{Universit{\'e} Paris-Saclay}
(June 2021).
\url{https://tel.archives-ouvertes.fr/tel-03413419}
\end{botherref}
\endbibitem

\bibitem{metricsietf}
\begin{botherref}
{ALTO} Performance Cost Metrics draft-ietf-alto-performance-metrics-12.
\url{https://tools.ietf.org/html/draft-ietf-alto-performance-metrics-12}.
Accessed: 2021-12-15
\end{botherref}
\endbibitem

\bibitem{exprforw}
\begin{botherref}
Understanding {CISCO} Express Forwarding ({CEF}).
\url{https://www.cisco.com/c/en/us/support/docs/routers/12000-series-routers/47321-ciscoef.html}.
Accessed: 2021-12-15
\end{botherref}
\endbibitem

\bibitem{kai2020amplify}
\begin{barticle}
\bauthor{\bsnm{Kai}, \binits{C.}},
\bauthor{\bsnm{Yi}, \binits{Y.}},
\bauthor{\bsnm{Peng}, \binits{M.}},
\bauthor{\bsnm{Huang}, \binits{W.}}:
\batitle{{An Amplify-and-Forward Full-Duplex Cooperative Relay Scheme for Low-Latency Downlink Transmission in {CRAN}}}.
\bjtitle{IEEE Communications Letters}
\bvolume{25}(\bissue{4}),
\bfpage{1259}--\blpage{1263}
(\byear{2020})
\end{barticle}
\endbibitem

\bibitem{tayq2017real}
\begin{bchapter}
\bauthor{\bsnm{Tayq}, \binits{Z.}},
\bauthor{\bsnm{Neto}, \binits{L.A.}},
\bauthor{\bsnm{Le~Guyader}, \binits{B.}},
\bauthor{\bsnm{De~Lannoy}, \binits{A.}},
\bauthor{\bsnm{Chouaref}, \binits{M.}},
\bauthor{\bsnm{Aupetit-Berthelemot}, \binits{C.}},
\bauthor{\bsnm{Anjanappa}, \binits{M.N.}},
\bauthor{\bsnm{Nguyen}, \binits{S.}},
\bauthor{\bsnm{Chowdhury}, \binits{K.}},
\bauthor{\bsnm{Chanclou}, \binits{P.}}:
\bctitle{Real time demonstration of the transport of ethernet fronthaul based on {VRAN} in optical access networks}.
In: \bbtitle{Optical Fiber Communications Conference and Exhibition (OFC), 2017},
pp. \bfpage{1}--\blpage{3}
(\byear{2017}).
\bcomment{IEEE}
\end{bchapter}
\endbibitem

\bibitem{DBLP:conf/ondm/BarthGS19}
\begin{bchapter}
\bauthor{\bsnm{Barth}, \binits{D.}},
\bauthor{\bsnm{Guiraud}, \binits{M.}},
\bauthor{\bsnm{Strozecki}, \binits{Y.}}:
\bctitle{Deterministic contention management for low latency cloud {RAN} over an optical ring}.
In: \bbtitle{Optical Network Design and Modeling - 23rd {IFIP} {WG} 6.10 International Conference, {ONDM} 2019}.
\bsertitle{Lecture Notes in Computer Science},
vol. \bseriesno{11616},
pp. \bfpage{479}--\blpage{491}.
\bpublisher{Springer},
\blocation{Athens, Greece}
(\byear{2019})
\end{bchapter}
\endbibitem

\bibitem{rommel2020towards}
\begin{barticle}
\bauthor{\bsnm{Rommel}, \binits{S.}},
\bauthor{\bsnm{Dodane}, \binits{D.}},
\bauthor{\bsnm{Grivas}, \binits{E.}},
\bauthor{\bsnm{Cimoli}, \binits{B.}},
\bauthor{\bsnm{Bourderionnet}, \binits{J.}},
\bauthor{\bsnm{Feugnet}, \binits{G.}},
\bauthor{\bsnm{Morales}, \binits{A.}},
\bauthor{\bsnm{Pikasis}, \binits{E.}},
\bauthor{\bsnm{Roeloffzen}, \binits{C.}},
\bauthor{\bparticle{van} \bsnm{Dijk}, \binits{P.}}, \betal:
\batitle{Towards a scaleable {5G} fronthaul: Analog radio-over-fiber and space division multiplexing}.
\bjtitle{Journal of Lightwave Technology}
\bvolume{38}(\bissue{19}),
\bfpage{5412}--\blpage{5422}
(\byear{2020})
\end{barticle}
\endbibitem

\bibitem{luu2021dynamic}
\begin{botherref}
\oauthor{\bsnm{Luu}, \binits{Q.T.}}:
Dynamic control and optimization of wireless virtual networks.
PhD thesis,
Universit{\'e} Paris-Saclay
(2021)
\end{botherref}
\endbibitem

\bibitem{ieee1914}
\begin{botherref}
{IEEE} 1914 Next Generation Fronthaul Interface ({NGFI}) Working Group.
\url{https://www.ieee802.org/1/files/public/docs2018/liaison-IEEE_1914-CategoryA+-1118.pdf}.
2018-09-20
\end{botherref}
\endbibitem

\bibitem{gomes2015fronthaul}
\begin{barticle}
\bauthor{\bsnm{Gomes}, \binits{N.J.}},
\bauthor{\bsnm{Chanclou}, \binits{P.}},
\bauthor{\bsnm{Turnbull}, \binits{P.}},
\bauthor{\bsnm{Magee}, \binits{A.}},
\bauthor{\bsnm{Jungnickel}, \binits{V.}}:
\batitle{Fronthaul evolution: {F}rom {CPRI} to ethernet}.
\bjtitle{Optical Fiber Technology}
\bvolume{26},
\bfpage{50}--\blpage{58}
(\byear{2015})
\end{barticle}
\endbibitem

\bibitem{finn-detnet-architecture-08}
\begin{botherref}
\oauthor{\bsnm{Finn}, \binits{N.}},
\oauthor{\bsnm{Thubert}, \binits{P.}}:
{Deterministic Networking Architecture}.
Internet-Draft draft-finn-detnet-architecture-08,
Internet Engineering Task Force
(2016).
Work in Progress.
\url{https://tools.ietf.org/html/draft-finn-detnet-architecture-08}
\end{botherref}
\endbibitem

\bibitem{ieee802}
\begin{botherref}
Time-Sensitive Networking Task Group.
\url{http://www.ieee802.org/1/pages/tsn.html}.
Accessed: 2016-09-22
\end{botherref}
\endbibitem

\bibitem{8613095}
\begin{botherref}
{IEEE} standard for local and metropolitan area networks -- bridges and bridged networks - amendment 25: Enhancements for scheduled traffic.
IEEE Std 802.1Qbv-2015 (Amendment to IEEE Std 802.1Q-2014 as amended by IEEE Std 802.1Qca-2015, IEEE Std 802.1Qcd-2015, and IEEE Std 802.1Q-2014/Cor 1-2015),
1--57
(2016).
\doiurl{10.1109/IEEESTD.2016.8613095}
\end{botherref}
\endbibitem

\bibitem{durr2016no}
\begin{bchapter}
\bauthor{\bsnm{D{\"u}rr}, \binits{F.}},
\bauthor{\bsnm{Nayak}, \binits{N.G.}}:
\bctitle{No-wait packet scheduling for {IEEE} time-sensitive networks ({TSN})}.
In: \bbtitle{Proceedings of the 24th International Conference on Real-Time Networks and Systems},
pp. \bfpage{203}--\blpage{212}
(\byear{2016})
\end{bchapter}
\endbibitem

\bibitem{ni1993survey}
\begin{barticle}
\bauthor{\bsnm{Ni}, \binits{L.M.}},
\bauthor{\bsnm{McKinley}, \binits{P.K.}}:
\batitle{A survey of wormhole routing techniques in direct networks}.
\bjtitle{Computer}
\bvolume{26}(\bissue{2}),
\bfpage{62}--\blpage{76}
(\byear{1993})
\end{barticle}
\endbibitem

\bibitem{cole1996benefit}
\begin{barticle}
\bauthor{\bsnm{Cole}, \binits{R.J.}},
\bauthor{\bsnm{Maggs}, \binits{B.M.}},
\bauthor{\bsnm{Sitaraman}, \binits{R.K.}}:
\batitle{On the benefit of supporting virtual channels in wormhole routers}.
\bjtitle{Journal of Computer and System Sciences}
\bvolume{62}(\bissue{1}),
\bfpage{152}--\blpage{177}
(\byear{2001}).
\doiurl{10.1006/jcss.2000.1701}
\end{barticle}
\endbibitem

\bibitem{tindell1992store}
\begin{botherref}
\oauthor{\bsnm{Tindell}, \binits{E.G.}},
\oauthor{\bsnm{Crawford}, \binits{K.}}:
Store and forward video system.
Google Patents.
US Patent 5,130,792
(1992)
\end{botherref}
\endbibitem

\bibitem{borndorfer1998frequency}
\begin{barticle}
\bauthor{\bsnm{Bornd{\"o}rfer}, \binits{R.}},
\bauthor{\bsnm{Eisenbl{\"a}tter}, \binits{A.}},
\bauthor{\bsnm{Gr{\"o}tschel}, \binits{M.}},
\bauthor{\bsnm{Martin}, \binits{A.}}:
\batitle{Frequency assignment in cellular phone networks}.
\bjtitle{Annals of Operations Research}
\bvolume{76},
\bfpage{73}--\blpage{93}
(\byear{1998})
\end{barticle}
\endbibitem

\bibitem{erlebach2001complexity}
\begin{barticle}
\bauthor{\bsnm{Erlebach}, \binits{T.}},
\bauthor{\bsnm{Jansen}, \binits{K.}}:
\batitle{The complexity of path coloring and call scheduling}.
\bjtitle{Theoretical Computer Science}
\bvolume{255}(\bissue{1}),
\bfpage{33}--\blpage{50}
(\byear{2001})
\end{barticle}
\endbibitem

\bibitem{ZHU2001371}
\begin{barticle}
\bauthor{\bsnm{Zhu}, \binits{X.}}:
\batitle{Circular chromatic number: {A} survey}.
\bjtitle{Discrete Mathematics}
\bvolume{229}(\bissue{1}),
\bfpage{371}--\blpage{410}
(\byear{2001}).
\doiurl{10.1016/S0012-365X(00)00217-X}
\end{barticle}
\endbibitem

\bibitem{zhou2013multiple}
\begin{barticle}
\bauthor{\bsnm{Zhou}, \binits{B.}}:
\batitle{Multiple circular colouring as a model for scheduling}.
\bjtitle{Open Journal of Discrete Mathematics}
\bvolume{3}(\bissue{3}),
\bfpage{162}--\blpage{166}
(\byear{2013}).
\doiurl{10.4236/ojdm.2013.33029.}
\end{barticle}
\endbibitem

\bibitem{yu2004minimizing}
\begin{barticle}
\bauthor{\bsnm{Yu}, \binits{W.}},
\bauthor{\bsnm{Hoogeveen}, \binits{H.}},
\bauthor{\bsnm{Lenstra}, \binits{J.K.}}:
\batitle{Minimizing makespan in a two-machine flow shop with delays and unit-time operations is {NP}-hard}.
\bjtitle{Journal of Scheduling}
\bvolume{7}(\bissue{5}),
\bfpage{333}--\blpage{348}
(\byear{2004})
\end{barticle}
\endbibitem

\bibitem{korst1991periodic}
\begin{bchapter}
\bauthor{\bsnm{Korst}, \binits{J.}},
\bauthor{\bsnm{Aarts}, \binits{E.}},
\bauthor{\bsnm{Lenstra}, \binits{J.K.}},
\bauthor{\bsnm{Wessels}, \binits{J.}}:
\bctitle{Periodic multiprocessor scheduling}.
In: \bbtitle{PARLE’91 Parallel Architectures and Languages Europe},
pp. \bfpage{166}--\blpage{178}
(\byear{1991}).
\bcomment{Springer}
\end{bchapter}
\endbibitem

\bibitem{hanen1993cyclic}
\begin{bbook}
\bauthor{\bsnm{Hanen}, \binits{C.}},
\bauthor{\bsnm{Munier}, \binits{A.}}:
\bbtitle{Cyclic Scheduling on Parallel Processors: {A}n Overview},
(\byear{1993})
\end{bbook}
\endbibitem

\bibitem{levner2010complexity}
\begin{barticle}
\bauthor{\bsnm{Levner}, \binits{E.}},
\bauthor{\bsnm{Kats}, \binits{V.}},
\bauthor{\bparticle{de} \bsnm{Pablo}, \binits{D.A.L.}},
\bauthor{\bsnm{Cheng}, \binits{T.E.}}:
\batitle{Complexity of cyclic scheduling problems: {A} state-of-the-art survey}.
\bjtitle{Computers \& Industrial Engineering}
\bvolume{59}(\bissue{2}),
\bfpage{352}--\blpage{361}
(\byear{2010})
\end{barticle}
\endbibitem

\bibitem{lusby2011railway}
\begin{barticle}
\bauthor{\bsnm{Lusby}, \binits{R.M.}},
\bauthor{\bsnm{Larsen}, \binits{J.}},
\bauthor{\bsnm{Ehrgott}, \binits{M.}},
\bauthor{\bsnm{Ryan}, \binits{D.}}:
\batitle{Railway track allocation: {M}odels and methods}.
\bjtitle{OR Spectrum}
\bvolume{33}(\bissue{4}),
\bfpage{843}--\blpage{883}
(\byear{2011})
\end{barticle}
\endbibitem

\bibitem{serafini1989mathematical}
\begin{barticle}
\bauthor{\bsnm{Serafini}, \binits{P.}},
\bauthor{\bsnm{Ukovich}, \binits{W.}}:
\batitle{A mathematical model for periodic scheduling problems}.
\bjtitle{SIAM Journal on Discrete Mathematics}
\bvolume{2}(\bissue{4}),
\bfpage{550}--\blpage{581}
(\byear{1989})
\end{barticle}
\endbibitem

\bibitem{9472838}
\begin{bchapter}
\bauthor{\bsnm{Bhatia}, \binits{R.}},
\bauthor{\bsnm{Lakshman}, \binits{T.V.}},
\bauthor{\bsnm{Özkoç}, \binits{M.F.}},
\bauthor{\bsnm{Panwar}, \binits{S.}}:
\bctitle{{FlowToss}: {F}ast wait-free scheduling of deterministic flows in time synchronized networks}.
In: \bbtitle{2021 IFIP Networking Conference (IFIP Networking)},
pp. \bfpage{1}--\blpage{6}
(\byear{2021}).
\doiurl{10.23919/IFIPNetworking52078.2021.9472838}
\end{bchapter}
\endbibitem

\bibitem{nayak2017incremental}
\begin{barticle}
\bauthor{\bsnm{Nayak}, \binits{N.G.}},
\bauthor{\bsnm{D{\"u}rr}, \binits{F.}},
\bauthor{\bsnm{Rothermel}, \binits{K.}}:
\batitle{Incremental flow scheduling and routing in time-sensitive software-defined networks}.
\bjtitle{IEEE Transactions on Industrial Informatics}
\bvolume{14}(\bissue{5}),
\bfpage{2066}--\blpage{2075}
(\byear{2017})
\end{barticle}
\endbibitem

\bibitem{steiner2018traffic}
\begin{barticle}
\bauthor{\bsnm{Steiner}, \binits{W.}},
\bauthor{\bsnm{Craciunas}, \binits{S.S.}},
\bauthor{\bsnm{Oliver}, \binits{R.S.}}:
\batitle{Traffic planning for time-sensitive communication}.
\bjtitle{IEEE Communications Standards Magazine}
\bvolume{2}(\bissue{2}),
\bfpage{42}--\blpage{47}
(\byear{2018})
\end{barticle}
\endbibitem

\bibitem{silviu2017}
\begin{botherref}
\oauthor{\bsnm{Craciunas}, \binits{S.S.}},
\oauthor{\bsnm{Oliver}, \binits{R.S.}},
\oauthor{\bsnm{Steiner}, \binits{W.}}:
Formal scheduling constraints for time-sensitive networks.
arXiv preprint arXiv:1712.02246
(2017)
\end{botherref}
\endbibitem

\bibitem{naresh2016}
\begin{bchapter}
\bauthor{\bsnm{Nayak}, \binits{N.G.}},
\bauthor{\bsnm{D{\"u}rr}, \binits{F.}},
\bauthor{\bsnm{Rothermel}, \binits{K.}}:
\bctitle{{Time-Sensitive Software-Defined Network (TSSDN)} for real-time applications}.
In: \bbtitle{Proceedings of the 24th International Conference on Real-Time Networks and Systems},
pp. \bfpage{193}--\blpage{202}
(\byear{2016})
\end{bchapter}
\endbibitem

\bibitem{dos2019tsnsched}
\begin{bchapter}
\bauthor{\bparticle{dos} \bsnm{Santos}, \binits{A.C.T.}},
\bauthor{\bsnm{Schneider}, \binits{B.}},
\bauthor{\bsnm{Nigam}, \binits{V.}}:
\bctitle{{TSNSCHED}: Automated schedule generation for time sensitive networking}.
In: \bbtitle{2019 Formal Methods in Computer Aided Design (FMCAD)},
pp. \bfpage{69}--\blpage{77}
(\byear{2019}).
\bcomment{IEEE}
\end{bchapter}
\endbibitem

\bibitem{masoudi2020cost}
\begin{barticle}
\bauthor{\bsnm{Masoudi}, \binits{M.}},
\bauthor{\bsnm{Lisi}, \binits{S.S.}},
\bauthor{\bsnm{Cavdar}, \binits{C.}}:
\batitle{Cost-effective migration toward virtualized {C-RAN} with scalable fronthaul design}.
\bjtitle{IEEE Systems Journal}
\bvolume{14}(\bissue{4}),
\bfpage{5100}--\blpage{5110}
(\byear{2020})
\end{barticle}
\endbibitem

\bibitem{ieee_8023}
\begin{botherref}
{IEEE 802.3 10Gb/s on FDDI-grade MM fiber S}tudy Group.
\url{http://www.ieee802.org/3/10GMMFSG/}.
Accessed: 2021-12-15
\end{botherref}
\endbibitem

\bibitem{de2011treewidth}
\begin{bchapter}
\bauthor{\bsnm{De~Montgolfier}, \binits{F.}},
\bauthor{\bsnm{Soto}, \binits{M.}},
\bauthor{\bsnm{Viennot}, \binits{L.}}:
\bctitle{Treewidth and hyperbolicity of the internet}.
In: \bbtitle{Network Computing and Applications (NCA), 10th IEEE International Symposium On},
pp. \bfpage{25}--\blpage{32}
(\byear{2011}).
\bcomment{IEEE}
\end{bchapter}
\endbibitem

\bibitem{holyer1981np}
\begin{barticle}
\bauthor{\bsnm{Holyer}, \binits{I.}}:
\batitle{The {NP}-completeness of edge-coloring}.
\bjtitle{SIAM Journal on Computing}
\bvolume{10}(\bissue{4}),
\bfpage{718}--\blpage{720}
(\byear{1981})
\end{barticle}
\endbibitem

\bibitem{zuckerman2006linear}
\begin{bchapter}
\bauthor{\bsnm{Zuckerman}, \binits{D.}}:
\bctitle{Linear degree extractors and the inapproximability of max clique and chromatic number}.
In: \bbtitle{Proceedings of the {T}hirty-{E}ighth Annual {ACM} Symposium on {T}heory of {C}omputing},
pp. \bfpage{681}--\blpage{690}
(\byear{2006}).
\bcomment{ACM}
\end{bchapter}
\endbibitem

\bibitem{orman1997complexity}
\begin{barticle}
\bauthor{\bsnm{Orman}, \binits{A.J.}},
\bauthor{\bsnm{Potts}, \binits{C.N.}}:
\batitle{On the complexity of coupled-task scheduling}.
\bjtitle{Discrete Applied Mathematics}
\bvolume{72}(\bissue{1-2}),
\bfpage{141}--\blpage{154}
(\byear{1997})
\end{barticle}
\endbibitem

\bibitem{guiraud2024scheduling}
\begin{botherref}
\oauthor{\bsnm{Guiraud}, \binits{M.}},
\oauthor{\bsnm{Strozecki}, \binits{Y.}}:
Scheduling periodic messages on a shared link without buffering.
Journal of Scheduling,
1--24
(2024)
\end{botherref}
\endbibitem

\bibitem{downey2012parameterized}
\begin{bbook}
\bauthor{\bsnm{Downey}, \binits{R.G.}},
\bauthor{\bsnm{Fellows}, \binits{M.R.}}:
\bbtitle{Parameterized {C}omplexity}.
\bpublisher{Springer},
\blocation{Berlin}
(\byear{2012})
\end{bbook}
\endbibitem

\bibitem{wang2017cloud}
\begin{bchapter}
\bauthor{\bsnm{Wang}, \binits{H.}},
\bauthor{\bsnm{Hossain}, \binits{M.A.}},
\bauthor{\bsnm{Cavdar}, \binits{C.}}:
\bctitle{Cloud {RAN} architectures with optical and mm-wave transport technologies}.
In: \bbtitle{2017 19th International Conference on Transparent Optical Networks (ICTON)},
pp. \bfpage{1}--\blpage{4}
(\byear{2017}).
\bcomment{IEEE}
\end{bchapter}
\endbibitem

\bibitem{webpage}
\begin{botherref}
Ma{\"e}l {G}uiraud's website.
\url{https://mael-guiraud.github.io/}
\end{botherref}
\endbibitem

\bibitem{simons1978fast}
\begin{bchapter}
\bauthor{\bsnm{Simons}, \binits{B.}}:
\bctitle{A fast algorithm for single processor scheduling}.
In: \bbtitle{19th {Annual Symposium on Foundations of Computer Science (SFCS 1978)}},
pp. \bfpage{246}--\blpage{252}
(\byear{1978}).
\doiurl{10.1109/SFCS.1978.4}
\end{bchapter}
\endbibitem

\bibitem{lenstra1977complexity}
\begin{barticle}
\bauthor{\bsnm{Lenstra}, \binits{J.K.}},
\bauthor{\bsnm{Rinnooy~Kan}, \binits{A.}},
\bauthor{\bsnm{Brucker}, \binits{P.}}:
\batitle{Complexity of machine scheduling problems}.
\bjtitle{Annals of Discrete Mathematics}
\bvolume{1},
\bfpage{343}--\blpage{362}
(\byear{1977})
\end{barticle}
\endbibitem

\bibitem{carlier1979probleme}
\begin{botherref}
\oauthor{\bsnm{Carlier}, \binits{J.}}:
Probl{\`e}me {\`a} une machine dans le cas o{\`u} les t{\^a}ches ont des dur{\'e}es {\'e}gales.
Technical report,
Institut de Programmation, Universit{\'e} de Paris VI, Paris, France
(1979)
\end{botherref}
\endbibitem

\bibitem{garey1981scheduling}
\begin{barticle}
\bauthor{\bsnm{Garey}, \binits{M.R.}},
\bauthor{\bsnm{Johnson}, \binits{D.S.}},
\bauthor{\bsnm{Simons}, \binits{B.B.}},
\bauthor{\bsnm{Tarjan}, \binits{R.E.}}:
\batitle{Scheduling unit--time tasks with arbitrary release times and deadlines}.
\bjtitle{SIAM Journal on Computing}
\bvolume{10}(\bissue{2}),
\bfpage{256}--\blpage{269}
(\byear{1981})
\end{barticle}
\endbibitem

\bibitem{el2018performance}
\begin{bchapter}
\bauthor{\bsnm{El~Mahjoub}, \binits{Y.A.}},
\bauthor{\bsnm{Castel-Taleb}, \binits{H.}},
\bauthor{\bsnm{Fourneau}, \binits{J.-M.}}:
\bctitle{Performance and energy efficiency analysis in {NGREEN} optical network}.
In: \bbtitle{2018 14th International Conference on Wireless and Mobile Computing, Networking and Communications (WiMob)},
pp. \bfpage{1}--\blpage{9}
(\byear{2018}).
\bcomment{IEEE}
\end{bchapter}
\endbibitem

\bibitem{simons1989fast}
\begin{barticle}
\bauthor{\bsnm{Simons}, \binits{B.B.}},
\bauthor{\bsnm{Warmuth}, \binits{M.K.}}:
\batitle{A fast algorithm for multiprocessor scheduling of unit-length jobs}.
\bjtitle{SIAM Journal on Computing}
\bvolume{18}(\bissue{4}),
\bfpage{690}--\blpage{710}
(\byear{1989})
\end{barticle}
\endbibitem

\bibitem{datas}
\begin{botherref}
GitHub of the project.
\url{https://github.com/Mael-Guiraud/Latency/tree/master/Code%20Article/datas}
\end{botherref}
\endbibitem

\end{thebibliography}

\end{document}